\documentclass[11pt]{article}
\usepackage[dvipsnames,svgnames,x11names]{xcolor}
\usepackage{amsmath}
\usepackage{xspace}
\usepackage{float}
\usepackage[T1]{fontenc}
\usepackage{mdframed}
\usepackage{mathtools}
\usepackage{enumitem}
\usepackage{authblk}

\usepackage{fullpage}
\usepackage{thmtools}
\usepackage[colorlinks]{hyperref}
\hypersetup{colorlinks={true},linkcolor={blue},citecolor=magenta}

\usepackage{amssymb,amsfonts,amsmath,amsthm,amscd,dsfont,mathrsfs}
\usepackage{thmtools}
\usepackage{complexity}
\usepackage{multirow}
\usepackage{mathpazo}
\usepackage{braket}
\usepackage{comment}
\usepackage{quantikz}
\usepackage{tikz}
\usetikzlibrary{arrows.meta}
\usetikzlibrary{positioning}
\usepackage[
backend=biber,
style=alphabetic,
sorting=ynt,
maxnames=99
]{biblatex}
\usepackage{url}
\usepackage{bbm}
\usepackage[capitalize,nameinlink]{cleveref}
\usepackage{hyperref} 
\usepackage[margin=1in]{geometry}
\hypersetup{breaklinks=true}

\AddToHook{cmd/appendix/before}{\crefalias{section}{appendix}}

\theoremstyle{plain}
\newtheorem{theorem}{Theorem}[section]
\newtheorem{lemma}[theorem]{Lemma}
\newtheorem{corollary}[theorem]{Corollary}
\newtheorem{definition}[theorem]{Definition}
\newtheorem{remark}[theorem]{Remark}

\newtheorem*{theorem*}{Theorem}
\newtheorem*{definition*}{Definition}

\newmdtheoremenv[backgroundcolor=gray!10,
                 linewidth=0pt,
                 innerleftmargin=4pt,
                 innerrightmargin=4pt,
                 innertopmargin=1pt,
                 innerbottommargin=4pt,
            splitbottomskip=4pt]{problem}[prob]{Problem}

\newmdtheoremenv[backgroundcolor=gray!10,
                 linewidth=0pt,
                 innerleftmargin=4pt,
                 innerrightmargin=4pt,
                 innertopmargin=1pt,
                 innerbottommargin=5.5pt,
            splitbottomskip=4pt]{conjecture}[conj]{Conjecture}

\newcommand{\abs}[1]{\left| #1 \right|}

\mathchardef\mhyphen="2D

\newcommand{\Id}{\mathbb{1}}
\newcommand{\cols}{\{\mathcolor{red}{r},\mathcolor{green!60!black}{g}, \mathcolor{blue}{b}\}}

\newcommand{\CC}{\mathsf{C}}
\newcommand{\bfB}{\mathbf{B}}

\newcommand{\bfA}{\mathbf{A}}

\renewcommand{\proj}[1]{\ensuremath{|#1\rangle \langle #1|}}
\newcommand{\norm}[1]{\left\Vert {#1} \right\Vert}

\newcommand{\bit}{\{0,1\}}

\newcommand{\ind}{\mathds{1}}
\newcommand{\algo}{\mathcal}

\newcommand{\spn}[1]{\mathsf{span}\{#1\}}
\newcommand{\dom}{\mathsf{Dom}}
\newcommand{\im}{\mathsf{Im}}

\newcommand{\cpO}{\mathsf{cpO}}
\newcommand{\cwO}{\mathsf{cwO}}
\newcommand{\ciO}{\mathsf{ciO}}
\newcommand{\wO}{\mathsf{wO}}

\renewcommand{\cc}{\mathsf{C}}
\newcommand{\ff}{\mathsf{F}}

\newcommand{\wC}{\mathsf{wC}}
\newcommand{\wP}{\mathsf{wP}}
\newcommand{\pp}{\mathsf{P}}

\begin{document}

\title{A quantum lower bound for path finding in welded trees}
\author{Joseph Carolan}
\author{Andrew M.~Childs}
\author{Matthew Coudron}
\author{Amin Shiraz Gilani}
\affil{Department of Computer Science, Institute for Advanced Computer Studies, and \protect\\ Joint Center for Quantum Information and Computer Science, University of Maryland}
\date{}

\maketitle

\begin{abstract}
In the welded tree problem, an algorithm is tasked with navigating a graph formed from two binary trees joined at the leaves through a ``weld'' of connecting edges. A quantum walk can navigate from root to root exponentially faster than any classical algorithm. However, known efficient quantum algorithms cannot find a path between the roots, as recording the path destroys constructive interference and thus the speedup.
We prove that this is inherent: any quantum algorithm needs exponentially many queries to find a path between the roots of an independently matched welded tree graph. This provides an example of a problem that a quantum computer can solve exponentially faster than any classical algorithm by exploring exponentially many paths in superposition, but where it is provably intractable to find any such path.
The proof uses compressed permutation oracles to record the progress of a quantum algorithm as it queries the graph. We show that the compressed database remains path-free up to a small error. By controlling such errors and bounding the progress of the algorithm with each compressed oracle query, we show that $\Omega(2^{n/12})$ queries are required to find a path in a height-$n$ tree with constant success probability.
\end{abstract}

\section{Introduction}

Quantum computers can solve certain problems dramatically faster than classical ones, but the extent of this advantage remains poorly understood.
Notable families of quantum algorithms with superpolynomial speedup include ones based on finding periodic structures using Fourier transforms \cite{Shor_1997} and simulating the dynamics of quantum systems \cite{Llo96}.
However, there is both a shortage of techniques to develop qualitatively different types of quantum speedup, and an incomplete understanding of the limitations of quantum algorithms.

The \emph{welded tree problem} provides an example of a very different kind of exponential quantum speedup from those mentioned above. In this problem, one is given access to an adjacency-list oracle for a graph formed by joining the leaves of two height-$n$ balanced binary trees with a random cycle that alternates between the leaves of the two trees (the ``weld''). Given the label of the root of one of the trees (the ``entrance''), a quantum walk on this graph can find the root of the other tree (the ``exit'') in time polynomial in $n$ by traversing the graph in superposition, whereas a classical algorithm requires exponentially many queries \cite{childs03weld}.

In addition to providing a novel type of quantum speedup, the welded tree problem has yielded  insight into a variety of other questions about the power of quantum computation. For example, it was used to show that there are quantum speedups that provably cannot be achieved with only logarithmic quantum depth \cite{CoudronM19}. Furthermore, the welded tree construction has been a useful building block for other results in quantum complexity theory, on topics including the power of adiabatic optimization \cite{GV20,LWWZ25}, the quantum query complexity of graph property testing \cite{BCGKPW20}, and the possibility of exponential quantum speedups for machine learning tasks \cite{GL24}, among others.

The welded tree problem also suggests the possibility of a significant difference between the difficulty of detecting the presence of a path and actually finding such a path. Even though there have been settings where such a conditional polynomial separation is known \cite{CGP26}, an unconditional exponential separation remains elusive. While the algorithm of \cite{childs03weld} is capable of finding its way through the graph, it does this by traversing many paths in superposition and does not output any one such path. Indeed, if the algorithm were modified to store information about the path taken, this would destroy quantum interference and make the algorithm effectively classical, causing it to fail.
This phenomenon can be seen as a kind of computational analog of the double-slit experiment.
The question of whether any quantum algorithm can efficiently find a path is a longstanding open problem, alluded to in the original paper \cite{childs03weld} and more recently mentioned in a list of significant open problems in quantum query complexity \cite{Aaronson21}.
While variants of this problem have been constructed in which there remains an exponential quantum advantage for detecting a path and there \emph{is} an efficient quantum algorithm for finding a path \cite{Li23,LZ23,LT24}, the intriguing possibility of an exponential separation remains open.

Prior work made partial progress on this question by showing that there is no efficient quantum query algorithm for the welded tree path-finding problem that is both \emph{rooted} and \emph{genuine} \cite{CCG22}. Informally, a genuine algorithm is one that only provides meaningful vertex labels
as inputs to the welded tree oracle, and a rooted algorithm is one that always maintains a path from the entrance to every vertex appearing in its state. These are both reasonable conditions for an algorithm to satisfy, and this result rules out some natural candidates for path-finding algorithms, including natural instantiations of the only explicit proposal for a path-finding algorithm, the so-called snake walk analyzed by Rosmanis \cite{Rosmanis11}, in which the state of the algorithm is a polynomially long path in the graph. However, algorithms can potentially have both non-rooted and non-genuine behavior, so this work leaves the general question open.

We resolve this question (for the closely related setting of independently matched welds described below) by showing that a quantum algorithm requires exponentially many queries to find a path from the entrance to the exit of a welded tree graph. Our proof uses the compressed permutation oracle of \cite{carolan2025compressedpermutationoracles} (introduced in \cref{sec:compressed-perm}), which provides a method for showing quantum lower bounds when querying an oracle that implements a random permutation. This method has  been applied to show lower bounds for cryptographic constructions involving random permutations that resisted prior analysis \cite{cpz24precomp,carolan24oneway,MMW24,CHS19}.
Conceptually, we use the compressed permutation as a building block for our oracle, by implementing the oracle for the welded tree problem using permutations that describe a random weld.

In our analysis, we construct the weld as a union of three random matchings between the leaves of the binary trees. This differs from the weld specified in \cite{childs03weld}, which considers a random cycle. However, that choice was somewhat arbitrary, and the choice considered here also defines a problem with an exponential quantum speedup for finding the exit. It is straightforward to adapt our argument to show hardness for the original distribution of welded tree graphs, though we leave the details for future work.

\subsection*{Technical overview}

Our proof tracks what information a quantum algorithm acquires about the
weld using compressed databases for permutations that describe the weld.
To do so, we identify a family of database states that is approximately
preserved by queries. This replaces assumptions about the algorithm's behavior, as in prior work~\cite{CCG22},
with an invariant of the purified oracle. In particular, we do
not require the algorithm to retain paths to the vertices it uses, or to
query only labels it has previously obtained, instead allowing general quantum algorithms.

After formally introducing our notation for welded trees in \cref{sec:welded-tree-defs}, we construct a \emph{compressed welded tree oracle} in \cref{subsec:compressed-welded}. The randomness of
the input welded tree is specified by independent colored matchings between eligible
leaves,
as well as an independent injection assigning vertex labels from a much
larger set.
The compressed permutation oracle then represents this randomness
by a coherent database of partial permutations.
To achieve this, we extend the compressed permutation oracle technique to expanding injections in \cref{subsec:expand}: because valid labels are
sparse, an inverse labeling query can be replaced, with small error, by a
lookup in the existing database. This controls attempts to guess previously
unseen labels. Combining these ingredients
gives a faithful simulation of graph queries in terms of the compressed permutation oracle. We call this whole operator the \emph{compressed welded oracle}, $\cwO$.

\cref{sec:partial-trees} introduces the subspace that let us track the progress of an algorithm. In particular,
to describe the relevant database states, we use \emph{color words}:
sequences of edge colors specifying nonbacktracking walks from either root. Intuitively, color words capture the knowledge of efficient quantum algorithms: arbitrary polynomial-size trees of labeled vertices rooted at the left and right roots (i.e., the entrance and the exit), which never meet one another. Note that, in contrast with classical algorithms, not every vertex in these trees needs to be labeled, or in other words the algorithm can forget the explicit path it took to reach a given vertex.
A collection of words, with labels assigned to selected endpoints, describes
a weighted superposition of all minimal databases realizing those words, subject to certain constraints preventing the two sides from meeting.
We analyze the linear span of these states to establish properties of states reachable by efficient quantum algorithms.

To describe the explicit conditions preventing the left-rooted and right-rooted color words from meeting, we introduce additional terminology. The structure of the welded tree graph is organized by partitioning each height-$n$ tree into
\emph{blocks} rooted halfway down. Writing $N=2^n$, each tree has $\sqrt N$
blocks, each with $\sqrt N$ leaves. Valid realizations of color words avoid two kinds of
\emph{block violation}: a new block visit colliding with an earlier visit,
and a path leaving a block through its top after arriving through
the weld. Intuitively, a fresh weld arrival is unlikely to hit an occupied
block. Likewise, following a prescribed color sequence upward to a block
root is unlikely, since for most choices of the weld this color sequence would stay trapped near the weld.

The key lemma bounding the effect of block violations and showing preservation of the color-word subspace is the \emph{clipping lemma}, \cref{lem:clipped-subspace} in \cref{subsec:clipping-placements}. Suppose the algorithm makes a query asking for the $c$-colored neighbor of vertex $x$. For an ordinary color-word state this query may, with non-zero probability, lead to a block violation: imagine that $x$ is a block root reached through the weld, and $c$ is its neighbor furthest from the weld. The lemma evaluates the effect of \emph{clipping} color words, meaning that configurations where the queried extension leads to a violation are removed. By proving that these spaces are close, we show that the extension is unlikely to leave the color-word subspace.

The appropriate measure of information growth in this analysis is the size of the so-called \emph{core} of the database,
rather than the size of the entire compressed permutation database. In each occupied component,
the core is the minimal tree connecting recorded labels and endpoints of
recorded weld edges. It includes unlabeled connecting vertices, and labels 
outside the blocks
do not contribute
to it. The structure of this core forest is determined
by the compressed databases, so different core types give orthogonal sectors. 

However, different color-word states may share the same core and nontrivially overlap. This non-orthogonality is one of the key technical challenges for the approach. In particular, there is a degeneracy between color-word states originating from the left vertex and from the right vertex. For example, a weld edge whose path has been forgotten can be viewed as either a left-rooted or a right-rooted state. This also allows us to describe what happens to the compressed oracle database when a quantum walk algorithm traverses the graph from entrance to exit: the color words from the left root are \emph{re-rooted} to the right root.

On the other hand, we show that the color-word subspace is approximately preserved under queries, and the core size grows by $O(n)$ per query, up to a small error. While this is straightforward to show for the basis states we select,
a central difficulty is turning these small probabilities
into bounds that hold for arbitrary superpositions. We address this through a representation by
\emph{random core placements}, a tool developed in 
\cref{subsec:clipping-placements} and \cref{sec:appendix}. Within a fixed sector, cores are initially
placed independently, and a color-word state is specified by fixing an \emph{anchor},
the position of one vertex in each core. This corresponds to a component fixed by a color word before being randomized by the weld. By orthogonally decomposing the span of these anchor states, we obtain control of linear combinations.
This representation allows proving the clipping lemma.

In \cref{sec:main_lower}, we use these tools to analyze queries. First, we
clip each color-word state by retaining only databases that admit a valid extension. In \cref{lem:subspace-preservation}, we calculate the action of the decompression and compression operators inside $\cwO$. Decompression is simply an extension of the color word, so because the color-word states are clipped, this operator leads to a color-word state exactly (up to a collision-type event when sampling weld edges). Compression undoes this operation, leading back to a clipped color word, which in turn is close to the actual color-word subspace.\footnote{There is a subtlety around when compression forgets the leaf through which a component was entered: in particular, a memory in the form of non-collision with this block is maintained even after forgetting the label. We handle this through techniques developed in \Cref{subsec:clipping-placements} that control block collisions.}

Together, these estimates control both decompression and recompression,
including the creation, erasure, and resampling of terminal weld edges.
Iterating the resulting subspace-preservation bound keeps a $q$-query
algorithm close to the span of color-word states of core size $O(nq)$, as we establish in \Cref{lem:reachable-colorword}. To complete
the lower bound, we use the compressed oracle fundamental lemma to show that recovering an explicit entrance-to-exit path requires such a path to appear in the database, which in turn is overwhelmingly unlikely in the color-word subspace. The errors are
polynomial in the query count and core size and inverse exponentially small in
the tree height, giving the exponential query lower bound.

\begin{theorem}[Informal version of \Cref{thm:path-finding}]
    Any quantum algorithm that outputs an entrance-to-exit path with bounded error in the height-$n$ welded tree problem requires $\Omega(2^{n/12})$ queries.
    \label{thm:main-informal}
\end{theorem}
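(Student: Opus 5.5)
The plan is to run the standard compressed-oracle argument. We purify the random welded tree, track the database state, and show that after $q$ queries the joint state is close to a subspace where no entrance-to-exit path is recorded.

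\textbf{Step 1: purify the oracle.} Replace the real adjacency-list oracle by the compressed welded oracle $\cwO$ of \cref{subsec:compressed-welded}. Its randomness consists of three independent colored matchings on the leaves, given by compressed permutation databases, together with an expanding labeling injection. Using the expanding-injection analysis of \cref{subsec:expand}, this simulation is faithful up to an error $\mathrm{poly}(q)\cdot 2^{-\Omega(n)}$ per query. The error comes from replacing inverse-label queries by database lookups, which is valid because labels are sparse.

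\textbf{Step 2: invariant and its preservation.} The invariant is the span of color-word states with core size $O(nq)$ after $q$ queries. These are left- and right-rooted trees of color words that never meet, with no block violations. Initially the database is empty, which is a color-word state of core size $0$.
\begin{itemize}
\item Per query, apply \cref{lem:subspace-preservation}. Decompression extends a clipped color word exactly, up to a collision term. Recompression returns a clipped color word, and by the clipping lemma \cref{lem:clipped-subspace} this is close to the unclipped color-word subspace.
\item The core grows by $O(n)$ per query, since one query adds at most a tree path of length $O(n)$ to the core.
\item The per-query error is roughly $\mathrm{poly}(nq)/\sqrt{N}$ or $\mathrm{poly}(nq)\cdot N^{-1/4}$, from block collisions among $\sqrt N$ blocks and from climbing out of a block through the weld.
\item Iterating with the triangle inequality (\cref{lem:reachable-colorword}) shows the final state is within $\epsilon_q \le q\cdot \mathrm{poly}(nq)\cdot 2^{-cn}$ of the color-word subspace of core size $O(nq)$.
\end{itemize}

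\textbf{Step 3: convert to a lower bound.} By the compressed-oracle fundamental lemma, the success probability of outputting a valid path, i.e.\ $n+1$ consecutive adjacency facts across the weld, is at most
\[
\Bigl(\sqrt{\Pr[\text{database contains such a path}]}+\sqrt{\text{(small term for unqueried edges)}}\Bigr)^2 .
\]
Unqueried edges must be guessed correctly, which is exponentially unlikely because labels are sparse and weld edges are random. On the color-word subspace, a recorded entrance-to-exit path forces a left-rooted word and a right-rooted word to meet. That is excluded by construction, except through the re-rooting degeneracy, and each such event is again a block-violation event of probability $\mathrm{poly}(nq)\,2^{-\Omega(n)}$. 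Collecting exponents, with errors like $q^{c}\,2^{-n/2}$ arising from the $\sqrt N$ block structure, constant success forces $q=\Omega(2^{n/12})$.

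\textbf{Main obstacle.} The hardest step is Step 2's control of arbitrary superpositions. Color-word states are non-orthogonal, so per-basis-state small probabilities do not directly bound the norm of the clipped-away component. I would first work within a fixed core-type sector and decompose into anchor states via random core placements (\cref{sec:appendix}). I would then bound the clipping operator's norm on the orthogonalized anchor basis, checking that the Gram matrix of anchor states is well conditioned so linear combinations cannot amplify errors. A secondary subtlety is that after compression forgets an entry leaf, the non-collision memory with that block must still be accounted for.
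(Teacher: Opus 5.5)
Your Steps 1 and 2 follow the paper's architecture: the compressed welded oracle, the color-word invariant at level $S=(n+2)q$, per-query preservation via \Cref{lem:clipped-subspace,lem:subspace-preservation}, and the hybrid of \Cref{lem:reachable-colorword}. Step 3, however, has a genuine gap: a valid output is not a fixed set of $n+1$ adjacency facts. The theorem places no bound on output length. Even a loop-free entrance-to-exit path can zigzag across the weld many times, so its length can far exceed $2n+1$. A fundamental-lemma verifier that checks every edge makes as many component queries as the path is long. The additive costs of that step (a guessed fresh label, a guessed fresh weld partner) grow at least linearly in the number of checks, so your bound becomes vacuous on long outputs. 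The paper handles this by truncation. After loop erasure, if the output has at least $q+2$ labels, the verifier only checks that the first $q+2$ of them lie in $P(V)$. Once inverse label calls are replaced by direct database lookup, which adds no entries, this branch always rejects, because the $\cwO$ database holds at most $q+1$ labels. Shorter outputs are fully verified with $O(q+1)$ queries. In those checks, weld compression is modified to exclude the decoded right leaves, so a checked weld edge passes only if it was already recorded. This is what gives $p_{\rm path}\le\norm{\Gamma\ket{\psi_q}}^2+O(E_Q+V_Q)$ with $Q=O(q+1)$.

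Your reason why recorded paths are rare on the color-word subspace is also not the operative one. They are not excluded by construction up to re-rooting, and the relevant event is not a block violation. Let $\mathbf A$ be the labeled prefixes of a word that descends to a left leaf $u$, crosses a $c'$-weld, and then reads the $n/2$ upward colors from an eligible right leaf $v_0$ to its block root. Let $\mathbf B$ be the labeled prefixes of the right-rooted word ending at the parent of that root. The realizations with $W_{c'}(u)=v_0$ have no block violation, since reaching a block root is not an exit, so they lie in the support of $\ket{A,B}$. Yet their databases contain a recorded entrance-to-exit path, so $\Gamma$ overlaps individual color-word states, with no re-rooting involved.

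The paper argues differently. In a good placement, the in-block segment of a recorded path joins two distinct block roots inside a single core. An anchor's initial ordinary component contains at most one block root, so some vertex reached after a weld must be a block root. By the fourth estimate of \Cref{app:exploration-estimates}, this has conditional probability $O(N^{-1/2})$ per vertex. The restriction lemma over anchors and the lift $\norm{f}\le2\norm{h}$ of \Cref{lem:placement-lift} convert this into $\norm{\Gamma\Pi_{\le s}}=O(s^2N^{-1/4})$ on arbitrary superpositions.

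Relatedly, a well-conditioned Gram matrix of anchor states is not available. The anchor family is redundant, since the spaces $L^2(X_{i,a})$ of a free core overlap, and partly annihilated, since products forcing two cores into one block map to zero. The paper instead uses the orthogonal decomposition of \Cref{lem:app-decomposition} and a bounded preimage off the collision kernel (\Cref{lem:app-preimage}).

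Finally, the exponent should be derived rather than read off. Leakage $O(S^2N^{-1/4})$ per query, accumulated over $q$ queries and squared, gives $p_{\rm path}=O((n+2)^4q^6N^{-1/2})$. Hence $q=\Omega(N^{1/12}/(n+2)^{2/3})$, which matches the informal $\Omega(2^{n/12})$ only up to the suppressed $\mathrm{poly}(n)$ factor.
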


\paragraph{Note on concurrent work.} In the final stages of this work, we became aware of concurrent independent work that establishes a similar result \cite{MP26}.

\section{Preliminaries}

\subsection{Compressed permutation oracles}
\label{sec:compressed-perm}

The compressed permutation oracle method represents an
algorithm's interaction with a random permutation using a database
register whose computational basis states are partial permutations.
We describe the construction for a permutation on $[M]$, using
the arbitrary-size bounds of \cite{cm26comp}, for every integer $M\ge3$.

Let $\mathbf I$ denote the set of partial permutations
$I:[M]\rightarrow[M]\cup\{\bot\}$, which are injective on their
non-$\bot$ entries. Write $|I|=|\dom(I)|$, and let $\emptyset$
denote the empty database. For an input $x$, define
$\mathbf I|^x\coloneqq\{I\in\mathbf I:x\notin\dom(I)\}$. For each
$I\in\mathbf I|^x$, let
\begin{align}
    \ket{+_{x,I}}
    &\coloneqq
    \frac{1}{\sqrt{M-|I|}}
    \sum_{y\in[M]\setminus\im(I)}
        \ket{I[x\rightarrow y]},
\end{align}
where $I[x\rightarrow y]$ extends $I$ by the pair $(x,y)$.
The compression operator is
\begin{align}
    \cc_x
    &\coloneqq
    \Id+\sum_{I\in\mathbf I|^x}
    \left(
        \ket{+_{x,I}}\bra{I}
        +\ket{I}\bra{+_{x,I}}
        -\proj{I}
        -\proj{+_{x,I}}
    \right).
    \label{eq:perm-compression}
\end{align}
Thus $\cc_x$ swaps $\ket{I}$ with $\ket{+_{x,I}}$ on each
block, and fixes their orthogonal complement. In particular,
$\cc_x=\cc_x^\dagger=\cc_x^{-1}$.

We write $\cc$ for compression controlled on the query point.
The database lookup and flip operators are
\begin{align}
    \pp\ket{x,z}\ket{I}
    &\coloneqq
    \begin{cases}
        \ket{x,z\oplus I(x)}\ket{I},
            & x\in\dom(I),\\
        \ket{x,z}\ket{I},
            & x\notin\dom(I),
    \end{cases}
    &
    \ff\ket{I}
    &\coloneqq \ket{I^{-1}}.
\end{align}
Here the elements of $[M]$ have fixed distinct bit-string encodings.
With a direction bit $b$, the compressed permutation oracle is
\begin{align}
    \cpO
    &\coloneqq
    \sum_{b\in\bit}\proj{b}\otimes
        \ff^b\cc\pp\cc^\dagger\ff^b.
    \label{eq:compressed-permutation-oracle}
\end{align}
The database begins in $\ket{\emptyset}$ and is inaccessible to
the algorithm. Each query increases its size by at most one,
so after $q$ queries it is supported on databases of size at
most $q$.

\begin{theorem}[Soundness~{\cite{cm26comp}}]
    \label{thm:comp-perm-soundness}
    Let $\algo A$ make $q$ queries to a permutation and its
    inverse. Let $\rho_{\algo A}^{(\mathrm{perm})}$ be its final
    state when the permutation is uniformly random, and let
    $\rho_{\algo A}^{(\cpO)}$ be its final state with $\cpO$,
    after tracing out the database. Then
    \begin{align}
        \frac12
        \norm{
            \rho_{\algo A}^{(\mathrm{perm})}
            -\rho_{\algo A}^{(\cpO)}
        }_1
        &=O\left(\frac{q}{M^{1/2}}\right).
    \end{align}
\end{theorem}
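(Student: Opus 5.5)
The plan is a hybrid argument. We compare the actual oracle for a uniformly random permutation $\pi$, purified as $\frac{1}{\sqrt{M!}}\sum_\pi \ket{\pi}$, with the compressed oracle $\cpO$ acting on the database register.

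\textbf{Step 1: an exact purified simulation.} Consider the full-database picture. Here the register holds $\ket{\pi}$ for a total permutation, and a query applies $\pp$ (or $\ff\pp\ff$ for inverse queries). Conjugating by a global ``compression'' that maps uniform superpositions over unset values to $\ket{\bot}$ gives an exact simulation. For permutations, however, the uniform superposition over images is not a product state, so there is no exact local compression.

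\textbf{Step 2: compare with the local compression.} One instead compares $\cc_x \pp \cc_x^\dagger$ to the true action on states of the form $\sum_{I}\alpha_I\ket{I}$, where each $\ket{I}$ stands for the normalized uniform superposition over total permutations extending $I$. Call this embedding $V\colon \ket{I}\mapsto\ket{\widetilde{I}}$. It is an isometry onto the symmetric states.

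\textbf{Step 3: a per-query error bound.} Show that for every database $I$ of size $|I|\le q$ and every query (forward or inverse),
\begin{align}
\norm{\left(\ff^b\cc\pp\cc^\dagger\ff^b\right)\ket{x,z}\ket{I} - V^\dagger \pp_{\mathrm{full}}^{(b)} V\ket{x,z}\ket{I}} = O\!\left(\frac{1}{\sqrt{M}}\right),
\end{align}
and that the same bound holds uniformly on superpositions. The error has a definite source. After decompressing $x$ to $\ket{+_{x,I}}$, the value $y$ is uniform on $[M]\setminus\im(I)$, which agrees with the true conditional distribution. Re-compression after the phase kickback, however, can fail to return to $\ket{I}$ cleanly. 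It fails exactly when the compressed state carries components that are inconsistent with $\pi^{-1}$ queries, that is, when some recorded $I(x')$ coincides with the value now being sampled in the inverse direction. Both the weight of such collisions and the non-orthogonality overlaps $\langle +_{x,I}|J\rangle$ scale like $1/\sqrt{M-|I|}$.

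\textbf{Step 4: sum the errors.} Summing the per-query errors over $q$ queries via the triangle inequality gives a trace-distance error of $O(q/\sqrt{M})$. This requires the per-query error to be $O(1/\sqrt M)$ uniformly in $|I|$. Obtaining that uniform bound is the content of the ``arbitrary-size'' refinement and is why the hypothesis is $M\ge 3$ rather than $q\ll M$.

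\textbf{Main obstacle.} The hard step is Step 3 for superpositions. Individual basis-state errors are $O(1/\sqrt M)$, but the operators $\cc_x$ for different databases overlap, since $\ket{+_{x,I}}$ is not orthogonal to $\ket{J}$ when $J=I[x\to y]$. A naive sum over $I$ could therefore blow up by a factor of up to $q$. I would first try to write the difference operator as $\sum_I \ket{\mathrm{err}_I}\bra{I}$ with the $\ket{\mathrm{err}_I}$ supported on pairwise orthogonal subspaces indexed by $(I,x)$. This yields an operator-norm bound by Cauchy--Schwarz, and tracing out the database at the end then gives the stated bound on $\frac12\norm{\rho^{(\mathrm{perm})}_{\algo A}-\rho^{(\cpO)}_{\algo A}}_1$.
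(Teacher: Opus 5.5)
\textbf{There is a genuine gap at Step~2, and every later step depends on it.}

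The map $V\colon\ket{I}\mapsto\ket{\widetilde I}$ is not an isometry. The vectors $\ket{\widetilde I}$ are far from orthogonal; for example $\langle\widetilde\emptyset|\widetilde{\{x\to y\}}\rangle=M^{-1/2}$. Worse, $V$ is not even injective. Sum the uniform superpositions over extensions of $I[x\to y]$ over $y\notin\im(I)$:
\[
 V\ket{+_{x,I}}=\frac{1}{\sqrt{M-|I|}}\sum_{y\notin\im(I)}\ket{\widetilde{I[x\to y]}}=\ket{\widetilde I}=V\ket{I}\qquad(I\in\mathbf I|^x).
\]
Hence $V\cc_x=V$ for every $x$. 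In other words, $V$ kills exactly the directions $\ket{I}-\ket{+_{x,I}}$ on which compression acts. If $V$ were an isometry, this would force $\cc_x$ to be the identity.

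This breaks Step~3. The operator $V^\dagger\pp_{\mathrm{full}}V$ does not represent the true oracle on the database space; it is not even a contraction. On an answer register in the uniform superposition, where $\pp_{\mathrm{full}}$ acts trivially, it equals $V^\dagger V$. Databases of size at most one already give $\norm{V^\dagger V\ket{\emptyset}}^2\ge 1+M$. Similarly, $V^\dagger\pp_{\mathrm{full}}V\ket{x,0}\ket{\emptyset}$ has norm at least $\sqrt2$, whereas $\cpO\ket{x,0}\ket{\emptyset}$ has norm $1$. So your claimed $O(M^{-1/2})$ per-query bound on basis states $\ket{I}$ is false as stated.

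Step~4 fails for the same reason. Turning a joint-state bound into a bound on $\rho_{\algo A}$ after tracing out the memory requires the comparison map to be an isometry acting only on the memory.

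\textbf{What is missing: the right isometry together with an inductive invariant.} This paper does not prove the theorem; it imports it from \cite{cm26comp}. The proof of \Cref{lem:valid-states} describes the argument used there. For $r$ swap queries with $3r-1\le M$, the compressed joint state is within
\[
 \frac{60r}{\sqrt{M-r+1}}+\frac{r(r+1)}{2(M-2r+1)}
\]
of the image of the purified random-permutation state under a \emph{fixed isometry acting only on the permutation memory}. Each XOR query is realized by two swap queries. Tracing out the memory then gives the trace-distance bound.

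To repair your route, you would need to do three things:
\begin{enumerate}
\item Restrict $V$ to a suitable subspace of orthogonalized database states. In the function case these are the states whose recorded entries are orthogonal to the uniform vector; there the analogue of $V$ coincides with Zhandry's decompression.
\item Show that $V$ is (approximately) isometric on that subspace and intertwines $\cpO$ with the true oracle.
\item Prove, by induction over queries, that the compressed state stays close to that subspace.
\end{enumerate}
The per-query estimate then becomes a statement about reachable states, not a uniform bound on all databases of size at most $q$. Compare the paper's remark that forward sanitization is small only on reachable states.

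Two smaller points. First, the overlap you flag as the main obstacle is harmless. For fixed $x$, the blocks $\operatorname{span}\{\ket{I},\ket{+_{x,I}}\}$ are mutually orthogonal, so $\cc_x$ is a clean involution. The real difficulty is that $\ket{+_{x,I}}$ depends on $\im(I)$, so compressions at different points, and in the two directions, do not commute. Second, no uniformity in $|I|$ beyond $|I|<\sqrt M$ is needed, because the bound is vacuous once $q\gtrsim\sqrt M$.
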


We also use the fact that decompression produces a defined answer
on states reached by a bounded-query algorithm, up to a small
error. We formalize this by \emph{sanitizing} decompression:
the component that would produce an undefined answer is
discarded.

\begin{definition}[Validity and sanitized decompression]
    \label{def:valid}
    Let
    \begin{align}
        \Pi_x
        &\coloneqq
        \sum_{\substack{I\in\mathbf I\\x\in\dom(I)}}\proj{I},
        &
        \overline{\Pi}_x
        &\coloneqq
        \cc_x\Pi_x\cc_x^\dagger.
    \end{align}
    A database state $\ket{\psi}$ is \emph{valid} on input $x$
    if $\overline{\Pi}_x\ket{\psi}=\ket{\psi}$.
    The \emph{sanitized decompression operator} and its adjoint,
    the \emph{sanitized compression operator}, are
    \begin{align}
        \overline{\cc}_x^\dagger
        &\coloneqq
        \Pi_x\cc_x^\dagger
        =\cc_x^\dagger\overline{\Pi}_x,
        &
        \overline{\cc}_x
        &\coloneqq
        \cc_x\Pi_x
        =\overline{\Pi}_x\cc_x.
        \label{eq:sanitized-decompression}
    \end{align}
    Thus sanitized and ordinary decompression agree on valid
    inputs, while sanitized decompression annihilates the
    invalid component. For a state containing query registers,
    including the direction bit, the validity projector is
    \begin{align}
        \overline{\Pi}
        &\coloneqq
        \sum_{\substack{b\in\bit\\x\in[M]}}
        \proj{b,x}\otimes
        \ff^b\overline{\Pi}_x\ff^b.
    \end{align}
\end{definition}

Both $\overline{\cc}_x^\dagger$ and $\overline{\cc}_x$ are
contractions. Let $\overline{\cc}^\dagger$ and $\overline{\cc}$
denote their coherently controlled versions. Define the sanitized
compressed permutation oracle by
\begin{align}
    \overline{\cpO}
    &\coloneqq
    \sum_{b\in\bit}\proj{b}\otimes
        \ff^b\overline{\cc}\pp\overline{\cc}^\dagger\ff^b
    =\overline \Pi \cdot \cpO \cdot \overline{\Pi},
    \label{eq:sanitized-permutation-oracle}
\end{align}
which is also a contraction. States in the sanitized experiment
are left subnormalized, with no renormalization after
discarding an invalid component. We show that sanitizing does not significantly change the experiment, which follows from the validity of the compressed permutation oracle. 

\begin{lemma}[Validity through sanitization]
    \label{lem:valid-states}
    Let $\ket{\psi_q}$ be the joint algorithm--database state
    after $q$ queries to $\cpO$, starting with an empty database.
    Let $\ket{\overline{\psi}_q}$ be the vector obtained by
    replacing each query with $\overline{\cpO}$, keeping all
    other operations unchanged. Then
    \begin{align}
        \norm{\ket{\psi_q}-\ket{\overline{\psi}_q}}
        &=O\left(\frac{q^2}{M^{1/2}}\right).
    \end{align}
\end{lemma}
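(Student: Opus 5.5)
The plan is a hybrid argument over the $q$ queries. Write the algorithm as $U_q \cpO U_{q-1}\cdots \cpO U_0$, where each $U_i$ acts only on algorithm registers. Define hybrids $\ket{\psi^{(i)}}$ in which the first $i$ queries use $\overline{\cpO}$ and the remaining ones use $\cpO$. Every operator involved is a contraction, so the triangle inequality gives
\begin{align}
\norm{\ket{\psi_q}-\ket{\overline\psi_q}}\le \sum_{i=1}^{q}\norm{(\cpO-\overline{\cpO})\ket{\phi_{i}}}.
\end{align}
Here $\ket{\phi_i}$ is the state immediately before the $i$-th query in the experiment whose first $i-1$ queries are sanitized. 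It is convenient to swap the order instead: compare with the fully unsanitized trajectory. In that version $\ket{\phi_i}$ is the genuine state before query $i$ of a $\cpO$ run. The errors computed on the unsanitized trajectory then propagate through later contractions unchanged. So it suffices to show that for the genuine pre-query state $\ket{\phi}$ after $i-1$ queries,
\begin{align}
\norm{(\cpO-\overline{\cpO})\ket{\phi}}=O(i/M^{1/2}).
\end{align}
Summing over $i\le q$ then gives $O(q^2/M^{1/2})$.

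Next, reduce the per-query error to invalidity. Since $\overline{\cpO}=\overline\Pi\,\cpO\,\overline\Pi$, we have
\begin{align}
\cpO-\overline{\cpO}=(\Id-\overline\Pi)\cpO+\overline\Pi\,\cpO(\Id-\overline\Pi),
\end{align}
and hence the error is at most $\norm{(\Id-\overline\Pi)\cpO\ket\phi}+\norm{(\Id-\overline\Pi)\ket\phi}$. Using $\cc_x^\dagger=\cc_x$ and $\pp$ being block diagonal, one checks that $\overline\Pi$ commutes with $\cpO$ on the relevant registers. Indeed, conjugating by $\ff^b\cc$ turns $\overline\Pi_x$ into $\Pi_x$, and $\pp$ preserves $\Pi_x$. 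So $\norm{(\Id-\overline\Pi)\cpO\ket\phi}=\norm{(\Id-\overline\Pi)\ket\phi}$, and we only need to bound the invalid weight of $\ket\phi$ on the queried input. (If exact commutation fails, the first term should still reduce to the same invalidity estimate, with the query output controlled.)

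The main step is to show that a database state produced by $i-1$ queries to $\cpO$ has weight $O(i/M^{1/2})$ outside $\overline\Pi_x$, for whatever $x$ (and direction $b$) is coherently queried next. Concretely, we need $\norm{(\Id-\Pi_x)\cc_x\ket\phi}$ to be small. Here $\cc_x$ maps $\ket{I}$ with $x\notin\dom(I)$ to $\ket{+_{x,I}}$, which lies in $\Pi_x$. So invalid components come only from the part of $\ket\phi$ on $\spn{\ket{+_{x,I}}}$ for databases with $x\in\dom(I)$, i.e.\ from "uniform superposition" entries that should have been compressed. I would take this directly from the analysis in \cite{cm26comp} underlying \cref{thm:comp-perm-soundness}. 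That analysis shows reachable states have amplitude $O(\sqrt{|I|/M})$ or $O(1/\sqrt M)$ on such $\ket{+}$-type entries, since compression leaves a residual overlap from the renormalization $1/\sqrt{M-|I|}$ versus $1/\sqrt M$ and from the excluded image set. The invariant to prove by induction on $i$ is therefore that the invalid norm is at most $c\, i/\sqrt M$, with each query adding $O(1/\sqrt M)$ error. Establishing this invariant is the expected obstacle. If it is not stated explicitly in \cite{cm26comp}, I would prove it by tracking $\overline\Pi$ through one application of $\ff^b\cc\pp\cc^\dagger\ff^b$. The point is that $\cc_{x'}$ for $x'\ne x$ perturbs $\overline\Pi_x$ only through the $O(1/\sqrt{M-|I|})$ correlation introduced by removing $\im(I)$, which gives the per-query $O(1/\sqrt M)$ increment.
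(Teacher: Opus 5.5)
Your hybrid skeleton (ordinary prefix, sanitized contractive suffix) is the same as the paper's. Your commutation remark is correct and slightly sharper than the paper: since $\pp$ is block diagonal in the database basis, it commutes with $\Pi_x$, so $\overline{\cpO}=\cpO\,\overline\Pi$ and the per-query error is exactly $\norm{(\Id-\overline\Pi)\ket{\phi_i}}$. The paper instead bounds $\norm{(\Id-\overline\Pi)\cpO\ket{\psi_t}}$ separately, treating $\cpO\ket{\psi_t}$ as a $(t+1)$-query state.

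The gap is the reachable validity estimate $\norm{(\Id-\overline\Pi)\ket{\phi_i}}=O(i/M^{1/2})$. This estimate is the whole content of the lemma, and your proposal does not establish it. It does not follow directly from \Cref{thm:comp-perm-soundness}, which is a trace-distance statement about reduced states. The natural detection argument (make one more query and check whether the answer is undefined, which never happens ideally) only gives $\norm{(\Id-\overline\Pi)\ket{\phi_i}}^2=O(i/M^{1/2})$, i.e.\ $O(\sqrt{i}\,M^{-1/4})$. That sums to $O(q^{3/2}M^{-1/4})$, not the claimed $O(q^2M^{-1/2})$. Your description of what \cite{cm26comp} shows about amplitudes on $\ket{+_{x,I}}$ is not a statement you can cite.

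The missing idea is a \emph{Euclidean joint-state} comparison. The proof in \cite{cm26comp} shows that after $r$ swap queries, the compressed joint state is within $O(r/\sqrt{M})$ of $V$ applied to the purified ideal state, where $V$ is a fixed isometry acting only on the permutation memory. The paper implements the $t$ XOR queries with $2t$ swap queries and then appends one swap query into a fresh blank answer register.
\begin{itemize}
\item Ideally, the answer is never blank.
\item In the compressed experiment, the blank component has norm exactly $\norm{(\Id-\overline\Pi)\ket{\psi_t}}$. Decompression leaves the entry undefined precisely on the invalid component, and recompression does not touch the answer register.
\item Since $V$ does not act on the answer register, projecting onto ``blank'' bounds the invalidity by the joint-state distance with $r=2t+1$, which is linear in $t$.
\end{itemize}

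Your fallback induction does not replace this. The next query point is chosen by the algorithm after it has seen the previous answer, so it can be correlated with the database. A bound on the invalid weight for each fixed $x$, or on the commutator of $\overline\Pi_x$ with $\cc_{x'}\pp\cc_{x'}^\dagger$, therefore does not control the coherently chosen $x$; summing per-$x$ errors can lose a factor up to $M$. A working invariant has to be an operator-level statement about the database that holds uniformly over algorithms. That is exactly the closeness to $V$ applied to the ideal state that the cited analysis provides.
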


\begin{proof}
    We first establish the approximate validity estimate
    needed for the hybrid argument. If $\ket{\psi_t}$ is a
    state reached after $t$ ordinary queries, with the next
    query prepared, then
    \begin{align}
        \norm{
            (\Id-\overline{\Pi})\ket{\psi_t}
        }
        &=O\left(\frac{t+1}{M^{1/2}}\right).
        \label{eq:reachable-permutation-validity}
    \end{align}
    We use the stronger joint-state comparison from the proof of the
    soundness theorem in \cite{cm26comp}. For $r$ swap queries and
    $3r-1\le M$, the final compressed state is within Euclidean distance
    \[
        \frac{60r}{\sqrt{M-r+1}}
        +\frac{r(r+1)}{2(M-2r+1)}
    \]
    of the image of the corresponding purified state under a fixed
    isometry acting only on the permutation memory. A swap query
    interchanges a separate blank answer symbol $\bot$ with the
    oracle value; undefined database entries act as the identity.
    Each XOR query is implemented by computing into a blank auxiliary
    register, copying the encoded value, and uncomputing, using two
    swap queries. The inner compressions cancel, so this also gives
    exactly the XOR compressed oracle defined above.
    After these $2t$ swap queries, append one swap query with a fresh
    blank answer register. In the ideal experiment its answer is
    never blank. In the compressed experiment the norm of its blank
    component is exactly $\norm{(\Id-\overline\Pi)\ket{\psi_t}}$:
    decompression leaves an undefined entry precisely on this
    component, and recompression preserves its norm. The isometry
    does not act on the answer register, so the joint-state comparison
    with $r=2t+1$ bounds this norm by $O((t+1)/\sqrt M)$ when
    $t+1\le\sqrt M/4$. Outside this range the same bound is trivial.
    Since $\overline{\cpO}=\overline{\Pi} \cdot \cpO\cdot \overline{\Pi}$, we have
    \begin{align}
        \cpO-\overline{\cpO}
        &=
        (\Id-\overline{\Pi})\cpO
        +\overline{\Pi} \cdot \cpO(\Id-\overline{\Pi}).
    \end{align}
    Hence, by the triangle inequality and contractivity,
    \begin{align}
        \norm{(\cpO-\overline{\cpO})\ket{\psi_t}}
        &\leq
        \norm{(\Id-\overline{\Pi})\cpO\ket{\psi_t}}
        +\norm{(\Id-\overline{\Pi})\ket{\psi_t}}\\
        &=O\left(\frac{t+2}{M^{1/2}}\right).
    \end{align}
    The last line applies
    \eqref{eq:reachable-permutation-validity} to
    $\ket{\psi_t}$ and $\cpO\ket{\psi_t}$, which are states
    reached in the ordinary experiment after $t$ and $t+1$
    queries, respectively.

    Hybrid over the queries, using ordinary queries before
    the replaced call and sanitized queries afterwards.
    Every operation following the replacement is a
    contraction, so
    \begin{align}
        \norm{\ket{\psi_q}-\ket{\overline{\psi}_q}}
        &\leq
        \sum_{t=0}^{q-1}
        \norm{(\cpO-\overline{\cpO})\ket{\psi_t}}\\
        &\leq
        \sum_{t=0}^{q-1}
        \left(
            \norm{(\Id-\overline{\Pi})\cpO\ket{\psi_t}}
            +\norm{(\Id-\overline{\Pi})\ket{\psi_t}}
        \right)\\
        &=O\left(\frac{q^2}{M^{1/2}}\right).
    \end{align}
\end{proof}

An additional useful lemma states that removing a small number of terms from compression or decompression does not significantly change the operator.

\begin{lemma}[\cite{carolan2025compressedpermutationoracles}, Lemma~2.8]
\label{lem:restricted-compression}
Fix $x\in[M]$. For each $I\in\mathbf I|^x$, choose a nonempty
set $S_{x,I}\subseteq[M]\setminus\im(I)$, and define
\[
    \ket{+^S_{x,I}}
    \coloneqq
    \frac{1}{\sqrt{|S_{x,I}|}}
    \sum_{y\in S_{x,I}}\ket{I[x\rightarrow y]}.
\]
Let $\cc_x^S$ be obtained from \eqref{eq:perm-compression}
by replacing $\ket{+_{x,I}}$ with $\ket{+^S_{x,I}}$.
Thus $\cc_x^S$ swaps $\ket{I}$ with $\ket{+^S_{x,I}}$
and fixes their orthogonal complement.

Suppose $t+r<M$ and, whenever $|I|\le t$,
\[
    \bigl|([M]\setminus\im(I))\setminus S_{x,I}\bigr|
    \le r.
\]
Then every state $\ket{\psi}$ supported on databases of size
at most $t$, possibly entangled with ancillary registers, satisfies
\[
    \norm{(\cc_x-\cc_x^S)\ket{\psi}}
    \le 2\sqrt{\frac{r}{M-t}}\,\norm{\ket{\psi}}.
\]
The same bound holds for coherently controlled operators and
for sanitized compression and decompression, defined using
the same projector $\Pi_x$.
\end{lemma}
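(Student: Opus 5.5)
The plan is to compare $\cc_x$ and $\cc_x^S$ block by block. Both operators act as the identity outside the span of the vectors $\ket{I}$, $\ket{+_{x,I}}$, $\ket{+^S_{x,I}}$ with $I\in\mathbf I|^x$. First I would write the difference as
\begin{align}
    \cc_x-\cc_x^S
    =\sum_{I\in\mathbf I|^x}
    \Bigl(
        (\ket{+_{x,I}}-\ket{+^S_{x,I}})\bra{I}
        +\ket{I}(\bra{+_{x,I}}-\bra{+^S_{x,I}})
        -\proj{+_{x,I}}+\proj{+^S_{x,I}}
    \Bigr).
\end{align}
Write $\ket{\delta_I}\coloneqq\ket{+_{x,I}}-\ket{+^S_{x,I}}$. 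Both vectors are uniform superpositions over extensions of $I$ at $x$, and the second index set is contained in the first. So a direct calculation gives $\braket{+_{x,I}|+^S_{x,I}}=\sqrt{|S_{x,I}|/(M-|I|)}$. Hence
\[
\norm{\ket{\delta_I}}^2=2-2\sqrt{1-r_I/(M-|I|)}\le 2r_I/(M-|I|),
\]
where $r_I\le r$ counts the removed elements.

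Next I would use orthogonality across different $I$. The vectors $\ket{I}$ have $x\notin\dom(I)$. The vectors $\ket{+_{x,I}}$ and $\ket{+^S_{x,I}}$ are supported on databases $I[x\to y]$. Distinct $I$ give disjoint supports, and the two families, with and without $x$ in the domain, are orthogonal to each other. So the operator splits into the parts mapping $x$-free databases to $x$-containing ones and vice versa. These are $A\coloneqq\sum_I\ket{\delta_I}\bra{I}$, its adjoint-type term $B\coloneqq\sum_I\ket{I}\bra{\delta_I}$, and the projector difference $D\coloneqq\sum_I(\proj{+^S_{x,I}}-\proj{+_{x,I}})$. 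Each is block-diagonal in $I$.

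For a state $\ket{\psi}$ supported on databases of size at most $t$ (tensored with ancillas), decompose it as $\ket{\psi}=\sum_I\ket{I}\ket{a_I}+\ket{\psi'}$, where $\ket{\psi'}$ is supported on databases containing $x$. For the $A$ term, $\norm{A\ket{\psi}}^2=\sum_{|I|\le t}\norm{\delta_I}^2\norm{a_I}^2\le \frac{2r}{M-t}\norm{\psi}^2$, using orthogonality of the images. For $B+D$ acting on the $x$-containing part, restrict to the block of a fixed $I$. There $\proj{+}-\proj{+^S}$ has operator norm $\sqrt{1-|\braket{+|+^S}|^2}=\sqrt{r_I/(M-|I|)}$. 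Also $\ket{I}\bra{\delta_I}$ has norm $\norm{\delta_I}$. Any $x$-containing database of size $\le t$ arises from an $I$ of size $\le t-1$, so the same $r$ and $M-t$ bounds apply.

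Combining these pieces gives roughly $(\sqrt2+\sqrt2+1)\sqrt{r/(M-t)}$. That exceeds the constant $2$, so the constants need care, and this is where I expect the only real work to lie. The fix is to treat each two-dimensional-ish block exactly. On the span of $\ket{I},\ket{+},\ket{+^S}$, the operator $\cc_x-\cc_x^S$ is the difference of two reflections. One swaps $\ket{I}\leftrightarrow\ket{+}$ and the other swaps $\ket{I}\leftrightarrow\ket{+^S}$, where $\ket{+}$ and $\ket{+^S}$ are at angle $\theta$ with $\sin\theta=\sqrt{r_I/(M-|I|)}$. Two such reflections across axes at angle $\theta/2$ differ in norm by at most $2\sin(\theta/2)\cdot$const. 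Computing the exact operator norm of this $3\times3$ difference should give at most $2\sin\theta$, and hence the stated bound $2\sqrt{r/(M-t)}$ after summing orthogonal blocks.

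The controlled version then follows by applying the bound for each control value $x$ separately, using orthogonality of control branches. The sanitized versions follow because $\Pi_x$ is a projector commuting with the block decomposition, so multiplying by it cannot increase the norm.
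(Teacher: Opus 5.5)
The paper does not prove this lemma (it is cited from the compressed permutation oracle paper), so I am judging your argument on its own terms. Your reduction is the right one. Every database lies in exactly one block $B_I=\mathrm{span}\{\ket{I},\ket{I[x\rightarrow y]}:y\notin\im(I)\}$, and both operators preserve every block. A state supported on sizes at most $t$ meets only blocks with $|I|\le t$, where $r_I\le r$ and $M-|I|\ge M-t$. However, the argument stops exactly at the step that carries the content. The estimate you actually carry out (triangle inequality over $A$, $B$, $D$) gives a constant near $3.8$. The exact block norm is only asserted (``should give at most $2\sin\theta$''), and the supporting picture is inaccurate: the unit normals of the two reflections are not at angle $\theta/2$, since their inner product is $\cos^2(\theta/2)$. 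As written, the lemma's constant is not established.

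The missing computation is short. As in the proof of \Cref{lem:inverse-sanitization}, on $B_I$ one has $\cc_x=\Id-uu^\dagger$ and $\cc_x^S=\Id-vv^\dagger$, with $u=\ket{I}-\ket{+_{x,I}}$ and $v=\ket{I}-\ket{+^S_{x,I}}$. Here $\norm{u}^2=\norm{v}^2=2$ and $\langle u,v\rangle=1+c$, where $c=\sqrt{|S_{x,I}|/(M-|I|)}\in(0,1]$ satisfies $1-c^2=r_I/(M-|I|)$. For vectors of equal norm, $vv^\dagger-uu^\dagger$ vanishes off $\mathrm{span}\{u,v\}$ and is traceless there, with eigenvalues $\pm(\norm{u}^2\norm{v}^2-|\langle u,v\rangle|^2)^{1/2}$. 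Hence
\[
 \norm{(\cc_x-\cc_x^S)|_{B_I}}=\sqrt{4-(1+c)^2}=\sqrt{(1-c)(3+c)}\le\sqrt{3(1-c^2)}=\sqrt{\frac{3r_I}{M-|I|}},
\]
using $c\ge0$. This confirms, and slightly improves, your guessed $2\sin\theta$. Summing over the orthogonal invariant blocks, tensored with the ancilla, gives the bound with constant $\sqrt3<2$.

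Your treatment of the controlled and sanitized versions is fine once you note two facts. First, $\Pi_x$ is diagonal in the database basis, so applying it first keeps the support on sizes at most $t$. Second, $\cc_x^S$ is self-adjoint, so the two sanitized decompressions differ by $\Pi_x(\cc_x-\cc_x^S)$.
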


\subsection{Compressed expanding oracles}
\label{subsec:expand}
An expanding injection has a useful additional property: an inverse
query is unlikely to discover an image point that is not already in
the compressed database. We formalize this by replacing inverse
queries with direct database lookup. We also sanitize forward
decompression, discarding any component on which it fails to expose
a defined answer. The resulting oracle does nothing on an inverse
query whose image point is absent from the database.

Let $N<M$, and let $f:[N]\hookrightarrow[M]$ be a uniformly random
injection. We use fixed encodings of the finite sets in query and
answer registers, reserving the all-zero answer for $\bot$. The
compressed permutation construction and its soundness proof apply
with these fixed encodings as well. Our convention that a query
returns nothing means that it acts as the identity on the answer
register. Thus
\begin{align}
    \algo O_f\ket{b,x,z}
    &\coloneqq
    \begin{cases}
        \ket{b,x,z\oplus f(x)},
            & b=0,\ x\in[N],\\
        \ket{b,x,z\oplus f^{-1}(x)},
            & b=1,\ x\in\im(f),\\
        \ket{b,x,z}, & \text{otherwise}.
    \end{cases}
\end{align}
Identify $[N]$ with a fixed subset of $[M]$. Restricting a uniform
permutation $\pi$ of $[M]$ to this subset gives a uniform $f$:
each injection has exactly $(M-N)!$ permutation extensions.

We first construct an oracle whose database $I$ is a partial
permutation of $[M]$. Define two database lookup operators by
\begin{align}
    \pp_0\ket{x,z}\ket{I}
    &\coloneqq
    \begin{cases}
        \ket{x,z\oplus I(x)}\ket{I},
            & x\in[N]\cap\dom(I),\\
        \ket{x,z}\ket{I}, & \text{otherwise},
    \end{cases}\\
    \pp_1\ket{x,z}\ket{I}
    &\coloneqq
    \begin{cases}
        \ket{x,z\oplus I^{-1}(x)}\ket{I},
            & x\in\im(I),\ I^{-1}(x)\in[N],\\
        \ket{x,z}\ket{I}, & \text{otherwise}.
    \end{cases}
\end{align}
Write $\cc_0=\cc$ and $\cc_1=\ff\cc\ff$, with compression
controlled on $x$ as before. These are the compression operators
for the extending permutation on $[M]$. Define
\begin{align}
    A&\coloneqq\sum_{x\in[N]}\proj{x},
    &\Pi_0&\coloneqq
        \sum_{x\in[M]}\proj{x}\otimes
        \sum_{I:\,x\in\dom(I)}\proj{I},\\
    \overline{\Pi}_0&\coloneqq\cc_0\Pi_0\cc_0^\dagger,\\
    \overline{\cc}_0^\dagger
        &\coloneqq\Pi_0\cc_0^\dagger,
    &\overline{\cc}_0&\coloneqq\cc_0\Pi_0.
\end{align}
As usual, operators act as the identity on unmentioned registers.

\begin{definition}
    The \emph{compressed injective oracle} and its
    \emph{sanitized} version are
    \begin{align}
        \ciO
        &\coloneqq
        \sum_{b\in\bit}\proj{b}\otimes
            \cc_b\pp_b\cc_b^\dagger,\\
        \overline{\ciO}
        &\coloneqq
        \proj{0}\otimes
        \left(
            \overline{\cc}_0\pp_0
            \overline{\cc}_0^\dagger A+A^\perp
        \right)
        +\proj{1}\otimes\pp_1.
        \label{eq:inverse-sanitized-injection}
    \end{align}
    The database is initialized to $\ket{\emptyset}$.
    \label{def:comp-inj}
\end{definition}

The oracle $\ciO$ is unitary, whereas $\overline{\ciO}$ is a
contraction. Sanitized states are left subnormalized, without
renormalization. The term $A^\perp$ ensures that forward queries
outside $[N]$ remain identity operations. Since $\pp_0$ preserves
the database, it commutes with $\Pi_0$, and hence
\begin{align}
    \overline{\cc}_0\pp_0\overline{\cc}_0^\dagger
    &=\cc_0\pp_0\overline{\cc}_0^\dagger
      =\cc_0\pp_0\cc_0^\dagger\overline{\Pi}_0.
    \label{eq:forward-injection-sanitization}
\end{align}
Thus sanitizing both compression and decompression has the same
effect here as sanitizing decompression alone. Inverse queries to
$\overline{\ciO}$ do not change the database. Starting from the
empty database, it remains supported on partial injections with
domain contained in $[N]$, and each forward query increases
database size by at most one.

For the proof, introduce the auxiliary oracle which modifies only
inverse queries:
\begin{align}
    \ciO^{\mathsf{db}}
    &\coloneqq
    \proj{0}\otimes\cc_0\pp_0\cc_0^\dagger
    +\proj{1}\otimes\pp_1.
\end{align}
This auxiliary oracle is unitary. Define the projector onto the
invalid part of an active forward query by
\begin{align}
    R_{\mathsf{bad}}
    &\coloneqq
    \proj{0}\otimes A(\Id-\overline{\Pi}_0).
\end{align}
Equation~\eqref{eq:forward-injection-sanitization} gives
\begin{align}
    \overline{\ciO}
    &=\ciO^{\mathsf{db}}(\Id-R_{\mathsf{bad}}).
    \label{eq:injection-validity-factorization}
\end{align}

For a database operator $T$, write
$\norm{T}_{\leq t}\coloneqq\norm{T\Pi_{\leq t}^{\mathsf{db}}}$,
where $\Pi_{\leq t}^{\mathsf{db}}$ projects onto databases of
size at most $t$.

\begin{lemma}[Inverse sanitization]
    For $0\leq t<M$,
    \begin{align}
        \norm{\ciO-\ciO^{\mathsf{db}}}_{\leq t}
        &\leq 4\sqrt{\frac{2N}{M-t}}.
        \label{eq:inverse-sanitization-bound}
    \end{align}
    In particular, the bound is $O(\sqrt{N/M})$ when $t\leq M/2$.
    \label{lem:inverse-sanitization}
\end{lemma}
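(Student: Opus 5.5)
The two oracles agree on forward queries, so $\ciO-\ciO^{\mathsf{db}}=\proj{1}\otimes(\cc_1\pp_1\cc_1^\dagger-\pp_1)$, and it suffices to bound $\norm{\cc_1\pp_1\cc_1^\dagger-\pp_1}_{\le t}$. Since $\cc_1=\ff\cc\ff$ and $\ff$ maps a database of size $k$ to one of size $k$, conjugating by $\ff$ reduces this to a forward statement about the inverse permutation. The plan is to control the difference through restricted compression, \cref{lem:restricted-compression}, applied to the inverse permutation with a set $S$ that excludes the $N$ points of $[N]$.

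Fix the query input $x\in[M]$ and split by whether $x\in\im(I)$.

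\emph{Case $x\in\im(I)$.} Then $\cc_{1,x}$ acts as the identity on $\ket I$, because compression only moves the component of $\ket I$ along $\ket{+_{x,I}}$, which requires $x\notin\im(I)$. Consequently $\cc_1\pp_1\cc_1^\dagger$ and $\pp_1$ agree on such basis states: both return $I^{-1}(x)$ when it lies in $[N]$, and otherwise both act trivially.

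\emph{Case $x\notin\im(I)$.} Here $\pp_1$ acts trivially. In the inverse picture, for each $I$ with $x\notin\im(I)$, take $S_{x,I}=([M]\setminus\dom(I))\setminus[N]$. This set excludes at most $r=N$ elements of the available set. Let $\cc_1^S$ denote the corresponding restricted compression. Under $\cc_1^S$, decompression only creates pairs $y\mapsto x$ with $y\notin[N]$. For such pairs $\pp_1$ again acts trivially, because $I^{-1}(x)\notin[N]$. Hence $\cc_1^S\pp_1\cc_1^{S\dagger}$ equals $\cc_1^S\cc_1^{S\dagger}=\Id$ on this component, which agrees with $\pp_1$.

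More carefully, I would check that $\cc_1^S\pp_1\cc_1^{S\dagger}=\pp_1$ holds exactly on all databases. The reason is that every newly created pair has preimage outside $[N]$, so $\pp_1$ commutes with $\cc_1^S$. Nonemptiness of $S$ needs $M-t-N\ge 1$; when $M-t\le 2N$ the claimed bound exceeds $2$ and is trivial anyway. I also need $t+N<M$ for \cref{lem:restricted-compression}, which holds in the nontrivial regime.

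Next, telescope:
\[
\cc_1\pp_1\cc_1^\dagger-\cc_1^S\pp_1\cc_1^{S\dagger}
=(\cc_1-\cc_1^S)\pp_1\cc_1^\dagger+\cc_1^S\pp_1(\cc_1^\dagger-\cc_1^{S\dagger}).
\]
For the second term, \cref{lem:restricted-compression} applies on states of size $\le t$ via its version for decompression. For the first term, it applies on the state after decompression, which has size $\le t+1$. Each term is at most $2\sqrt{N/(M-t-1)}$. Using $M-t-1\ge (M-t)/2$ when $M-t\ge2$ (the case $M-t=1$ is trivial), the sum is at most $4\sqrt{2N/(M-t)}$, which is the claimed bound. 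Both $\pp_1$ and $\cc_1^S$ are contractions, so the norms combine directly.

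The main obstacle is the bookkeeping in this last step. The first term is applied to a database of size possibly $t+1$, which must be tracked; the factor $\sqrt 2$ in the statement presumably absorbs exactly this. One must also verify that the restricted-compression lemma transfers through $\ff$-conjugation and coherent control over $x$. This transfer is where I would expect the most care to be needed, but it is plausibly immediate because the lemma is stated for coherently controlled operators.
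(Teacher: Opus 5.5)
Your argument is correct, but it takes a different route from the paper. The paper works directly on each two-dimensional compression block indexed by $J$. There $\cc_x-\Id=-(\ket{e}-\ket{s})(\bra{e}-\bra{s})$ is rank one. The lookup satisfies $R-\Id=Q(R-\Id)Q$, where $Q$ projects onto extensions $J[x\rightarrow u]$ with $u\in[N]$, and $\norm{Q\ket{s}}^2\le N/(M-|J|)$. The identity $\cc_xR\cc_x-R=(\cc_x-\Id)(R-\Id)\cc_x+(R-\Id)(\cc_x-\Id)$ then gives $4\sqrt{2N/(M-|J|)}$ on each block, and the paper maximizes over blocks with $|J|\le t$. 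In the paper the $\sqrt2$ comes from $\norm{\ket{e}-\ket{s}}$.

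You instead note that $\pp_1$ commutes exactly with the restricted compression $\cc_1^S$, which never creates a preimage in $[N]$, so $\cc_1^S\pp_1\cc_1^{S\dagger}=\pp_1$. You then pay twice for \cref{lem:restricted-compression} with $r=N$, once at size $t$ and once at size $t+1$, and the $\sqrt2$ absorbs $M-t-1\ge(M-t)/2$. Your route is more modular and explains conceptually why inverse queries are nearly invisible. However, it relies on the external lemma and needs $M-t\ge N+2$ (nonempty $S$ and $t+1+N<M$), which you correctly dispose of by triviality. The paper's computation is self-contained and needs no regime split.

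One correction: your ``Case $x\in\im(I)$'' paragraph is false as stated. In the flipped picture, $\ket{I}$ is then a basis state $\ket{J[x\rightarrow y]}$ with $x\notin\dom(J)$. This state has overlap $(M-|J|)^{-1/2}$ with $\ket{+_{x,J}}$, so the unrestricted $\cc_1$ acts nontrivially on it, and $\cc_1\pp_1\cc_1^\dagger$ does not agree exactly with $\pp_1$ there. The error is harmless, because your telescoping bounds the full difference and never uses that case split. The exact statement you need concerns only $\cc_1^S$, which fixes $\ket{J[x\rightarrow y]}$ precisely when $y\in[N]$, i.e., exactly when $\pp_1$ is nontrivial; drop the paragraph.

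Also, for large $|J|$ the set $([M]\setminus\dom(I))\setminus[N]$ may be empty, so there $S$ must be chosen differently and the exact commutation can fail. This is fine, since all the operators are block diagonal and states of size at most $t$ only ever involve blocks with $|J|\le t$.
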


\begin{proof}
    The forward branches agree. For the inverse branch, flip the
    database and fix the query point $x$. On a compression block
    indexed by $J\in\mathbf I|^x$, write $\ket{e}=\ket{J}$ and
    $\ket{s}=\ket{+_{x,J}}$. On this block,
    \begin{align}
        \cc_x-\Id
        &=-(\ket{e}-\ket{s})(\bra{e}-\bra{s}).
    \end{align}
    Let $R$ be the inverse lookup $\pp_1$ in this flipped picture,
    and let $Q$ project onto extensions $J[x\rightarrow u]$ with
    $u\in[N]$. Then
    \begin{align}
        R-\Id&=Q(R-\Id)Q,
        &\norm{R-\Id}&\leq 2,\\
        Q\ket{e}&=0,
        &\norm{Q\ket{s}}^2
        &=\frac{|[N]\setminus\im(J)|}{M-|J|}
        \leq\frac{N}{M-|J|}.
    \end{align}
    It follows that
    $\norm{(\cc_x-\Id)Q}\leq\sqrt{2N/(M-|J|)}$.
    Using $\cc_x^2=\Id$, we obtain
    \begin{align}
        \cc_xR\cc_x-R
        &=(\cc_x-\Id)(R-\Id)\cc_x
          +(R-\Id)(\cc_x-\Id),\\
        \norm{\cc_xR\cc_x-R}
        &\leq 4\sqrt{\frac{2N}{M-|J|}}.
    \end{align}
    Every block meeting the size-$t$ subspace has $|J|\leq t$.
    Taking the maximum over these orthogonal blocks, and then
    over the coherently controlled query point, proves the claim.
\end{proof}

Write $\varepsilon_{\mathsf{perm}}(q,L)$ for a uniform
trace-distance soundness bound for $q$ queries to a compressed
permutation on a set of size $L$. Write $\nu_{\mathsf{perm}}(r,L)$
for a nondecreasing bound on
$\norm{(\Id-\overline{\Pi})\ket{\psi}}$ for normalized states
reached after at most $r$ ordinary permutation queries, with the
next query prepared. Define
\begin{align}
    \sigma_{\mathsf{perm}}(Q,L)
    &\coloneqq 2\sum_{r=1}^{Q}\nu_{\mathsf{perm}}(r,L).
\end{align}
The hybrid proof of \Cref{lem:valid-states} shows that this bounds
the joint-state error from sanitizing any chosen subset of at most
$Q$ permutation calls. Indeed, a replacement after $r$ ordinary
calls costs at most
$\nu_{\mathsf{perm}}(r,L)+\nu_{\mathsf{perm}}(r+1,L)$.
These bounds allow arbitrary workspace operations between calls,
including queries to independent component oracles.

We keep the parameters explicit when component permutations have
different sizes. By \Cref{thm:comp-perm-soundness} and the reachable
validity estimate in the proof of \Cref{lem:valid-states},
\begin{align}
    \varepsilon_{\mathsf{perm}}(q,L)
    &=O(q/L^{1/2}),\\
    \nu_{\mathsf{perm}}(r,L)
    &=O((r+1)/L^{1/2}),\\
    \sigma_{\mathsf{perm}}(Q,L)
    &=O(Q^2/L^{1/2}).
\end{align}

\begin{lemma}[Forward sanitization]
    For every joint state vector $\ket{\psi}$,
    \begin{align}
        \norm{(\ciO^{\mathsf{db}}-\overline{\ciO})\ket{\psi}}
        &=\norm{R_{\mathsf{bad}}\ket{\psi}}.
        \label{eq:forward-sanitization-error}
    \end{align}
    Consequently, if $\ket{\psi}$ is a normalized state reached
    by an ordinary compressed-permutation circuit using at most
    $r$ calls to the extending permutation, and its database has
    size at most $t<M$, then
    \begin{align}
        \norm{(\ciO-\overline{\ciO})\ket{\psi}}
        &\leq\nu_{\mathsf{perm}}(r,M)
          +4\sqrt{\frac{2N}{M-t}}.
        \label{eq:injection-sanitization-error}
    \end{align}
    In particular, after $t$ ordinary injection queries we may
    take $r=2t$.
    \label{lem:forward-sanitization}
\end{lemma}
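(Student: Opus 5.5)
The identity \eqref{eq:forward-sanitization-error} should follow directly from the factorization \eqref{eq:injection-validity-factorization}. Since $\overline{\ciO}=\ciO^{\mathsf{db}}(\Id-R_{\mathsf{bad}})$, we have
\begin{align}
    \ciO^{\mathsf{db}}-\overline{\ciO}
    &=\ciO^{\mathsf{db}}R_{\mathsf{bad}}.
\end{align}
Because $\ciO^{\mathsf{db}}$ is unitary, $\norm{(\ciO^{\mathsf{db}}-\overline{\ciO})\ket{\psi}}=\norm{R_{\mathsf{bad}}\ket{\psi}}$ exactly. The only points to confirm are that $R_{\mathsf{bad}}$ is a projector, which holds because $\proj{0}$, $A$ and $\overline{\Pi}_0$ act compatibly ($A$ and $\overline{\Pi}_0$ are both block diagonal in the query register $x$), and that \eqref{eq:injection-validity-factorization} holds as an operator identity, not just on some subspace.

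For \eqref{eq:injection-sanitization-error}, I would split with the triangle inequality:
\begin{align}
    \norm{(\ciO-\overline{\ciO})\ket{\psi}}
    &\le\norm{(\ciO-\ciO^{\mathsf{db}})\ket{\psi}}
      +\norm{(\ciO^{\mathsf{db}}-\overline{\ciO})\ket{\psi}}.
\end{align}
The first term is bounded by \Cref{lem:inverse-sanitization}, since the database of $\ket{\psi}$ has size at most $t<M$; this gives $4\sqrt{2N/(M-t)}$. By the first part, the second term equals $\norm{R_{\mathsf{bad}}\ket{\psi}}$, and $R_{\mathsf{bad}}$ is dominated by $\Id-\overline{\Pi}$ for the extending permutation on $[M]$. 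Indeed, on the branch $b=0$ with $x\in[N]$ it is exactly $\Id-\overline{\Pi}_x$, and elsewhere it vanishes. Hence $\norm{R_{\mathsf{bad}}\ket{\psi}}\le\norm{(\Id-\overline{\Pi})\ket{\psi}}\le\nu_{\mathsf{perm}}(r,M)$, because $\ket{\psi}$ is reached by an ordinary compressed-permutation circuit with at most $r$ calls, with the next query prepared.

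The step needing the most care is the final claim that $r=2t$ suffices after $t$ ordinary injection queries. The issue is realising $\ciO$ as a compressed-permutation circuit on $[M]$. A forward query is $\cc_0\pp_0\cc_0^\dagger$. This agrees with the $b=0$ branch of $\cpO$ on queries in $[N]$ and acts trivially otherwise, so it is one controlled permutation call. An inverse query is $\cc_1\pp_1\cc_1^\dagger$. This is the $b=1$ branch of $\cpO$ followed by a conditional discard when the preimage lies outside $[N]$. That post-processing is not a pure lookup, so I would implement it as one call computing the preimage, a classical check that it lies in $[N]$ which controls the XOR into the answer register, and a second call uncomputing. This gives at most two permutation calls per injection query, so $r\le 2t$. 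Since $\nu_{\mathsf{perm}}$ is nondecreasing, the bound holds with $r=2t$.

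The database-size hypothesis is also consistent with this: each call grows the permutation database by at most one, and the uncomputation restores it. Extra uncomputed calls are harmless because the soundness and validity bounds allow arbitrary workspace operations between calls.
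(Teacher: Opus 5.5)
Your proposal is correct and follows the paper's proof. The identity comes from the factorization $\overline{\ciO}=\ciO^{\mathsf{db}}(\Id-R_{\mathsf{bad}})$ plus unitarity. The bound then uses the triangle inequality with \Cref{lem:inverse-sanitization}, the domination $R_{\mathsf{bad}}\le\Id-\overline{\Pi}$ with reachable validity, and at most two permutation calls per injection query, giving $r=2t$. One point needs a line of justification: you count a forward query as a single \emph{controlled} call, but a controlled call is not literally an ordinary call; you can avoid this by using the paper's compute--copy--uncompute implementation for forward queries too, which still gives $r\le 2t$.
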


\begin{proof}
    Equation~\eqref{eq:injection-validity-factorization} and the
    unitarity of $\ciO^{\mathsf{db}}$ give the first equality.
    The projector $R_{\mathsf{bad}}$ restricts to active forward
    queries, so the reachable validity bound gives
    $\norm{R_{\mathsf{bad}}\ket{\psi}}
      \leq\nu_{\mathsf{perm}}(r,M)$.
    The triangle inequality and \Cref{lem:inverse-sanitization}
    now give the second bound. Finally, each ordinary injection
    query is implemented using at most two permutation queries,
    as shown below.
\end{proof}

The forward estimate applies to reachable states. Unlike inverse
sanitization, it need not be small on every vector supported on
small databases, since forward sanitization annihilates invalid
components.

\begin{corollary}[Soundness of expanding oracles]\label{cor:expanding-soundness}
    Let $\algo A$ make $q<M$ queries. Let
    $\rho_{\algo A}^{(f)}$, $\rho_{\algo A}^{(\ciO)}$, and
    $\rho_{\algo A}^{(\overline{\ciO})}$ denote its final reduced
    states in the three experiments. The last state is
    subnormalized. Then
    \begin{align}
        \frac12\norm{
            \rho_{\algo A}^{(f)}-\rho_{\algo A}^{(\ciO)}
        }_1
        &\leq\varepsilon_{\mathsf{perm}}(2q,M),\\
        \frac12\norm{
            \rho_{\algo A}^{(f)}
            -\rho_{\algo A}^{(\overline{\ciO})}
        }_1
        &\leq\varepsilon_{\mathsf{perm}}(2q,M)
          +\sigma_{\mathsf{perm}}(2q,M)
          +4q\sqrt{\frac{2N}{M-q}}.
    \end{align}
\end{corollary}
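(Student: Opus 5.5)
The plan is to reduce both bounds to the permutation soundness theorem via the embedding of $f$ into a uniform permutation $\pi$ of $[M]$, and then to a hybrid over the $q$ queries.

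\textbf{First bound.} Since restricting a uniform $\pi$ to $[N]$ gives a uniform $f$, I would simulate each query to $\algo O_f$ by a short circuit making two queries to $\pi$ or $\pi^{-1}$. A forward query with $x\in[N]$ is a single forward query, with the identity applied off $[N]$. For an inverse query, I would compute $\pi^{-1}(x)$ into a blank register, controlled-XOR it into $z$ when it lies in $[N]$, and uncompute with a forward query. Replacing $\pi$ by $\cpO$ in this circuit yields $\ciO$, because the inner compressions cancel, as in the proof of \Cref{lem:valid-states}. The circuit then makes $2q$ permutation queries, so \Cref{thm:comp-perm-soundness} gives the first bound $\varepsilon_{\mathsf{perm}}(2q,M)$ directly. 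The care needed here is to check that the controlled-identity conventions (the $A^\perp$ branch, and inverse answers outside $[N]$) match $\pp_0$ and $\pp_1$ exactly.

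\textbf{Second bound.} I would use the triangle inequality through $\rho^{(\ciO)}$. The difference between the $\ciO$ and $\overline{\ciO}$ experiments is bounded by a Euclidean norm difference of the joint pure states, since trace distance is at most the vector distance for possibly subnormalized vectors. For that difference I would run the hybrid argument: hybrid $H_j$ uses $\ciO$ for the first $j$ queries and $\overline{\ciO}$ afterwards. All later operations are contractions, so
\begin{align}
\norm{\ket{\psi_q}-\ket{\overline\psi_q}}\le\sum_{t=0}^{q-1}\norm{(\ciO-\overline{\ciO})\ket{\psi_t}}.
\end{align}
Here $\ket{\psi_t}$ is reached by $t$ ordinary queries, hence by at most $2t$ permutation calls, and has database size at most $t$.

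I would then split each summand as
\begin{align}
\norm{(\ciO-\ciO^{\mathsf{db}})\ket{\psi_t}}+\norm{(\ciO^{\mathsf{db}}-\overline{\ciO})\ket{\psi_t}}.
\end{align}
The first term is at most $4\sqrt{2N/(M-t)}\le4\sqrt{2N/(M-q)}$ by \Cref{lem:inverse-sanitization}. Summed over $t$, this gives $4q\sqrt{2N/(M-q)}$.

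The second term equals $\norm{R_{\mathsf{bad}}\ket{\psi_t}}$ by \Cref{lem:forward-sanitization}. This is the step I expect to be the main obstacle: a naive bound by $\nu_{\mathsf{perm}}(2t,M)$ per query would give $\sum_t\nu(2t)$, which is fine, but I must make sure the totals fit inside $\sigma_{\mathsf{perm}}(2q,M)=2\sum_{r\le2q}\nu(r,M)$.

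Monotonicity of $\nu$ gives $\sum_{t<q}\nu(2t,M)\le\sum_{r\le 2q}\nu(r,M)$, which is within $\sigma_{\mathsf{perm}}(2q,M)$. If the reachable-validity estimate is only available for permutation-circuit states, I would apply it to $\ket{\psi_t}$ viewed as the output of the ordinary $2t$-call permutation circuit. The remark after the definition of $\sigma_{\mathsf{perm}}$ allows this, and it also permits arbitrary interleaved operations between calls. Adding the three contributions yields the stated second bound.
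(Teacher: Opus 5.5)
Your proposal follows the paper's proof essentially step for step. For the first bound you use the two-call simulation of each injection query, with the inner compressions cancelling. For the second you hybrid over injection calls and split each summand through $\ciO^{\mathsf{db}}$ into the inverse-sanitization term $4\sqrt{2N/(M-t)}$ and $\norm{R_{\mathsf{bad}}\ket{\psi_t}}\le\nu_{\mathsf{perm}}(2t,M)$, then finish with monotonicity of $\nu_{\mathsf{perm}}$ and the subnormalized vector-distance inequality.

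One slip needs fixing. The inverse query must be uncomputed by a second \emph{inverse} call on $x$, not a forward call. A forward call can never clear the register holding $\pi^{-1}(x)$ while keeping $x$. Moreover, the adjacent compressions $\cc_1=\ff\cc\ff$ and $\cc_0^\dagger$, which are controlled on different registers, would not cancel, so the circuit would not reduce to $\ciO$. Use the same direction twice, as the paper does. Your single controlled forward call is a harmless variant, but the paper simply uses compute--copy--uncompute there as well, which stays within the $2q$ budget.
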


\begin{proof}
    An injection query can be implemented with two permutation
    queries: compute $\pi(x)$ or $\pi^{-1}(x)$ into a clean
    register, add the result to the answer if the appropriate
    domain condition holds, and uncompute. Replace these
    permutation calls by compressed permutation calls.
    The two internal compression operators commute with the
    conditional copy and cancel. The remaining operation is
    precisely $\cc_b\pp_b\cc_b^\dagger$, and the temporary
    register returns to zero. Permutation soundness therefore
    gives the first inequality.

    For the second, let $\ket{\psi_t}$ be the state immediately
    before call $t+1$ in the ordinary compressed injection
    experiment. Its database has size at most $t$, and it is
    reached using at most $2t$ component permutation calls.
    Hybrid over the injection calls, using ordinary calls before
    each replacement and sanitized calls afterwards. Every
    suffix is a contraction. Hence, if $\ket{\psi_q}$ and
    $\ket{\overline{\psi}_q}$ are the final joint vectors,
    \begin{align}
        \norm{\ket{\psi_q}-\ket{\overline{\psi}_q}}
        &\leq\sum_{t=0}^{q-1}
            \norm{(\ciO-\overline{\ciO})\ket{\psi_t}}\\
        &\leq\sum_{t=0}^{q-1}
            \left(
                \nu_{\mathsf{perm}}(2t,M)
                +4\sqrt{\frac{2N}{M-t}}
            \right)\\
        &\leq\sigma_{\mathsf{perm}}(2q,M)
            +4q\sqrt{\frac{2N}{M-q}},
    \end{align}
    by \Cref{lem:forward-sanitization} and monotonicity of
    $\nu_{\mathsf{perm}}$.

    To pass to reduced states, for any vectors $\ket{a}$ and
    $\ket{b}$ of norm at most one,
    \begin{align}
        \frac12\norm{
            \ket{a}\bra{a}-\ket{b}\bra{b}
        }_1
        &\leq
        \frac{\norm{\ket{a}}+\norm{\ket{b}}}{2}
            \norm{\ket{a}-\ket{b}}\\
        &\leq\norm{\ket{a}-\ket{b}}.
        \label{eq:subnormalized-vector-distance}
    \end{align}
    Apply this inequality, trace out the database, and combine
    with the first soundness bound by the triangle inequality.
\end{proof}

In particular, when $q\leq M/2$ and the quantitative permutation
bounds apply, the sanitized expanding-oracle error is
\begin{align}
    O\left(\frac{q^2}{M^{1/2}}+q\sqrt{\frac{N}{M}}\right).
\end{align}

\section{Welded trees}
\label{sec:welded-trees}

A welded tree multigraph is defined by a vertex set $V$, edges $E$, and a coloring $C$. We use the two binary trees of \cite{childs03weld}, joined by the independent colored matchings below in place of a single cycle, with a proper three-edge-coloring. Let $\mathbf T$ be the set of such multigraphs.

\subsection{Constructing welded trees}
\label{sec:welded-tree-defs}

We construct a $3$-colored welded tree in correspondence with a \emph{labeling} injection $P$ and \emph{weld} matchings $W_r, W_b, W_g$. We begin with two trees of depth $n$, where we label such vertices by $V=\{\underline a_b, \overline a_b \, : \, b \in \{0,\dots,n\}, a \in \{0, \dots, 2^{b}-1\}\}$, and add edges for every $b \in \{0, \dots, n-1\}$, $a \in \{0, \dots, 2^b-1\}$, add edges $\underline E_T = \{\underline a_b, \underline {(2a)}_{b+1}\}, \{\underline a_b, \underline{(2a+1)}_{b+1}\}$, and similarly $\overline E_T = \{\overline{a}_b, \overline{(2a)}_{b+1}\}, \{\overline{a}_b, \overline{(2a+1)}_{b+1}\}$.

We refer to the underlined vertices as left vertices, and the overlined vertices as right vertices, imagining the tree laid out left-to-right. In this picture, internal vertices have a top and a bottom child. The set ${\mathbf L} \coloneqq \underline {\mathbf L} \sqcup \overline {\mathbf L}$ denotes the set of leaves, with $\underline {\mathbf L} \coloneqq \{\underline a_n \, : \, a \in \{0, \dots, 2^{n}-1\}\}$ the left vertices and $\overline {\mathbf L} \coloneqq \{\overline a_n \, : \, a \in \{0, \dots, 2^n-1\}\}$ the right leaf vertices. We $3$-color these trees by taking, say, blue as the excluded color at the roots $\underline 0_0, \overline 0_0$, which fixes the coloring up to the symmetries of the tree. For specificity we place the lesser color as the lower child under the ordering $r \prec b \prec g$. For a given leaf vertex $v \in \mathbf L$, we say that the \emph{color} of $v$ is the color of the edge incident to $v$. Let $\mathbf L_c \subset \mathbf L$ denote the subset of leaves of a given color $c$, and similarly for $\underline{\mathbf L}_c$ on the left and $\overline{\mathbf L}_c$ on the right.

We take $W_r : \underline{\mathbf L}_b \sqcup \underline{\mathbf L}_g \hookrightarrow \overline{\mathbf L}_b \sqcup \overline{\mathbf L}_g$ to be a bijection between the non-red left leaf vertices and the non-red right leaf vertices, and similarly for $W_b, W_g$. These functions define colored weld edges $E_W \ \{(u, v, c) \, : \, c\in\cols,\ W_c(u)=v \}$. Parallel edges of different colors are allowed. We take $P : V \hookrightarrow \bit^{2n}$ as an injection from the set of vertices to a much larger label set.

\begin{definition}
    For $P, W$ defined above, we identify a corresponding welded tree $T_{P, W} \in \mathbf T$ defined by vertices $P(V)$ and edges obtained by applying $P$ to the endpoints of the fixed colored tree edges and the colored weld edges $E_W$. We let $\algo D$ denote the distribution of these multigraphs when $P$ is a uniform injection and the $W_c$ are independent uniform bijections on their specified sets, independent also of $P$.
    \label{def:weld-trees-def}
\end{definition}

\begin{remark}
    Sampling the $P$ and $W_c$ as uniformly random injections makes $T_{P, W}$ potentially a multi-graph.
    \label{rem:unif-W}
\end{remark}

\subsection{Compressed welded tree oracle}
\label{subsec:compressed-welded}

A query specifies a vertex label $x$ and an edge color $c$.
To answer it, we decode $x$ to a vertex, follow its $c$-colored
edge, and return the label of the resulting vertex. The edges
inside the two trees are fixed; only crossing the weld requires
a query to one of the $W_c$. In the compressed picture, the
expansion of the label function lets us decode by inspecting
the database directly. If $x$ has not been recorded, the
compressed oracle does nothing.

We use the ensemble in which $P$ is a uniform injection and the
three $W_c$ are independent uniform bijections, independent also
of $P$. Thus different colors may give parallel weld edges, as
in \Cref{rem:unif-W}. Write
\begin{align}
    K&\coloneqq |V|,
    &M&\coloneqq |\bit^{2n}|,
    &K_c&\coloneqq
        |\underline{\mathbf L}_{\neg c}|
        =|\overline{\mathbf L}_{\neg c}|,
\end{align}
where $\mathbf L_{\neg c}=\bigcup_{d\neq c}\mathbf L_d$,
with the analogous convention on each side.

For either complete or partial databases, define the neighbor
function
\begin{align}
    \mathcal N_c^W(u)
    &\coloneqq
    \begin{cases}
        v, & \{u,v\}\text{ is a fixed tree edge of color }c,\\
        W_c(u), & u\in\underline{\mathbf L}_{\neg c},\\
        W_c^{-1}(u), & u\in\overline{\mathbf L}_{\neg c},\\
        \bot, & \text{otherwise}.
    \end{cases}
\end{align}
An undefined database lookup returns $\bot$, and we set
$P(\bot)=\mathcal N_c^W(\bot)=\bot$. In particular, a root
has no blue neighbor. Define
\begin{align}
    r_{P,W}(x,c)
    &\coloneqq P\bigl(\mathcal N_c^W(P^{-1}(x))\bigr).
\end{align}
For complete $P,W$, the ordinary welded tree oracle returns
$r_{P,W}(x,c)$. For partial $P,W$, the corresponding database
lookup operator is
\begin{align}
    \wP\ket{x,c,z}\ket{P,W}
    &\coloneqq
    \begin{cases}
        \ket{x,c,z\oplus r_{P,W}(x,c)}\ket{P,W},
            &r_{P,W}(x,c)\neq\bot,\\
        \ket{x,c,z}\ket{P,W}, &\text{otherwise}.
    \end{cases}
    \label{eq:weld-database-lookup}
\end{align}
Thus a nonexistent input vertex, a missing edge, or an undefined
required entry gives no answer.

The compressed oracle exposes the needed weld edge and neighbor
label before applying this lookup, and recompresses afterwards.
We use superscripts to indicate the database on which a local
compression acts; all unmentioned registers are left unchanged.
For a fixed $x,c$, put $u=P^{-1}(x)$. Define edge compression by
\begin{align}
    \CC_E\ket{x,c}\ket{P,W}
    &\coloneqq
    \ket{x,c}\ket{P}\otimes
    \begin{cases}
        \cc_u^{W_c}\ket{W},
            &u\in\underline{\mathbf L}_{\neg c},\\
        \ff_{W_c}\cc_u^{W_c}\ff_{W_c}\ket{W},
            &u\in\overline{\mathbf L}_{\neg c},\\
        \ket{W}, &\text{otherwise}.
    \end{cases}
    \label{eq:weld-edge-compression}
\end{align}
In the second branch, compression acts on the flipped database,
so its input is a right leaf. Define label compression by
\begin{align}
    \CC_L\ket{x,c}\ket{P,W}
    &\coloneqq
    \ket{x,c}\otimes
    \begin{cases}
        (\cc_v^P\otimes\Id_W)\ket{P,W},
            &u\neq\bot,\ v=\mathcal N_c^W(u)\neq\bot,\\
        \ket{P,W}, &\text{otherwise}.
    \end{cases}
    \label{eq:weld-label-compression}
\end{align}
These controls are preserved by the operation they select.
In particular, $v\neq u$, and compressing $P(v)$ preserves the
source pair $P(u)=x$. Consequently, both $\CC_E$ and $\CC_L$
are unitary involutions.

\begin{definition}
    \label{def:compressed-welded-oracle}
    The welded compression operator and compressed welded tree
    oracle are
    \begin{align}
        \wC&\coloneqq\CC_E\CC_L,
        &\cwO&\coloneqq\wC \cdot \wP \cdot \wC^\dagger.
    \end{align}
\end{definition}

Reading from right to left, a query first decompresses the weld
edge, when needed, then the neighbor's label, applies the purified query $\wP$,
and recompresses in reverse order. If $x\notin\im(P)$, every
step is the identity. Each query increases the label database
and the selected weld database by at most one entry each.

It will be useful to sanitize both compression and decompression. Let
$\overline{\CC}_E$ and $\overline{\CC}_L$ be given by
\eqref{eq:weld-edge-compression} and
\eqref{eq:weld-label-compression}, replacing each local
$\cc$ by $\overline{\cc}$ from \Cref{def:valid}.
Identity branches remain identity. In particular, the active label
branch uses the forward compression and decompression of
$\overline{\ciO}$ from \Cref{def:comp-inj}. Set
\begin{align}
    \overline{\wC}
    &\coloneqq\overline{\CC}_E\overline{\CC}_L,
    &\overline{\cwO}
    &\coloneqq
        \overline{\wC} \cdot \wP \cdot \overline{\wC}^{\dagger}.
    \label{eq:sanitized-welded-oracle}
\end{align}
These are contractions, and sanitized states are left
subnormalized.

We next justify the simulation. If the algorithm is initially
given labels of specified vertices, generate those labels by
forward queries to $P$. In particular, we give the algorithm the label of the left vertex in our setup, though one could also consider giving both the left and right vertex, or further additional vertices. Use $\algo O_P$, $\ciO$, or
$\overline{\ciO}$ for this initialization in the ordinary,
compressed, or sanitized compressed experiment, respectively.
These queries are included in the query count below; the compressed
database starts empty before initialization, and the sanitized
experiment is not renormalized after initialization.

\begin{lemma}[Faithful simulation]
    Suppose an algorithm receives $s$ labels in this manner and
    makes $q$ welded tree queries. Let $Q=O(q+s)$ bound the
    number of component permutation calls in the evaluation
    circuit described below, including initialization, and
    assume $Q<M$. Let $\rho_{\algo A}^{(\wO)}$ and
    $\rho_{\algo A}^{(\cwO)}$ be its final reduced states
    with the ordinary and compressed welded tree oracles.
    Then
    \begin{align}
        \frac12\norm{
            \rho_{\algo A}^{(\wO)}-\rho_{\algo A}^{(\cwO)}
        }_1
        &\leq
        \varepsilon_{\mathsf{perm}}(Q,M)
        +\sum_{c\in\cols}\varepsilon_{\mathsf{perm}}(Q,K_c)
        +O\left(q\sqrt{\frac{K}{M-Q}}\right).
        \label{eq:weld-simulation-bound}
    \end{align}
    \label{lem:weld-simulation}
\end{lemma}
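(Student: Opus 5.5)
We will prove the bound with a hybrid argument through three intermediate experiments. The starting point is to write each welded tree query as an \emph{evaluation circuit} built from calls to the component oracles. Given $(x,c)$, the circuit first decodes $u=P^{-1}(x)$ with one inverse query to $\algo O_P$. It then computes $v=\mathcal N_c^W(u)$: for a fixed tree edge this is a classical function of $u$ and $c$, and for a weld edge it takes one forward or inverse query to $W_c$. Next it queries $P(v)$ forward and XORs the result into the answer register. Finally it uncomputes $v$ and $u$ with the reverse queries. Each welded query therefore uses $O(1)$ component calls, and the $s$ initialization queries add one call each, which gives $Q=O(q+s)$.

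\textbf{Experiment 1.} With independent uniform $P$ and $W_c$, this circuit reproduces $\wO$ exactly, since it realizes $r_{P,W}(x,c)$ with the no-answer convention on $\bot$.

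\textbf{Experiment 2.} Replace each $W_c$ by a compressed permutation oracle on $K_c$ points, and replace $P$ by $\ciO$ on the expanding injection $V\hookrightarrow\bit^{2n}$. \Cref{thm:comp-perm-soundness} applies to each $W_c$ and contributes $\varepsilon_{\mathsf{perm}}(Q,K_c)$. \Cref{cor:expanding-soundness} applies to $P$ and contributes $\varepsilon_{\mathsf{perm}}(Q,M)$. These replacements are done one oracle at a time. The soundness bounds hold with arbitrary workspace operations between calls, including calls to the other, independent oracles, so we may treat the remaining oracles as part of the algorithm. The triangle inequality then sums the errors.

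\textbf{Experiment 3.} Replace every inverse label query $\cc_1\pp_1\cc_1$ by the database lookup $\pp_1$, that is, pass from $\ciO$ to $\ciO^{\mathsf{db}}$. By \Cref{lem:inverse-sanitization}, each replacement costs at most $4\sqrt{2K/(M-Q)}$ on databases of size at most $Q$. There are $O(q)$ inverse calls (the decode step and its uncompute, once per welded query), so a hybrid over these calls, using unitarity of all remaining operations, gives the $O(q\sqrt{K/(M-Q)})$ term.

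\textbf{Identification step.} The remaining task is to show that the circuit of Experiment 3 equals $\cwO$ exactly on the algorithm's registers together with the databases. With inverse lookups, decoding never alters the $P$ database. The forward $W_c$ call, together with its uncompute, is implemented as $\cc\pp\cc^\dagger$, and the inner decompressions and compressions of the matching pairs cancel, exactly as in the proof of \Cref{cor:expanding-soundness}. Because the intervening operations are controlled on the decompressed data, the surviving compressions nest. The resulting operator is decompression of the weld entry at $u$ (the $\CC_E$ branch, flipped for right leaves), then decompression of $P$ at $v$ (the $\CC_L$ branch), then the lookup $\wP$, then recompression in reverse order. This is $\wC\cdot\wP\cdot\wC^\dagger$. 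When $x\notin\im(P)$, or when a required entry is undefined, every step is the identity, matching the $\bot$ convention.

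I expect this identification to be the main obstacle. It requires checking that the controls of the nested compressions, namely the values $u$ and $v$, are preserved by the operations they select. The paper already notes that the controls are preserved, that $v\neq u$, and that compressing $P(v)$ leaves the pair $P(u)=x$ intact. The remaining care is that the temporary registers for $u$ and $v$ return to zero. After that, tracing out the databases and applying the triangle inequality gives \eqref{eq:weld-simulation-bound}.
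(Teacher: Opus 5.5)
Your proposal is correct and follows the paper's proof essentially step for step: you use the same compute--lookup--uncompute evaluation circuit, replace the component oracles one at a time by compressed ones with the other components absorbed into the workspace, run a hybrid over the $2q$ inverse label calls using the inverse-sanitization bound, and identify the result with $\wC\cdot\wP\cdot\wC^\dagger$. The one point the paper states more sharply is why the inner weld compressions cancel: they act on $W_c$ controlled by the saved $u$, while the intervening forward label query acts on $P$ controlled by the saved $v$, so the two commute and the adjacent $\cc\,\cc^\dagger$ pair collapses.
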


\begin{proof}
    Implement an ordinary welded query coherently as follows:
    compute $u=P^{-1}(x)$ into a clean register; compute
    $v=\mathcal N_c^W(u)$ into another clean register; add
    $P(v)$ to the answer when $v\neq\bot$; and reverse the
    two computations of $v$ and $u$. Undefined branches do
    nothing. This uses two inverse injection queries, one
    forward injection query, and at most two queries to the
    selected $W_c$, with the inverse direction used at right
    leaves. Each injection query uses at most two queries to
    its extending permutation. Hence $Q=O(q+s)$.

    Replace the independent component permutation oracles by
    their compressed versions, one component at a time.
    Include the other components in the workspace for each
    hybrid. Permutation soundness and
    \Cref{cor:expanding-soundness} give the first two terms
    of \eqref{eq:weld-simulation-bound}.

    Next replace the two inverse label queries in each welded
    query by the direct lookup branch of
    $\ciO^{\mathsf{db}}$, retaining ordinary forward label
    queries. By \Cref{lem:inverse-sanitization}, the cost of each replacement in joint state norm is $O(\sqrt{K/(M-Q)})$. Hybrid over these $2q$ replacements to obtain
    the last term of \eqref{eq:weld-simulation-bound}.

    It remains to identify the resulting circuit with $\cwO$.
    The source vertex $u$ is now obtained by direct database
    lookup. On a weld branch, write the two weld queries as
    compression--lookup--compression. Their inner compression
    operators commute with the intervening label query:
    the weld compression acts on $W_c$ and is controlled by
    the saved $u$, whereas the label query acts on $P$ and is
    controlled by the saved $v$. The inner compressions
    therefore cancel. The two remaining weld lookups compute
    and uncompute $v$, leaving precisely the controls in
    $\CC_L$ and the lookup $\wP$. The outer compressions are
    $\CC_E$ and $\CC_E^\dagger$. This gives
    \begin{align}
        \CC_E\CC_L \cdot \wP \cdot \CC_L^\dagger\CC_E^\dagger
        &=\wC \cdot \wP \cdot \wC^\dagger.
    \end{align}
    The fixed-tree branches give the same expression with no
    weld compression. Finally, the source register uncomputes
    exactly because the source pair is preserved. All
    temporary registers return to zero, so the reduced state
    is exactly the state obtained using $\cwO$.
\end{proof}

\begin{corollary}[Sanitized simulation]
    \label{cor:sanitized-weld-simulation}
    With the notation of \Cref{lem:weld-simulation}, let
    $\rho_{\algo A}^{(\overline{\cwO})}$ be the subnormalized
    reduced state in the sanitized experiment. Then
    \begin{align}
        \frac12\norm{
            \rho_{\algo A}^{(\wO)}
            -\rho_{\algo A}^{(\overline{\cwO})}
        }_1
        &\leq
        \varepsilon_{\mathsf{perm}}(Q,M)
        +\sum_{c\in\cols}\varepsilon_{\mathsf{perm}}(Q,K_c)
        \nonumber\\
        &\quad+
        \sigma_{\mathsf{perm}}(Q,M)
        +\sum_{c\in\cols}\sigma_{\mathsf{perm}}(Q,K_c)
        +O\left(q\sqrt{\frac{K}{M-Q}}\right).
    \end{align}
\end{corollary}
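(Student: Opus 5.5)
The plan is to follow the pattern of \Cref{cor:expanding-soundness}: start from the compressed experiment of \Cref{lem:weld-simulation}, and compare joint (purified) state vectors rather than reduced states. The first two soundness terms and the $O(q\sqrt{K/(M-Q)})$ inverse-lookup term are already accounted for by \Cref{lem:weld-simulation}. It therefore suffices to bound the Euclidean distance between the final joint vector using $\cwO$ (together with $\ciO$ for initialization) and the one using $\overline{\cwO}$ (with $\overline{\ciO}$ for initialization). We then pass to reduced states via \eqref{eq:subnormalized-vector-distance} and combine with \Cref{lem:weld-simulation} by the triangle inequality.

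To bound this vector distance, I would expand each $\cwO$ back into the circuit of component permutation calls from the proof of \Cref{lem:weld-simulation}. Concretely, $\cwO = \CC_E\CC_L\,\wP\,\CC_L^\dagger\CC_E^\dagger$, where each outer local compression or decompression is exactly one compressed call to $P$'s extending permutation or to $W_c$, written in the compression--lookup--compression form. Sanitizing $\cwO$ then amounts to replacing these local $\cc$'s by $\overline{\cc}$'s. Since $\wP$ preserves the database, it commutes with the validity projectors, as in \eqref{eq:forward-injection-sanitization}. Hence sanitizing both sides has the same effect as inserting $\overline{\Pi}$ on the decompression side of each component call. The initialization queries are handled identically via \Cref{lem:forward-sanitization}.

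Next I would run the hybrid argument over this chosen subset of component calls. There are at most $Q$ calls to the label permutation on $[M]$ and at most $Q$ calls to each $W_c$ on a set of size $K_c$. Everything after a replaced call is a contraction. By the remark following \Cref{lem:valid-states}, the total cost of sanitizing any subset of at most $Q$ calls to one component is at most $\sigma_{\mathsf{perm}}(Q,L)$, and this holds with arbitrary workspace operations in between, including calls to the other independent components. Treating the other components as workspace, we get total error at most $\sigma_{\mathsf{perm}}(Q,M)+\sum_c\sigma_{\mathsf{perm}}(Q,K_c)$. If we keep ordinary compressed inverse label queries in place of $\ciO^{\mathsf{db}}$, the $\wP$ lookup matches the direct-lookup branch only up to the same $O(\sqrt{K/(M-Q)})$ per query, which is absorbed into the existing last term.

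The main obstacle is making sure that the states on which each $\overline{\Pi}$ is inserted really are ``reachable states with the next query prepared'' for the relevant component permutation, so that $\nu_{\mathsf{perm}}$ applies. The issue is that the circuit interleaves the welded structure: the weld decompression is controlled on the label-derived $u$, and the label decompression on $v$. I would handle this by arguing in the uncompressed-other-components picture. For the component being sanitized, all other operations, including the other compressed oracles and the controls, are workspace unitaries. The relevant state is then literally the state of an algorithm making ordinary compressed calls to that one permutation, which is exactly the setting of the reachable validity estimate. Monotonicity of $\nu_{\mathsf{perm}}$ then gives the sum bound.
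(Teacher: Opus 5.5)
Your proposal is correct and is essentially the paper's argument. You hybrid over component calls with ordinary prefixes and contractive suffixes to get the $\sigma_{\mathsf{perm}}$ terms, charge the direct inverse-label lookups via \Cref{lem:inverse-sanitization}, identify the sanitized circuit with $\overline{\cwO}$ because the projectors commute with $\wP$, and pass to reduced states; the paper just starts from the componentwise compressed circuit and applies soundness directly instead of routing through \Cref{lem:weld-simulation}. One point should be the default rather than an aside: reachable validity must be invoked on prefixes of the componentwise circuit with ordinary compressed inverse label calls, not on $\cwO$-experiment states, because the direct lookup $\pp_1$ acts on the label database and is not a workspace operation, and your remark about keeping ordinary inverse label queries (at an extra $O(q\sqrt{K/(M-Q)})$) is exactly what makes this work.
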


\begin{proof}
    Start with the ordinary componentwise compressed evaluation
    circuit, before making any direct database substitutions.
    Replace every label injection call by $\overline{\ciO}$,
    including the $s$ initialization calls, and every weld
    permutation call by $\overline{\cpO}$. Take hybrids with
    ordinary calls before the replaced call and modified calls
    afterwards. All suffixes are contractions, and all prefixes
    are ordinary compressed-permutation circuits.

    For the label calls, \Cref{lem:inverse-sanitization,lem:forward-sanitization}
    separate the two costs. There are $2q$ inverse label calls;
    replacing them by direct lookup costs
    $O(q\sqrt{K/(M-Q)})$ in total. At an active forward label
    call preceded by $r$ label-permutation calls, the loss is
    at most $\nu_{\mathsf{perm}}(r,M)$, by
    \eqref{eq:forward-sanitization-error} and reachable validity.
    Summing over forward calls, including initialization, gives
    at most $\sigma_{\mathsf{perm}}(Q,M)$. These estimates are
    applied to the ordinary prefixes, not to states that already
    contain direct database substitutions.

    Likewise, sanitizing the calls to each $W_c$ contributes at
    most $\sigma_{\mathsf{perm}}(Q,K_c)$. Thus the total joint
    vector distance from the ordinary compressed evaluation
    circuit to the modified circuit is at most
    \begin{align}
        \sigma_{\mathsf{perm}}(Q,M)
        +\sum_{c\in\cols}\sigma_{\mathsf{perm}}(Q,K_c)
        +O\left(q\sqrt{\frac{K}{M-Q}}\right).
    \end{align}

    The circuit again reduces to the definition of
    $\overline{\cwO}$ in \eqref{eq:sanitized-welded-oracle}. To see this for a weld branch, write
    a sanitized weld query as
    $\cc\Pi\pp\Pi\cc^\dagger$, where $\Pi$ projects onto
    a defined weld entry in the appropriate direction.
    Between the two weld queries, the inner ordinary
    compressions cancel as before. The remaining copies of
    $\Pi$ commute with the intervening label operation and
    the weld lookup, and combine by $\Pi^2=\Pi$.
    The forward label call is now exactly
    $\overline{\cc}_v^P\pp_0^P
      (\overline{\cc}_v^P)^\dagger$ on its active branch,
    and is the identity when $v=\bot$. What remains is therefore
    \begin{align}
        \overline{\CC}_E\overline{\CC}_L
        \cdot \wP \cdot \overline{\CC}_L^\dagger\overline{\CC}_E^\dagger
        &=\overline{\cwO}.
    \end{align}
    The fixed-tree and inactive branches reduce in the same
    way. Sanitized label operations still preserve the source
    pair, so the temporary registers return to zero on the
    surviving, subnormalized vector. Initialization also agrees
    by its definition above.

    Finally, use \eqref{eq:subnormalized-vector-distance}, trace
    out the databases, and add the component permutation
    soundness bounds by the triangle inequality. This proves
    the stated bound without renormalizing any sanitized state.
\end{proof}

Thus the simulation error consists of the component permutation
errors and the cost of guessing a fresh label. In particular,
whenever the quantitative permutation bounds above apply and
$Q\leq M/2$, the respective errors are
\begin{align}
    O\left(
        Q\left(M^{-1/2}+\sum_{c\in\cols}K_c^{-1/2}\right)
        +q\sqrt{K/M}
    \right),\\
    O\left(
        Q^2\left(M^{-1/2}+\sum_{c\in\cols}K_c^{-1/2}\right)
        +q\sqrt{K/M}
    \right).
\end{align}
The bounds in terms of $\varepsilon_{\mathsf{perm}}$ and
$\sigma_{\mathsf{perm}}$ keep explicit which permutation
soundness estimates are needed for the actual weld sizes $K_c$.

\subsection{Partial trees and color words}
\label{sec:partial-trees}

We can formally define a partial welded tree as follows.

\begin{definition}
    A \emph{partial welded tree} is specified by a tuple $(E, C, \ell)$, where $E \subset \underline{\mathbf L} \times \overline{\mathbf L} \times \cols$ is a set of colored weld edges, $C(u,v,c)=c$ is their edge coloring, and $\ell : V \rightarrow \bit^{2n} \cup \{\bot\}$ is a partial label function which is injective on non-$\bot$ symbols.
    \label{def:partial-welded}
\end{definition}

We take $T_{P, W}$ for $P$ and each $W_c$ as partial injections to represent a \emph{partial} welded tree, as in \Cref{def:partial-welded}, in the natural way. In particular, we define $\algo T_{P, W} = (E', C', \ell)$ as \begin{align*}
    E' &\coloneqq \{(u, v, c) \, : \, c \in\cols,\ W_c(u)=v\}, \\
    C'(u, v, c) &\coloneqq c, \\
    \ell &\coloneqq P.
\end{align*}

Intuitively, one should think of this tree as containing all of the edges not in the weld; these edges appear regardless of the weld and label values. With this in place, we can define color words and color-word states.

\begin{definition}
    \label{def:color-word}
    A \emph{color word} of length $t$ is a (potentially empty) string $w=(c_0, \dots, c_{t-1}) \in \cols^*$ where $c_i \neq c_{i+1}$ for all $i < t-1$ and $c_0 \neq \mathcolor{blue}{b}$. We denote by $\mathbf W$ the set of all color words.
\end{definition}

\begin{definition}
    \label{def:closure}
    For a set $S \subset \mathbf W$ of color words, we define its \emph{closure} $\operatorname{cl} S$ as the set of prefixes of $S$, \[
        \operatorname{cl} S \coloneqq \{w \, : \, \exists z \in \cols^* \text{ s.t. } w \Vert z \in S\}
    \]
\end{definition}

\begin{definition}
    A \emph{label path} is a tuple $(w, s)$ for a color word $w \in \mathbf W$ and label $s \in \bit^{2n}$. A \emph{label set} is a set of label paths with distinct color words.
    \label{def:label-path-set}
\end{definition}

We say that a partial welded tree $T=T_{P,W}$ is left-consistent with label path $(w, s)$ if, beginning at vertex $\underline 0_0$ and following the $w_0$-th colored edge, then $w_1$-th, and so on until exhausting $w$, results in a vertex labeled by $s$. We define right consistency in the same way, except beginning at vertex $\overline 0_0$. We say that such a tree is left-consistent with a label set if it is left-consistent with every element of the label set, and similarly for right-consistent.

\begin{definition}
    A \emph{block} is defined by a contiguous subdivision of size $\sqrt{N}$ of the leaves; or more formally a set of leaves of the form $\{(m\sqrt{N})_n \dots ((m+1)\sqrt(N) - 1)_n\} \subset B_m$ for an integer $m$, or similarly for right leaves, which forms a partition $\underline B_0, \dots, \underline B_{\sqrt{N}-1}$ of left and $\overline B_0, \dots, \overline B_{\sqrt{N}-1}$ of right leaves. We assume that $\sqrt{N} = 2^{n/2}$ is a power of $2$, so each block corresponds to a distinct subtree. We interchangeably refer to the corresponding subtrees as blocks, rather than just their leaves.
    \label{def:blocks}
\end{definition}

Let $A=(\bfA, \ell_\bfA)$ and $B=(\bfB, \ell_\bfB)$ be two pairs of color word set and label function, i.e.\ $\bfA \subset \mathbf W$ and $\ell_\bfA : \bfA \rightarrow \bit^{2n}$ where $\ell_\bfA$ is injective, and similarly for $B$. Further, we always require that $\im(\ell_{\mathbf A}) \cap \im(\ell_{\mathbf B}) = \emptyset$, i.e. the functions are jointly injective.

\begin{definition}
    A partial tree $T_{P, W}$ is \emph{minimal} for $(A, B)$ as defined above if: \begin{enumerate}[nosep]
        \item $T$ is left-consistent with $A$,
        \item $T$ is right-consistent with $B$,
        \item removing a single input-output pair from $P$ or $W$ breaks condition 1 or 2.
    \end{enumerate}
    \label{def:minimal-trees}
\end{definition}

Let $T_{P,W}$ be a partial tree minimal for $(A,B)$. Regard each word in $\mathbf A$ or $\mathbf B$ as a rooted path, starting at the left or right root, respectively. For such a path $w$, let $\rho_w$ be its starting root. Its \emph{block trace} $B_1^w,B_2^w,\ldots$ records the blocks visited by the path, counting each continuous stay in a block once. Here each block
includes its entire rooted subtree. Let $p_i^w$ be the prefix
of $w$ ending when its $i$th block visit begins.
We call $B_1^w$ the \emph{initial block} of $w$; it is determined by the first $n/2$ letters of $w$.

A \emph{block collision} occurs if two visits, possibly along
the same rooted word, satisfy
\[
    B_i^w=B_j^v
    \qquad\text{and}\qquad
    (\rho_w,p_i^w)\neq(\rho_v,p_j^v).
\]
Thus a visit shared through a common rooted prefix is not a
collision. However, returning to a block after leaving it,
entering another path's initial block, or entering the same
block after the paths have diverged all count as collisions.
This includes paths that diverge within a shared block and
subsequently enter the same other block, as well as collisions
between paths starting at different roots.
We say that $w$ has \emph{block overflow} if its path exits
a block through a non-weld edge.

\begin{definition}
    For a partial tree $T=T_{P, W}$ which is minimal for $(A, B)$, the tuple $T, (A, B)$ has a \emph{block violation} if it has either a block overflow or collision.
    \label{def:block-viol}
\end{definition}

We denote by $\mathbf M(A, B)$ the set of minimal trees for $(A, B)$ with no block violations.
With these notions in hand, we can define the reachable states $\ket{A, B} \in \algo H$, where $\algo H = \algo H_{P} \otimes \bigotimes_{c\in\cols}\algo H_{W_c}$ is the Hilbert space spanned by basis states labeled by such partial injections (here taking the identification of partial welded trees with partial permutations). This will be a weighted sum of consistent, minimal $P, W$ with no block violations. In particular, we choose a weight function $w(P, W)$ as \begin{align}
    w(P, W) &\propto \Pr[P, W],
\end{align}
where the r.h.s. denotes the probability that a random injection agrees with $P, W$ on their domain and the normalization of the $w(P, W)$ is chosen such that they sum to unity, $\sum_{P,W \in \mathbf M(A, B)} w(P, W) = 1$. Then we write

\begin{align}
    \ket{A,B} &\coloneqq \sum_{P,W \in \mathbf M(A, B)} \sqrt{w(P, W)} \cdot \ket{P, W}.
\end{align}

The span of these states will be referred to as the ``color word'' subspace, as below.

\begin{definition}
    \label{def:colorword}
    The \emph{color word} subspace $\algo H_{W}$ is defined as \begin{align}
        \algo H_W &\coloneqq \spn{\ket{A, B}}.
    \end{align}
\end{definition}

We define the size of such an $A,B$ pair in terms of the amount of information gained about the weld, or more precisely about the blocks. In particular, split $\mathbf A$ into subsets $\mathbf A_{p_1}, \dots, \mathbf A_{p_q}$ indexed by prefixes $p_i \in \mathbf W$ of size $|p_i|=n/2$, or equivalently prefixes indexing blocks. The set $\mathbf A_{p_i}$ contains all color words in $\mathbf A$ whose prefix is $p_i$ (including potentially $p_i$ itself), meaning all color words entering the block corresponding to $p_i$. Similarly, split $\mathbf B$ into subsets $\mathbf B_{r_1}, \dots, \mathbf B_{r_k}$ indexed by such prefixes. With this in place, we can define the \emph{core} of $A, B$ as follows.

\begin{definition}
The \emph{core} $C(A, B)$ of $(A,B)$ is a forest whose components are
defined as follows. For each nonempty subset $\mathbf A_{p_i}$ as defined above,
let $g_{p_i}$ be the longest common prefix of its words, and let
$a_{p_i}$ be the prefix of $g_{p_i}$ of length $\min\{|g_{p_i}|,n\}$.
Consider the rooted prefix tree
$\operatorname{cl}(\mathbf A_{p_i})$, with each vertex
$w\in\mathbf A_{p_i}$ marked by its label $\ell_{\mathbf A}(w)$.
Let $T_{p_i}$ be its unique minimal connected subtree containing
$a_{p_i}$ and all marked vertices. Define the components
associated with $\mathbf B$ in the same way.

The core is the disjoint union of these components, retaining
their edge colors and marked labels, but forgetting their
original rooting and the indexing prefixes.
\label{def:core}
\end{definition}

All words in $\mathbf A_{p_i}$ have the same initial block, the block
indexed by $p_i$, which we also call the initial block of the
component $T_{p_i}$. They enter it from above and share the ordinary
path from its root to $a_{p_i}$; we call this path the component's
\emph{ordinary stem}. It has at most $n/2$ edges, and its vertices
other than $a_{p_i}$ carry no marks, weld endpoints, or branches and
are not part of the core. Components associated with $\mathbf B$ are
treated in the same way.

We can now define the size of the $(A, B)$ pair as the size of its core,
\begin{align}
    \abs{(A, B)} &\coloneqq \abs{C(A, B)}.
    \label{def:size-colorword}
\end{align}
This then naturally defines leveled subspaces that are reachable after $t$ queries, as follows.

\begin{definition}
    The $t$-query subspaces $\algo H_t$ and $\algo H_{\leq t}$ are defined by \begin{align}
        \algo H_t &= \spn{\ket{A, B} \, : \, |(A, B)| = t}, & \algo H_{\leq t} &= \spn{\ket{A, B} \, : \, |(A, B)| \leq t},
    \end{align}
    and the corresponding projectors as $\Pi_t$ and $\Pi_{\leq t}$.
    \label{def:level-subspaces}
\end{definition}

Note that $\Pi_t$ and $\Pi_{t'}$ are perpendicular for $t \neq t'$.

\subsection{Clipping and independent placements}
\label{subsec:clipping-placements}

Fix a query $(x,c)$. To extend an $(A,B)$ pair, follow the
$c$-colored edge at its $x$-labeled word: cancel the last letter if it is $c$,
and otherwise append $c$. An existing rooted prefix retains its label,
or receives a fresh label if unlabeled. A new prefix receives a fresh
label. If $x$ is absent, or the requested global-root edge does not
exist, make no change. Returning to an existing prefix is not a new
block visit; reaching its embedded vertex by a different prefix is.

Let $\mathbf M(A,B)|_{x,c}$ contain the databases in $\mathbf M(A,B)$
admitting such an extension without a block violation. For a missing
weld entry, existence of one valid partner suffices. Define
\begin{equation}
 \ket{(A,B)|^{x,c}}
 \coloneqq\sum_{(P,W)\in\mathbf M(A,B)|_{x,c}}
       \sqrt{w(P,W)}\ket{P,W},
 \label{eq:clipped-colorword}
\end{equation}
using the original weights, without renormalizing. Write
$\Pi_{\le t}|^{x,c}$ for the projector onto the span of these vectors
with $|(A,B)|\le t$.

\begin{lemma}[Clipping]
\label{lem:clipped-subspace}
For every query $(x,c)$ and integer $t\ge0$,
\begin{equation}
 \norm{\Pi_{\le t}|^{x,c}-\Pi_{\le t}}
 \le \delta_t,\qquad
 \delta_t:=64t^2N^{-1/4}.
 \label{eq:clipping-bound}
\end{equation}
\end{lemma}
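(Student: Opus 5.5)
We reduce the operator-norm bound to a statement about one-sided inclusion plus a bound on vectors. Write $\ket{\phi_{A,B}}\coloneqq\ket{A,B}-\ket{(A,B)|^{x,c}}$; this is the restriction of $\ket{A,B}$ to databases in $\mathbf M(A,B)\setminus\mathbf M(A,B)|_{x,c}$, i.e.\ those minimal trees where the queried extension forces a block violation. The plan is:
\begin{enumerate}[nosep]
\item bound $\norm{\ket{\phi_{A,B}}}$ for each basis pair;
\item upgrade this to a bound for arbitrary linear combinations within each core sector;
\item use the standard fact that if every unit vector in one subspace is within $\delta$ of the other and vice versa, then the projectors differ by at most $O(\delta)$.
\end{enumerate}
Since cores of different type lie in orthogonal sectors (the core structure is determined by the database), it suffices to work within one core type of size $\le t$. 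We then sum or maximize over sectors, which costs nothing in operator norm.

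\emph{Single-state estimate.} Fix $(A,B)$ with $|(A,B)|\le t$. The weight $w(P,W)$ is the conditional law of the minimal database given the color words and no violations. Under it, the embedding of each core component into the trees is, up to the anchor, a uniformly random choice of block. Each weld entry is a uniform fresh leaf among $\Theta(N)$ eligible leaves, hence lands in a uniform block among $\sqrt N$. The extension at $(x,c)$ creates a violation in one of two ways.
\begin{itemize}[nosep]
\item \emph{Collision.} A new weld arrival lands in, or a new prefix reaches, one of the at most $O(t)$ occupied blocks. Conditioned on the rest, this has probability $O(t/\sqrt N)$.
\item \emph{Overflow.} The word climbs out of a block through its top after entering through the weld. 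Following a prescribed nonbacktracking color sequence upward from a weld entry reaches the block root only if the $n/2$ ancestor colors match. Those colors are determined by the uniformly random position of the entry leaf within its block, so the probability is $2^{-\Omega(n/2)}=O(N^{-1/4})$ per component. There is at most one such event per query, but the anchoring may make it correlate with up to $t$ choices.
\end{itemize}
Either way, $\norm{\ket{\phi_{A,B}}}^2=O(t\,N^{-1/4})$.

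\emph{Superpositions: the main obstacle.} The states $\ket{A,B}$ are not orthogonal: words can be re-rooted left/right, and different anchors share databases. A bound on basis vectors therefore does not directly bound $\norm{\sum_i\alpha_i\ket{\phi_i}}$ relative to $\norm{\sum_i\alpha_i\ket{A_i,B_i}}$. Here I would use the random core placement representation promised in \cref{sec:appendix}. Within a sector, identify the database space with placements of the $\le t$ components, each placed independently and uniformly; a color-word state fixes an anchor vertex in each component. Orthogonally decompose the span of anchor states by Fourier/averaging over the placement group of each component. Then show that the clipping map (multiplication by the indicator of non-violation) acts nearly diagonally in this decomposition. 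Concretely, the removed set is a union, over pairs of components and over the query component, of events each of measure $O(N^{-1/4})$ under the product placement measure. Each event depends on at most two components, so a Schur-test/Cauchy--Schwarz argument over the $O(t^2)$ pairs bounds the clipping error on any vector in the span by $O(t^2N^{-1/4})\cdot$ the norm of the coefficient vector in the orthogonal anchor basis. The anchor basis is nearly orthonormal, which absorbs the remaining slack into the constant $64$.

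\emph{Conclusion.} This yields $\norm{(\Id-\Pi_{\le t}|^{x,c})\Pi_{\le t}}\le\delta_t/2$ and the symmetric bound, using that each clipped vector is within $\delta_t/2$ of its unclipped partner. The projector difference is then bounded by $\delta_t$. I expect the near-orthogonal decomposition in the superposition step to be the crux; the single-state probability estimates are routine.
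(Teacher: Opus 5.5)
Your overall architecture matches the paper's: orthogonal sectors, the placement representation, and a two-sided distance bound. However, there is a genuine gap in the superposition step, because clipping is not multiplication by one small-measure indicator on the span. Whether a database is removed depends on the description, not only on the database. Take a color-word state whose query merely returns to an existing rooted prefix: in the entry case, $B_u$ is that state's own initial block; in the overflow case, $r$ is joined to the anchor by ordinary edges. Such a state is left unchanged, yet conditioned on such a \emph{safe} anchor the violation indicator $E$ can have probability one (for entry it is identically $1$). Safe and unsafe states share database supports. So your claim that every removed event has measure $O(N^{-1/4})$ under the product placement law is false on part of $\mathcal K_\sigma$, $\norm{Ev}$ is not small for all $v\in\mathcal H_\sigma$, and the clipped span is not $(I-E)\mathcal H_\sigma$.

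The missing idea is to split off the safe span. Let $\mathcal J\subseteq\mathcal K_\sigma$ be spanned by safe anchors: anchors in $B_u$; or $L^2(X_{j,r})$ for a free core; or $\{0\}$ for a forced core. Check $E\mathcal J\subseteq\mathcal J$, so that $\mathcal S=M_\sigma\mathcal J$ is $E$-invariant and $E$ commutes with $\Pi_{\mathcal S}$. Then prove $\norm{Eg}\le\sqrt8\,tN^{-1/4}\norm{g}$ only for $g\in\mathcal K_\sigma\cap\mathcal J^\perp$. This gives the orthogonal splittings $\mathcal H_\sigma=\mathcal S\oplus\mathcal U$ and $\mathcal H_\sigma^{x,c}=\mathcal S\oplus(I-E)\mathcal U$, with $\norm{Eu}\le2\sqrt8\,tN^{-1/4}\norm{u}$ on $\mathcal U$. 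The same splitting is what makes your reverse direction work. An ``unclipped partner'' of a clipped vector is not well defined from non-unique generator expansions, but here $\norm{a+(I-E)u}^2=\norm{a}^2+\norm{u}^2-\norm{Eu}^2$ controls it.

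The bounded-coefficient step also fails as stated, because the anchor basis is far from orthonormal. Two anchors of one core joined by ordinary edges generate the same space, $L^2(X_{i,a})=L^2(X_{i,b})$. Products of anchors from two cores lying in the same block span a subspace $\mathcal B$ annihilated by $G_\sigma$, so $M_\sigma$ has a large kernel on $\mathcal K_\sigma$. Two things are needed:
\begin{enumerate}[nosep]
\item An exact orthogonal decomposition $\mathcal K_i=\bigoplus_a\mathcal V_{i,a}$, with $\mathcal V_{i,a}=\{g\in L^2(X_{i,a}):\mathbb E[g\mid\gamma_a]=0\}$. This comes from conditional independence of the placement law along the core tree; there is no group to average over. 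It gives an isometric right inverse of the anchor-sum map, and hence turns per-anchor conditional probabilities into operator bounds $\norm{\ind_B F_{i,*}}^2\le\ell p$.
\item Quotienting by $\mathcal B$ and showing $\norm{M_\sigma f}\ge\norm{f}/2$ on $\mathcal K_\sigma\cap\mathcal B^\perp$, so that every $h\in\mathcal H_\sigma$ has a preimage with $\norm{f}\le2\norm{h}$.
\end{enumerate}
Pairwise collisions among existing cores belong to this representation step, not to the clipping event. Also, a fresh weld arrival never triggers clipping, since only one valid partner is required. The only clipping events are entry into an occupied block from above and overflow, and the overflow probability is $O(N^{-1/2})$ per weld, since each block has one matching arrival leaf.
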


We introduce the representation used both in its proof and in
\Cref{lem:subspace-preservation}. Throughout, $N=2^n$, a block has
$\sqrt N$ leaves, $M=N^2$ is the label alphabet size, and $K_c$ is
the matching size from \Cref{subsec:compressed-welded}. In particular,
$N/2\le K_c\le N$. The unnormalized extension weight is
\begin{equation}
 \omega(P,W)\coloneqq\frac1{(M)_{|P|}}
                \prod_{c\in\cols}\frac1{(K_c)_{|W_c|}},
 \qquad (a)_j\coloneqq a(a-1)\cdots(a-j+1).
 \label{eq:placement-weight}
\end{equation}
The weights $w$ defining each color-word state are proportional to
$\omega$ on its support.

\paragraph{Fixing the core.}
The database determines its core: in each block, join its labeled
vertices and stored weld endpoints by their minimal ordinary subtree,
then join these subtrees by the stored welds. On $\mathbf M(A,B)$ this
is precisely $C(A,B)$, with embedded positions forgotten. Each component
contains a mark and has at most one unmarked leaf. If present, this leaf lies in the initial block and its only core edge
is a weld: the words descend to it from above and then cross that
weld. We call it the component's \emph{entry leaf}. These facts are proved in
\Cref{app:core-representation}.

A sector $\sigma$ fixes the labels outside the blocks and the abstract
marked, edge-colored core forest $C_1,\ldots,C_q$. It does not fix which
core edges are welds. Let $\mathcal H_\sigma$ be its color-word span.
Different sectors have disjoint database supports, and
\begin{equation}
 \mathcal H_{\le t}
 =\bigoplus_{\sigma:\,\sum_i|C_i|\le t}\mathcal H_\sigma.
 \label{eq:placement-sectors}
\end{equation}
The same decomposition holds for the clipped spaces. Labels above the
blocks are fixed within a sector and do not contribute to its size.

\paragraph{Independent placements.}
Let $\Omega$ be the embedded vertices in blocks, including block roots.
For $c\in\cols$ put
\begin{equation}
 Q_c(u,v) \coloneqq
 \begin{cases}
  1&uv\text{ is an ordinary $c$-edge inside a block},\\
  K_c^{-1}&u,v\text{ are eligible leaves on opposite sides},\\
  0&\text{otherwise}.
 \end{cases}
 \label{eq:placement-kernel}
\end{equation}
For a placement $\phi_i:C_i\to\Omega$, set
\begin{equation}
 \mu_i(\phi_i)=\frac{\theta_i(\phi_i)}{Z_i}
       \prod_{ab\in E(C_i)}Q_{c(ab)}(\phi_i(a),\phi_i(b)),
 \qquad X_{i,a}=\phi_i(a),
 \label{eq:placement-law}
\end{equation}
where $Z_i$ normalizes the measure. If all leaves are marked, set
$\theta_i=1$ and allow any core vertex as an anchor. Otherwise let
$a_i$ be its entry leaf: require its image to be eligible
for its incident edge, and allow only $a_i$ as an anchor. Fixing an
anchor means fixing its assigned position. This law permits block
collisions but excludes ordinary exits through block roots.

For a random variable $X$, let $L^2(X)$ be the functions of $X$, with
the inner product inherited from the indicated probability measure.
Writing $I_i=C_i$ in the first case and $I_i=\{a_i\}$ in the second,
define
\begin{equation}
 \mathcal K_i\coloneqq\sum_{a\in I_i}L^2(X_{i,a}),\qquad
 \mu\coloneqq\bigotimes_i\mu_i,\qquad
 \mathcal K_\sigma\coloneqq\bigotimes_i\mathcal K_i.
 \label{eq:placement-space}
\end{equation}
These are spaces of arbitrary superpositions of anchor functions.

\paragraph{Returning to databases.}
A placement is good when distinct ordinary components, within or
between cores, occupy distinct blocks. Let $G_\sigma$ multiply by this
indicator. A good placement gives a unique database $D_\phi=(P_\phi,W_\phi)$.
If $e_c(\phi)$ counts its weld edges of color $c$, put
\begin{align}
 S_{0,\sigma}(\phi)
 &\coloneqq\prod_c\left(\frac{K_c^{e_c(\phi)}}{(K_c)_{e_c(\phi)}}\right)^{1/2},
 \nonumber\\
 S_\sigma&\coloneqq\frac{S_{0,\sigma}}{\max_{\phi\text{ good}}S_{0,\sigma}}
 \quad\text{on good placements},\qquad M_\sigma=G_\sigma S_\sigma.
 \label{eq:placement-map}
\end{align}
Set $S_\sigma=1$ off the good set. Identify a function supported on
good placements isometrically with
$\sum_\phi\sqrt{\mu(\phi)}f(\phi)\ket{D_\phi}$.

\begin{lemma}[Placement representation]
\label{lem:placement-lift}
The sector span is exactly $\mathcal H_\sigma=M_\sigma\mathcal K_\sigma$.
For each core, every $f_i\in\mathcal K_i$ has a decomposition
\begin{equation}
 f_i=\sum_{a\in I_i}f_{i,a},\qquad
 f_{i,a}\in L^2(X_{i,a}),\qquad
 \sum_a\norm{f_{i,a}}^2\le\norm{f_i}^2.
 \label{eq:anchor-decomposition}
\end{equation}
Suppose $\sum_i|C_i|\le t$ and $t^2N^{-1/4}\le1/64$.
There is a subspace $\mathcal K_{\sigma,0}\subseteq\mathcal K_\sigma$
such that every $h\in\mathcal H_\sigma$ has an 
$f$ satisfying
\begin{equation}
 M_\sigma f=h,\qquad \norm f\le2\norm h,\qquad
 \norm{(I-M_\sigma)f}\le\eta_t\norm f,
 \quad
 \eta_t=4t^2N^{-1/4}+t^2/N,
 \label{eq:placement-lift}
\end{equation}
and $\norm{I-S_\sigma}\le t^2/N$ on good placements.
All statements allow arbitrary fixed labels outside the blocks and tensoring with
untouched registers.
\end{lemma}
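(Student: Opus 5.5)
We prove the statement sector by sector. The identity $\mathcal H_\sigma=M_\sigma\mathcal K_\sigma$ and the lifting bound are then derived from a near-isometry estimate for $M_\sigma$ on $\mathcal K_\sigma$.

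\textbf{Step 1: the span identity.} Fix $\sigma$ and a color-word pair $(A,B)$ in it. Its words, with the database positions forgotten, determine for each core $C_i$ an anchor: the position of the block-root end of its ordinary stem, i.e.\ the entry leaf or a marked vertex. We then show that $\ket{A,B}$ equals $M_\sigma$ applied to a product of indicator functions $\prod_i \mathbb 1[X_{i,a_i}=p_i]$, up to a scalar. The argument has three parts.
\begin{itemize}
\item Given the anchors, minimal realizations without block violations correspond exactly to good placements. Overflow is excluded by the kernel $Q_c$ together with the eligibility constraint $\theta_i$. Collisions are excluded by $G_\sigma$.
\item The weight $w(P,W)\propto\omega$ from \eqref{eq:placement-weight} factorizes as $(M)_{|P|}^{-1}\prod_c (K_c)_{e_c}^{-1}$. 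The $\mu$-measure instead assigns $K_c^{-e_c}$ per weld. The ratio of these two is exactly $S_{0,\sigma}^2$, so the amplitudes match after multiplying by $S_\sigma$. The constant normalization is absorbed into the scalar.
\item Conversely, every anchor indicator lies in $\mathcal K_i$. Since $L^2(X_{i,a})$ is spanned by indicators of $X_{i,a}$, the two spans coincide.
\end{itemize}
The facts needed about cores (at most one unmarked leaf, which is the entry leaf) are taken from \Cref{app:core-representation}.

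\textbf{Step 2: the anchor decomposition.} For \eqref{eq:anchor-decomposition}, note that each $L^2(X_{i,a})$ contains the constants. Write $L^2(X_{i,a})=\mathbb C1\oplus L^2_0(X_{i,a})$. If $\mathcal I_i$ is a single anchor, the decomposition is trivial. Otherwise we need the spaces $L^2_0(X_{i,a})$ for distinct vertices $a$ to be nearly orthogonal, or better, mutually orthogonal under $\mu_i$.

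The structure of $\mu_i$ should give this. Fixing one vertex's position determines the others only through the ordinary edges inside a block, which are deterministic given the position. Across a weld, the kernel $K_c^{-1}$ is uniform, so the position on the far side of a weld is independent of the near side. Hence for $a,b$ in different weld-separated pieces, $X_{i,a}$ and $X_{i,b}$ are independent, and $L^2_0(X_{i,a})\perp L^2_0(X_{i,b})$. For $a,b$ in the same piece they generate the same $\sigma$-algebra, up to boundary effects of the root constraint.

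We therefore group anchors by pieces, choose one representative per piece, and put the constant into a single term. The decomposition then follows by orthogonality. The edge cases, where a piece touches the block root or where eligibility constraints break exact independence, are the first thing we would check carefully.

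\textbf{Step 3: the lifting bound.} To get \eqref{eq:placement-lift}, we bound $\norm{(I-G_\sigma)f}$ for $f\in\mathcal K_\sigma$. By Step 2, it suffices to bound this for $f=\bigotimes_i f_{i,a_i}$ with one anchor fixed per core. The probability of a block collision is then a union bound over at most $t^2$ pairs of ordinary components. Each pair has one component whose block is uniform among about $\sqrt N$ blocks because of an intervening weld, or else both are anchored and distinct by construction. This heuristically gives $t^2N^{-1/2}$ per squared norm.

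The difficulty is that the test function $f_{i,a}$ may be concentrated, so this averaging must be carried out conditionally on $X_{i,a}$ with the other pieces still random. Losing a square root to Cauchy--Schwarz over the anchor sum yields the stated $4t^2N^{-1/4}$. The bound $\norm{I-S_\sigma}\le t^2/N$ follows from
\[
K^e/(K)_e\le (1-e/K)^{-e}\le 1+O(e^2/N),
\]
using $K_c\ge N/2$.

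Combining the two bounds, $\norm{(I-M_\sigma)f}\le\eta_t\norm f$. Since $\eta_t\le 1/16+o(1)$, we have $\norm{M_\sigma f}\ge\tfrac12\norm f$. Taking $\mathcal K_{\sigma,0}$ to be the orthogonal complement of $\ker M_\sigma$, the map $M_\sigma$ is invertible from $\mathcal K_{\sigma,0}$ onto $\mathcal H_\sigma$ with inverse norm at most $2$, which yields the required $f$.

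\textbf{Main obstacle.} The main obstacle is Step 3's conditional collision bound, uniformly over superpositions in each $L^2(X_{i,a})$. That is where the exponent $1/4$ and the anchor orthogonality from Step 2 are actually needed.
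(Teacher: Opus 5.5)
There is a genuine gap in Step 3. You claim $\norm{(I-M_\sigma)f}\le\eta_t\norm f$ for every $f\in\mathcal K_\sigma$, because two anchored components are ``distinct by construction''. They are not. $\mathcal K_\sigma=\bigotimes_i\mathcal K_i$ is a tensor product under the product law $\mu$, which allows block collisions. So for $q\ge2$ it contains $b_{i,a,v}\otimes b_{j,b,w}$ with $v,w$ in the same block. On the whole support of such a vector, cores $i$ and $j$ share a block, so $G_\sigma$ annihilates it. Hence $\ker M_\sigma\cap\mathcal K_\sigma\neq\{0\}$, contradicting the bound $\norm{M_\sigma f}\ge\tfrac12\norm f$ you derive on all of $\mathcal K_\sigma$.

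The missing idea is to split off this forced-collision subspace. Let $\mathcal B_{ij}$ be the span of same-block anchor products, and $\mathcal B=\sum_{i<j}\mathcal B_{ij}\otimes\bigotimes_{k\ne i,j}\mathcal K_k$. Then $M_\sigma\mathcal B=0$, so $M_\sigma$ maps $\mathcal K_\sigma\cap\mathcal B^\perp$ onto $\mathcal H_\sigma$, and the estimates need only be proved there. Your choice of the complement of $\ker M_\sigma$ inside $\mathcal K_\sigma$ lies in this space, so it can be rescued, but only by going through $\mathcal B$.

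On $\mathcal B^\perp$ your reduction to one anchor per core also fails: a vector orthogonal to $\mathcal B_{ij}$ is generally not a combination of different-block anchor products. The paper handles this as follows.
\begin{itemize}
\item Lift $g$ through an isometric right inverse of $F_i\otimes F_j$ and project: $g=(I-\Pi_{\mathcal B_{ij}})F_{\mathrm{diff}}\alpha$ with $\norm\alpha\le\norm g$.
\item Since $\mathcal B_{ij}$ is supported on the collision event $J_{ij}$, $\ind_{J_{ij}}$ commutes with $\Pi_{\mathcal B_{ij}}$. So only $\norm{\ind_{J_{ij}}F_{\mathrm{diff}}}$ needs bounding.
\item That bound follows from Cauchy--Schwarz over at most $t_it_j$ anchor pairs, using the conditional collision probability $8t_it_j/\sqrt N$ for anchors in different blocks.
\item This probability is obtained after dividing by the survival probability of the explorations (at least $31/32$). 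Your heuristic ignores this conditioning.
\end{itemize}

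Step 2 is also incorrect as stated, and its repair is exactly the right inverse needed above. Positions across a weld are not independent: $Q_c$ is uniform only over eligible leaves on the \emph{opposite} side. More importantly, the sector does not fix which abstract edges are welds. The $c$-edge at $a$ is ordinary or a weld depending on where $X_{i,a}$ lands, so your ``weld-separated pieces'' are not well defined. For adjacent $a,b$, $X_{i,b}$ is a function of $X_{i,a}$ on some placements and conditionally uniform on others. Splitting off constants, $\mathbb C1\oplus L^2_0(X_{i,a})$, therefore does not give orthogonal summands.

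The paper instead roots $C_i$ at $\rho$. For $a\ne\rho$, it lets $\gamma_a$ be the component of the bipartite kernel $Q_{c(ap(a))}$ containing $(X_{i,a},X_{i,p(a)})$; this is either an ordinary singleton or one of the two weld directions. Summing out the subtree below $a$ gives $\mathbb E[g\mid X_{i,v},\,v\notin V_a]=\mathbb E[g\mid\gamma_a]$ for $g\in L^2(X_{i,a})$. So the spaces $\{g\in L^2(X_{i,a}):\mathbb E[g\mid\gamma_a]=0\}$, together with $L^2(X_{i,\rho})$, are mutually orthogonal, and peeling toward the root shows they sum to $\mathcal K_i$.

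Two smaller points:
\begin{itemize}
\item Step 1 must also show $M_\sigma\mathcal K_\sigma\subseteq\mathcal H_\sigma$, i.e.\ that an arbitrary allowed anchor of a free core gives a color-word state. It does, since the anchor fixes its ordinary component and hence the entry vertex. Note also that the stem end need not be marked; it can be an unmarked branch vertex.
\item Your bound via $(1-e/K_c)^{-e}$ loses a constant factor. The sharper estimate $\log S_{0,\sigma}\le\sum_c\sum_{j<e_c}j/K_c\le t^2/N$ gives the stated constant.
\end{itemize}
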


The discarded subspace consists of combinations of anchors that force
two cores into the same block. 
\Cref{sec:appendix} proves both lemmas. We use the same placement model
there for two events: hitting a specified occupied block and reaching
a block root after crossing a weld.

\section{Main lower bound}
\label{sec:main_lower}

In this section, we prove a query lower bound for path-finding in the welded tree. In particular, we show that queries to the compressed welded oracle approximately maintain the color-word subspace, and further that this subspace almost certainly does not contain a left-to-right path.

To begin, we modify weld compression so that its non-bot vector excludes opposite
endpoints that are already labeled in $P$. We denote the resulting sanitized compression by $\widehat{\overline{\wC}}$, which by \Cref{lem:restricted-compression} has difference $O(t^{1/2} N^{-1/2})$ with the standard sanitized $\overline{\wC}$ on databases of core size at most $t$. It will also be useful to define a weld compression operator which avoids any non-empty block, where a block is empty if it is incident on a weld edge or has any marked vertex. We denote this weld compression operator by $\overline{\wC}^{\,0}$.

\begin{lemma}[Subspace preservation]
\label{lem:subspace-preservation}
Let $d=n/2+1$ and $T=t+d$.
Then, uniformly in the query registers,
\begin{align}
 \norm{\Pi_{\le T}^{\perp}
       {\widehat{\overline{\wC}}}^\dagger\Pi_{\le t}}
 &=O(T^2N^{-1/4}),\\
 \norm{\Pi_{\le T}^{\perp}
       {\widehat{\overline{\wC}}}\Pi_{\le t}}
 &=O(T^2N^{-1/4}).
\end{align}
\end{lemma}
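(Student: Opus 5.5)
The plan is to prove both bounds sector by sector, using the decomposition \eqref{eq:placement-sectors} and the placement representation of \Cref{lem:placement-lift}. The query registers $(x,c)$ act only as controls, and every operator involved is block diagonal in them. It therefore suffices to fix $(x,c)$ and prove the bounds with a constant independent of $(x,c)$, taking the maximum over these orthogonal blocks at the end. For the decompression bound I would first replace $\Pi_{\le t}$ by the clipped projector $\Pi_{\le t}|^{x,c}$. By \Cref{lem:clipped-subspace} this costs at most $\delta_t=64t^2N^{-1/4}=O(T^2N^{-1/4})$, since ${\widehat{\overline{\wC}}}^\dagger$ is a contraction. The remaining claim is that ${\widehat{\overline{\wC}}}^\dagger$ maps $\mathcal H_{\le t}|^{x,c}$ into $\mathcal H_{\le T}$ up to $O(T^2N^{-1/4})$.

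On a single clipped vector $\ket{(A,B)|^{x,c}}$ I would compute decompression explicitly, in two cases.
\begin{itemize}
\item If the $c$-neighbour of the $x$-labeled vertex is an ordinary tree edge, only $\overline{\CC}_L$ acts. It either finds the label already recorded or produces a uniform superposition over fresh labels. Since the factor $1/\sqrt{M-|P|}$ is exactly the ratio of the weights $\omega$ in \eqref{eq:placement-weight}, the result is the color-word state $\ket{A',B'}$ of the extended word.
\item If the neighbour lies across the weld, $\overline{\CC}_E$ produces a uniform superposition over the free partners with weight $1/\sqrt{K_c-|W_c|}$, which again matches $\omega$. The hat already removes partners carrying labels. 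The partners that land in occupied blocks (at most $t\sqrt N$ leaves) or that are otherwise invalid would be removed by applying \Cref{lem:restricted-compression} with $r=O(t\sqrt N)$, at cost $2\sqrt{r/(K_c-t)}=O(t^{1/2}N^{-1/4})$.
\end{itemize}
Clipping guarantees that the surviving extensions have no block violation, so they lie in $\mathbf M(A',B')$, and ``at least one valid partner'' is exactly what makes the sum nonempty. The core grows by at most $d=n/2+1$ vertices: one new weld edge plus at most an ordinary stem of length $n/2$ joining a fresh block, or a single ordinary step. This gives $|(A',B')|\le T$.

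The step I expect to be the main obstacle is passing from basis color-word states to arbitrary superpositions, because the $\ket{A,B}$ are not orthogonal. Here I would use \Cref{lem:placement-lift}. Given $h\in\mathcal H_\sigma$ with $\sigma$ of size at most $t$, lift it to $f\in\mathcal K_\sigma$ with $M_\sigma f=h$, $\norm f\le2\norm h$ and $\norm{(I-M_\sigma)f}\le\eta_t\norm f$. Then split $f$ anchorwise using \eqref{eq:anchor-decomposition}. In the placement picture, decompression is extension of the anchored core by a fixed colored edge: either a deterministic ordinary step, or a weld step with kernel $Q_c$. This maps $L^2(X_{i,a})$ into the corresponding anchor space of the extended sector. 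The only discrepancies are the good-set indicator $G_\sigma$ and the rescaling $S_\sigma$. These are controlled by $\eta_t$ and by $\norm{I-S_\sigma}\le t^2/N$, and both are $O(T^2N^{-1/4})$. The factor $2$ from the lift only changes constants.

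For the compression bound I would use that $\widehat{\overline{\wC}}$ swaps $\ket I$ with $\ket{+^S_{x,I}}$ on each block, and split a color-word state according to whether the relevant entry (weld edge or neighbour label) is recorded.
\begin{itemize}
\item On the unrecorded component, compression acts like decompression, and the argument above applies.
\item On the recorded component, compression projects onto the uniform superposition over that entry and forgets it. The resulting vector is a clipped color-word state for the shortened word set $(A'',B'')$, whose core is no larger. Clipping \Cref{lem:clipped-subspace} then returns it to $\mathcal H_{\le T}$ with error $\delta_T$.
\end{itemize}
The subtle case, flagged in the introduction's footnote, is forgetting an entry leaf: the retained non-collision with that block is a residual constraint. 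I would treat it with the placement model's event of hitting a specified occupied block, bounding its weight by $O(tN^{-1/2})$ per anchor. Summing all contributions gives $O(T^2N^{-1/4})$ for both bounds.
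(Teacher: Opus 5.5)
Your fresh-extension step is sound. Removing occupied-block partners at cost $O(\sqrt T N^{-1/4})$ matches the paper's $\overline{\wC}^{\,0}$, and extension gives color-word states exactly. The gap is that you never treat the case the lemma turns on: the queried neighbour is the \emph{entry leaf} of its component. This is an unlabeled leaf whose only core edge is the recorded $c$-weld to the vertex carrying $x$. There the weld entry is present, so sanitized decompression acts on it as $I-E_0E_0^\dagger$ before labelling, and sanitized compression acts as $I-E_0E_0^\dagger+E_0^\dagger$.

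Your description of compression on a recorded entry (``project onto the uniform superposition and forget it'') keeps only the $E_0^\dagger$ term and drops the retained $I-E_0E_0^\dagger$ term. Also, sanitized compression annihilates an unrecorded entry; it does not act like decompression.

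This case is forced even from labelled inputs. When label compression $L^\dagger$ removes a terminal mark reached across a weld, the weld stays recorded. So $\xi=L^\dagger\psi$ has an unmarked core leaf and is in general \emph{not} a clipped color-word state of a shortened word set. It lies only in $A\mathcal G+\mathcal R$. Here $A\mathcal G$ consists of fresh descriptions with the weld re-exposed uniformly, which are not color-word vectors themselves. $\mathcal R$ is the re-rooted span, in which the neighbour is the entry leaf of a component entered from the opposite root.

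Closing this needs two estimates your plan lacks.

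\emph{Erasure.} You need a space $\mathcal F$ of fresh color-word vectors with $A\mathcal F\subseteq\mathcal R$ and $\operatorname{dist}(A^\dagger r,\mathcal F)\le\varepsilon\norm r$ for $r\in\mathcal R$. This keeps $r-E_0E_0^\dagger r$ (decompression) and $r-E_0E_0^\dagger r+E_0^\dagger r$ (compression) close to $\mathcal H_W$. This is where the placement lift is genuinely needed: it is applied to the forced-core space $L^2(X_a)$ through the erasure map $R_0$. It is also where the non-collision memory of the forgotten block is absorbed, via $\norm{(I-M_{\mathrm{src}})f}\le\eta_T\norm f$. Your ``hitting a specified block'' remark points in this direction but is not attached to any such estimate.

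\emph{Stability.} $\mathcal R$ and $A\mathcal G$ genuinely overlap. Separate bounds for $r$ and for $Ag$ therefore say nothing about $\xi=r+Ag$ unless $\norm r+\norm g=O(\norm\xi)$. The paper gets this by splitting off $\mathcal I=\mathcal R\cap A\mathcal G$. It then uses $\norm{A^\dagger r}\le\varepsilon\norm r$ on $\mathcal R\cap\mathcal I^\perp$ together with $\norm{Ag}^2\ge(1-O(T/\sqrt N))\norm g^2$.

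Lifting the \emph{input} sector, as you propose, does not supply this. The harmful non-orthogonality arises in the intermediate vector between label and weld compression. It is between a free-core and a forced-core description of the same databases, not within the input sector. Conversely, for fresh extension the lift is unnecessary: that inclusion is exact and extends to the whole span by linearity. So the obstacle you flagged is real, but it sits in the entry-leaf case rather than where you placed it.
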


\begin{proof}
Fix $(x,c)$. The query acts on the $c$-neighbor of the vertex labeled $x$.
That neighbor is either outside the core (i.e., fresh), the entry leaf of 
its component, a labeled core vertex, or an unlabeled internal core vertex.
The first two cases are handled by the placement argument below; 
the others are handled directly in the paragraphs below labeled 
``Compression'' and ``Decompression.'' Queries that cross
a block boundary or stay outside the blocks are treated at the end.

We may assume $T^2N^{-1/4}$ is sufficiently small:
otherwise contractivity proves the bounds after enlarging the constant.
Write $\varepsilon=O(T^2N^{-1/4})$ for the uniform errors below.

\paragraph{Local formulas.}
Let $V$ denote the compressed oracle operator which creates a normalized uniform missing entry: in particular for a given query point $x$ with valid image set $S$, the operator is $V = \ket{+_S}\bra{\bot}$. Then $P_0=V^\dagger V$
projects onto the missing-entry space, the sanitized local operators are
\begin{equation}
 \overline C^\dagger=I-P_0-VV^\dagger+V,
 \qquad
 \overline C=I-P_0-VV^\dagger+V^\dagger.
 \label{eq:pres-local-formulas}
\end{equation}
Let $L$ denote uniform creation of the neighbor's label, in the sense of $V$ above. At a high level, its coefficients depend only
on the number of remaining labels, not on their embedded positions.
Thus averaging a retained label will preserve the color-word span, up to minor corrections due to clipping.

It is convenient to require that a newly selected weld endpoint's entire
block be empty after deleting the queried pair. For a residual matching
$J$, let $Y_0$ be its eligible endpoints with this property, and set
\begin{equation}
 E\ket J=\frac1{\sqrt{K_c-|J|}}
       \sum_{v\in Y_0}\ket{J[u\mapsto v]},\qquad
 E_0\ket J=\frac1{\sqrt{|Y_0|}}
       \sum_{v\in Y_0}\ket{J[u\mapsto v]}.
 \label{eq:pres-creation}
\end{equation}
Here the other database records, including $P(u)=x$, are suppressed;
right-to-left queries use the inverse matching. Use $E_0$ in
\eqref{eq:pres-local-formulas}, and denote the resulting combined
compression by $\overline{\wC}^{\,0}$. At most $T$ blocks are occupied,
so at most $T\sqrt N$ endpoints are excluded. On such database
supports, comparison of the uniform vectors gives
\begin{align}
 \norm{(\widehat{\overline{\wC}}-\overline{\wC}^{\,0})\psi}
 &\le O(\sqrt T\,N^{-1/4})\norm\psi,\nonumber\\
 \norm{E-E_0}&=O(T/\sqrt N).
 \label{eq:pres-normalizations}
\end{align}
The first bound also holds with both operators adjointed.
Top labels exclude no leaf endpoints. We prove
preservation for the empty-block operators and transfer back at the end.

\paragraph{Fresh extension and erasure at an entry leaf.}
Call the queried neighbor fresh when it is outside the current core.
Let $\mathcal G$ be the span of the corresponding clipped descriptions,
with a fixed remaining marked core forest. Let $A$ expose the neighbor
without labeling it: it is $E$ for a weld and the identity for an
ordinary edge. Uniform extension, with the weights
\eqref{eq:placement-weight}, gives
\begin{equation}
 LA\mathcal G\subseteq\mathcal H_W,\qquad
 \norm{Ag}^2\ge(1-O(T/\sqrt N))\norm g^2.
 \label{eq:pres-fresh}
\end{equation}
The first inclusion is exact on the whole span: extending a clipped
description and summing over its fresh label gives its extended
color-word states. The matching and label factors are respectively
$(K_c-|J|)^{-1/2}$ and $(M-|P|)^{-1/2}$.

The other possibility is that the queried neighbor is the entry leaf $a$
of its component. It is unlabeled, and its
only core edge is the stored $c$-weld to the marked vertex carrying
$x$. Write $\mathcal R$ for the span of the color-word states of this
form. We show that there is a space
$\mathcal F\subseteq\mathcal G\cap\mathcal H_W$ with
\begin{equation}
 A\mathcal F\subseteq\mathcal R,\qquad
 \operatorname{dist}(A^\dagger r,\mathcal F)
       \le\varepsilon\norm r\quad(r\in\mathcal R).
 \label{eq:pres-root-erasure}
\end{equation}

Use the independent placements of
\Cref{lem:placement-lift} for the source and for the forest obtained
by deleting $a$. In the source, the distinguished core supplies only
$L^2(X_a)$; after deletion all its leaves are marked. In orthonormal
formal-placement coordinates, let $R_0$ erase the endpoint with
coefficient $K_c^{-1/2}$. Each row has at most $K_c$ entries and different
rows have disjoint supports, so $\norm{R_0}\le1$. The weld kernel is
constant on each direction across the weld. Consequently, summing out
$X_a$ sends a function of $X_a$ to a constant on the appropriate
eligible values of $X_x$, times the untouched factors. In particular,
\begin{equation}
 R_0\mathcal K_{\rm src}\subseteq\mathcal F_0
                 \subseteq\mathcal K_{\rm dst},
 \label{eq:pres-ideal-erasure}
\end{equation}
where $\mathcal F_0$ places the remaining distinguished component
freely, with $x$ at an eligible leaf. This is an inclusion of spaces,
so it allows entanglement with every other component.

Set $\mathcal F=M_{\rm dst}\mathcal F_0$. For $r\in\mathcal R$ choose
$f\in\mathcal K_{{\rm src},0}$ with
$M_{\rm src}f=r$ and $\norm f\le2\norm r$. Erasure cannot introduce
a collision. With the constant erasure coefficient, the difference
from the particular vector $M_{\rm dst}R_0f\in\mathcal F$ is
\begin{align}
 G_{\rm dst}R_0M_{\rm src}f-M_{\rm dst}R_0f
 ={}&G_{\rm dst}R_0(M_{\rm src}-I)f\nonumber\\
    &+G_{\rm dst}(I-S_{\rm dst})R_0f.
 \label{eq:pres-erasure-transfer}
\end{align}
Its norm is at most $(\eta_T+T^2/N)\norm f$ by
\eqref{eq:placement-lift}. Replacing $K_c^{-1/2}$ by
$(K_c-|J|)^{-1/2}$ costs $O(T/K_c)\norm r$: on each residual-database
row it changes the coefficient by that relative factor.
This proves the distance estimate in \eqref{eq:pres-root-erasure}.
The missing $c$-port of $x$ is a weld port, so every description in
$\mathcal F$ admits a fresh extension and is already clipped.
Reattaching a uniformly chosen entry leaf across the $c$-weld gives a sum of descriptions
anchored to it. The exact matching-weight ratio is
$(K_c-|J|)^{-1/2}$, proving $A\mathcal F\subseteq\mathcal R$.

\paragraph{Controlling the two descriptions of an exposed neighbor.}
The spaces $\mathcal R$ and $A\mathcal G$ can overlap. Put
\begin{equation}
 \mathcal I=\mathcal R\cap A\mathcal G,
 \qquad \mathcal R_0=\mathcal R\cap\mathcal I^\perp.
\end{equation}
For $r\in\mathcal R_0$, the inclusion $A\mathcal F\subseteq\mathcal I$
gives $A^\dagger r\perp\mathcal F$. Hence
\eqref{eq:pres-root-erasure} implies
$\norm{A^\dagger r}\le\varepsilon\norm r$.
Together with \eqref{eq:pres-fresh}, this bounds the cross term in
$\norm{r+Ag}^2$. Every exposed vector therefore has a decomposition
\begin{equation}
 \xi=r+Ag,\qquad r\in\mathcal R_0,\quad g\in\mathcal G,
 \qquad \norm r+\norm g=O(\norm\xi).
 \label{eq:pres-stable}
\end{equation}
Indeed, first absorb the $\mathcal I$ component into $A\mathcal G$;
then use
$|\langle r,Ag\rangle|\le\varepsilon\norm r\norm g$.
This is the only passage from the two descriptions to their combined
span; no bound on arbitrary color-word coefficients is needed.

\paragraph{Compression.}
First consider an exposed neighbor whose label is defined. Sanitized
label compression is $I-LL^\dagger+L^\dagger$. The first two terms
retain a label, so empty-block weld compression fixes their edge;
relabeling preserves the color-word span. Analyze
$\xi=L^\dagger\psi$, noting $\norm\xi\le\norm\psi$.

If the neighbor remains an internal core vertex, its other incident
information protects its weld edge, and removing the mark gives a
color-word state. Otherwise it is terminal. The anchor description
from \Cref{lem:placement-lift} gives exactly
$\xi\in A\mathcal G+\mathcal R$: deleting a terminal marked word
gives a fresh clipped description, except when removing the mark makes the neighbor the entry leaf of its component. That case gives $\mathcal R$. An ordinary terminal
endpoint belongs to the fresh case, including when, after the deletion, it lies on the
ordinary stem of its component (this happens when it was the vertex $a_{p_i}$). The initial-anchor position determines whether
its incident edge is ordinary or a weld, so this argument does not
project a state onto an arbitrary weld pattern.

On a fresh vector, weld compression sends $Ag$ to $g$ with error
$O(T/\sqrt N)\norm g$. Since $g$ is clipped, its distance to the
ordinary space of the same level is at most $\delta_T\norm g$.
On $r\in\mathcal R$, its action is
\begin{equation}
 r-E_0E_0^\dagger r+E_0^\dagger r
       =r-Af+f+e,
 \qquad f\in\mathcal F,\quad \norm e\le\varepsilon\norm r,
 \label{eq:pres-compress-root}
\end{equation}
by \eqref{eq:pres-root-erasure} and \eqref{eq:pres-normalizations}. All three main
terms are color-word vectors. Apply \eqref{eq:pres-stable} to bound
the error relative to $\norm\xi$. On the other active branches an
undefined neighbor label is annihilated by sanitized compression.

\paragraph{Decompression.}
For a fresh clipped input, the output is $LAg$ up to the normalization
error in \eqref{eq:pres-normalizations}; use \eqref{eq:pres-fresh}.
An already labeled neighbor gives $I-LL^\dagger$, which only averages
its label. An unlabeled internal neighbor is protected, and adding
its mark gives a color-word state. On $r\in\mathcal R$, sanitization
removes the erased-entry term, leaving
\begin{equation}
 L(r-E_0E_0^\dagger r)
       =Lr-LAf+e,\qquad f\in\mathcal F,\quad
       \norm e\le\varepsilon\norm r.
\end{equation}
Both main terms mark the entry leaf of a vector in
$\mathcal R$, so they are color-word vectors.

\paragraph{Boundary queries, levels, and uniformity.}
Apply the preceding calculation to the clipped input; replacing an
ordinary input by its clipped projection costs at most $\delta_t$.
For a query entering a block from above, the clipped descriptions
either already use that initial block or have it empty. Adding or
removing the neighbor mark therefore gives the corresponding
color-word description. For a query leaving a block through its top,
clipping retains that event only through the safe initial-block span
identified in \Cref{app:clipping-events}; its action is again ordinary
label creation or removal. The event projectors preserve these safe
spans, so the boundary calculation holds on their full spans.
Queries wholly outside the blocks change only top labels.

For the label-erasure calculation, fix the remaining labels and the
core obtained by temporarily marking the exposed neighbor. These
data are determined by the database and $(x,c)$; different choices
have disjoint supports. Uniform label averaging is a contraction
between these spaces. Within each such space the only overlapping
alternatives are $A\mathcal G$ and $\mathcal R$, controlled by
\eqref{eq:pres-stable}. Thus the estimates apply to arbitrary
superpositions, including coherent query registers.

An extension inside a block adds at most one core vertex. Entering an
occupied initial block from above can additionally add the component's
ordinary stem to the core, adding at most
$n/2+1=d$ vertices. Erasure cannot enlarge the core, and resampling
replaces one terminal vertex by one. Every approximation above thus
lies in $\mathcal H_{\le T}$. 
Combining the clipping errors $\delta_t$ and $\delta_T$ from
\eqref{eq:clipping-bound}, the preimage error in
\eqref{eq:pres-erasure-transfer} (which gives the $\varepsilon$ of
\eqref{eq:pres-root-erasure}), and the normalization errors
\eqref{eq:pres-normalizations} gives $O(T^2N^{-1/4})$ in both
directions, and the same comparison transfers back to
$\widehat{\overline{\wC}}$.
\end{proof}

\begin{lemma}
\label{lem:reachable-colorword}
Let $\algo A$ denote a quantum algorithm querying the welded tree oracle.
Initialize a welded tree
database with the root label as above. Let $\ket{\psi_q}$
be the joint state after $q\geq 1$ queries to the ordinary
compressed welded tree oracle $\cwO$.
Put
\[
    d\coloneqq n/2+1,\qquad S\coloneqq 2dq=(n+2)q,\qquad
    K_{\min}\coloneqq \min_{c\in\cols}K_c,
\]
and define
\[
    \delta_S
    \coloneqq
    \max_{\substack{s\leq S\\x,c}}
    \norm{\Pi_{\leq s}|^{x,c}-\Pi_{\leq s}}.
\]
Let $Q=O(q+1)$ bound the number of component permutation calls,
including initialization, and suppose
$Q\leq \frac12\min\{M,K_{\min}\}$. Write
\[
    \Delta_Q
    =
    \sigma_{\mathsf{perm}}(Q,M)
    +\sum_{c\in\cols}\sigma_{\mathsf{perm}}(Q,K_c)
    +q\sqrt{\frac{K}{M-Q}}
    +q\sqrt{\frac{Q}{K_{\min}-Q}}.
\]
Then
\[
    \norm{
        (I_A\otimes\Pi_{\leq S}^{\perp})\ket{\psi_q}
    }
    =
    O\left(
        \Delta_Q
        +\frac{qS^2}{N^{1/4}}
    \right).
\]
\end{lemma}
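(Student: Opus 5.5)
The plan is a hybrid argument in three moves. First, pass from the ordinary experiment to a sanitized one with restricted weld compression. Second, show that in the sanitized experiment the state stays near the color-word subspace, with a level that grows by at most $2d$ per query. Third, pay the cost of the first move back.

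\textbf{Step 1 (replacing the oracle).} Let $\ket{\overline\psi_q}$ be the state obtained by replacing every query, including initialization, by $\overline{\cwO}$. The hybrid in the proof of \Cref{cor:sanitized-weld-simulation} gives
\[
\norm{\ket{\psi_q}-\ket{\overline\psi_q}}\le\sigma_{\mathsf{perm}}(Q,M)+\sum_c\sigma_{\mathsf{perm}}(Q,K_c)+O\bigl(q\sqrt{K/(M-Q)}\bigr).
\]
That argument works at the level of joint vectors, not only reduced states. Next, replace $\overline{\wC}$ and its adjoint by $\widehat{\overline{\wC}}$ and its adjoint. The databases reached have at most $Q$ weld entries, and at most $Q$ opposite endpoints are excluded. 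So \Cref{lem:restricted-compression} with $r,t\le Q$ bounds each replacement by $O(\sqrt{Q/(K_{\min}-Q)})$. There are $O(q)$ such replacements, and all suffixes are contractions. This produces the last term of $\Delta_Q$. Call the resulting state $\ket{\widehat\psi_q}$, with oracle $\widehat{\cwO}=\widehat{\overline{\wC}}\,\wP\,\widehat{\overline{\wC}}^\dagger$.

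\textbf{Step 2 (induction on levels).} Set $S_j=2dj$ and prove
\[
\norm{(I_A\otimes\Pi_{\le S_j}^\perp)\ket{\widehat\psi_j}}\le j\cdot O(S^2N^{-1/4}).
\]
For the base case, the state after initialization is exactly $\ket{A,B}$ with $\mathbf A=\{\emptyset\}$ labeled by the root label. Its core is a single vertex, so it lies in $\mathcal H_{\le 1}\subseteq\mathcal H_{\le S_0+1}$. A harmless shift of the index absorbs this, since $S\ge 2d$.

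For the inductive step, the algorithm's unitaries act only on $I_A$ and commute with $\Pi_{\le s}$. Write the current state as $u+e$, where $u=(I_A\otimes\Pi_{\le S_j})\ket{\widehat\psi_j}$. Apply $\widehat{\overline{\wC}}^\dagger$. By \Cref{lem:subspace-preservation} with $t=S_j$, the result lies in $\mathcal H_{\le S_j+d}$ up to $O((S_j+d)^2N^{-1/4})$. The lemma holds uniformly in the query registers, so it extends to superpositions entangled with $A$: the query-controlled operator is block diagonal in $(x,c)$, and the workspace is untouched.

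The lookup $\wP$ does not change the database, so it commutes with $\Pi_{\le s}$ (up to the register ordering). Then apply $\widehat{\overline{\wC}}$ and use the lemma again with $t=S_j+d$, landing in $\mathcal H_{\le S_j+2d}=\mathcal H_{\le S_{j+1}}$. The leakage $e$ is propagated by contractivity. Each query therefore adds $O(S^2N^{-1/4})$, and after $q$ queries the accumulated error is $O(qS^2N^{-1/4})$.

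\textbf{Step 3 (returning to the ordinary state).} The triangle inequality combines Steps 1 and 2, since $\Pi_{\le S}^\perp$ is a contraction.

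\textbf{Main obstacle.} The delicate point is Step 2's claim that $\wP$ commutes with the level projectors, together with the fact that \Cref{lem:subspace-preservation} is stated for operators sandwiched by $\Pi_{\le t}$. The real requirement is that $\Pi_{\le s}$ acts only on the database registers, and that the controlled compression decomposes over $(x,c)$ with the same bound for every branch.

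I would check this first by writing $\widehat{\overline{\wC}}=\sum_{x,c}\proj{x,c}\otimes C_{x,c}$. The bound then follows from
\[
\norm{\Pi^\perp_{\le T}C_{x,c}\Pi_{\le t}}\le\varepsilon
\]
for every branch, because the branches are orthogonal in the query register. A secondary check is that the sanitized experiment is subnormalized, which only helps, and that the level sets used in \Cref{lem:subspace-preservation} match $\delta_S$ as defined in the statement.
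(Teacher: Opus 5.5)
\textbf{Gap: the lookup step.} Your overall architecture matches the paper: sanitize, impose label protection, insert level projectors after decompression and after lookup, apply \Cref{lem:subspace-preservation} twice per query, and hybridize. The step that fails is your claim that $\wP$ commutes with $\Pi_{\le s}$ because it does not change the database.

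$\wP$ is a database-controlled operation, block diagonal in the computational basis of databases, acting on the answer register by $z\mapsto z\oplus r_{P,W}(x,c)$. $\Pi_{\le s}$ is not diagonal in that basis, since each $\ket{A,B}$ is a weighted superposition over all of $\mathbf M(A,B)$. So $\wP$ returns $\ket{x,c,z}\otimes\ket{A,B}$ to the query space tensored with $\mathcal H_{\le s}$ only if $r_{P,W}(x,c)$ is constant on the support of $\ket{A,B}$. That fails in general:
\begin{itemize}
\item If $x$ marks a core vertex reached through a weld, then on databases where that vertex sits at a block root and $c$ is its parent color, the lookup returns the fixed top label of the parent when there is one. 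On all other databases it returns nothing.
\item A query entering a block $B_u$ from above returns a label exactly on those databases where some core happens to occupy $B_u$ with a mark at its root.
\end{itemize}
Neither configuration is a block violation of $\ket{A,B}$, so $\wP$ genuinely leaks out of $\mathcal H_{\le s}$. Acting only on database registers is not the relevant condition. To commute with a database-controlled operation, $\Pi_{\le s}$ must commute with the control projectors, and it does not.

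\textbf{The fix, via $\delta_S$.} This is why the statement carries $\delta_S$, which your argument never uses. The paper passes through the clipped span. On $\Pi_{\le s}|^{x,c}$ the offending identifications are removed, so the lookup answer is determined by the description. Hence $\wP$ commutes with $P_s^{\mathrm{cl}}=\sum_{x,c}\proj{x,c}\otimes\Pi_{\le s}|^{x,c}$. With $P_s=I_A\otimes\Pi_{\le s}$,
\[
P_s^\perp\wP P_s=P_s^\perp\wP\,(P_s-P_s^{\mathrm{cl}})+P_s^\perp(P_s^{\mathrm{cl}}-P_s)\,\wP,
\]
so $\norm{P_s^\perp\wP P_s}\le2\delta_S$. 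By \Cref{lem:clipped-subspace}, $\delta_S\le64S^2N^{-1/4}$. Each query therefore still costs $O(S^2N^{-1/4})$, and the final bound is unchanged once this step is repaired.

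\textbf{Minor: the base case.} The root label lies above the blocks, so it contributes nothing to the core. The initialized state is therefore in $\mathcal H_{\le 0}$, and no index shift is needed. Your ``single vertex'' count is wrong, and the proposed shift would only have placed you in level $S+1$, not $S$.
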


\begin{proof}
Purify the algorithm, so its operations between queries are
unitaries on its workspace and query registers. We suppress
identity operators on these registers when writing database
projectors. All sanitized and projected states below remain
subnormalized.

We first replace the ordinary compressed experiment by the
sanitized experiment. This comparison must be made in joint
state norm, before tracing out the databases.

Consider the componentwise compressed evaluation circuit used
in the proof of \Cref{cor:sanitized-weld-simulation}.
Replacing its inverse label calls by direct database lookup
gives the $\cwO$ experiment and costs
\[
    O\left(q\sqrt{\frac{K}{M-Q}}\right)
\]
in joint state norm, by \Cref{lem:inverse-sanitization}.
Alternatively, sanitize its component calls and make the same
inverse-label replacements. The resulting circuit is the
$\overline{\cwO}$ experiment, and its distance from the
componentwise compressed circuit is at most
\[
    \sigma_{\mathsf{perm}}(Q,M)
    +\sum_{c\in\cols}\sigma_{\mathsf{perm}}(Q,K_c)
    +O\left(q\sqrt{\frac{K}{M-Q}}\right).
\]
Indeed, use hybrids with ordinary component calls before each
replacement and modified calls afterwards. Reachable validity
applies to each ordinary prefix, while every modified suffix
is a contraction. Thus validity is never invoked on a state
that has already undergone sanitization or projection.
The triangle inequality gives the same bound, up to constants,
between the $\cwO$ and $\overline{\cwO}$ experiments.

Next impose the label protection from
\Cref{lem:subspace-preservation}. Set
\[
    C=\widehat{\overline{\wC}},
    \qquad
    \widehat{\overline{\cwO}}
    =C\wP C^\dagger.
\]
On the databases encountered in an ordinary sanitized prefix,
each relevant database has at most $Q$ entries.
Excluding already labeled opposite endpoints removes at most
$Q$ terms from a weld uniform vector containing at least
$K_c-Q$ terms. Comparing the normalized uniform vectors,
and hence their compression operators as in \Cref{lem:restricted-compression}, costs
\[
    O\left(\sqrt{\frac{Q}{K_c-Q}}\right).
\]
The same estimate holds after sanitization.
A hybrid over the two weld compression factors in each query
therefore costs
\[
    O\left(q\sqrt{\frac{Q}{K_{\min}-Q}}\right).
\]
Here the prefixes use the ordinary sanitized oracle, whose
database sizes are bounded by $Q$, and the protected suffixes
are contractions.

Let $\ket{\widehat\psi_q}$ be the resulting protected,
sanitized state, with sanitized initialization. We have proved
\begin{equation}
\label{eq:reachable-sanitization-error}
    \norm{\ket{\psi_q}-\ket{\widehat\psi_q}}
    =O(\Delta_Q).
\end{equation}

We now bound the leakage of one protected sanitized query.
We first record the effect of database lookup.
For fixed $(x,c)$, lookup acts uniformly on each clipped
color-word generator: it returns the prescribed label of the
neighboring marked word, if present, and otherwise gives no
answer. Clipping excludes an identification with a different
marked word. Thus lookup preserves the clipped subspace,
without increasing its level.

More formally, let
\[
    P_s^{\mathrm{cl}}
    \coloneqq
    \sum_{x,c}\proj{x,c}\otimes\Pi_{\leq s}|^{x,c},
    \qquad
    P_s \coloneqq I_A\otimes\Pi_{\leq s},
\]
with identities on the other registers understood.
Since $\wP$ is unitary and preserves the clipped subspace,
\[
    [\wP,P_s^{\mathrm{cl}}]=0.
\]
Consequently,
\begin{equation}
\label{eq:reachable-lookup-leakage}
    \norm{P_s^\perp\wP P_s}
    \leq 2\norm{P_s-P_s^{\mathrm{cl}}}
    \leq 2\delta_S
    \qquad(s\leq S).
\end{equation}
This also accounts for the fact that a clipped vector need
not itself belong to the ordinary color-word subspace.

By \Cref{lem:subspace-preservation}, uniformly for the levels
used below,
\[
    \norm{P_{r+d}^{\perp}C^\dagger P_r}=O\Bigl(\frac{S^2}{N^{1/4}}\Bigr),
    \qquad
    \norm{P_{r+d}^{\perp}C P_r}=O\Bigl(\frac{S^2}{N^{1/4}}\Bigr).
\]
For $r+2d\leq S$, insert $P_{r+d}$ after decompression and
after lookup. Since $C$, $C^\dagger$, and $\wP$ are
contractions, \eqref{eq:reachable-lookup-leakage} gives
\begin{align*}
    \norm{P_{r+2d}^{\perp}
        \widehat{\overline{\cwO}}P_r}
    &=
    \norm{P_{r+2d}^{\perp}C\wP C^\dagger P_r}\\
    &\leq
    \norm{P_{r+d}^{\perp}C^\dagger P_r}
    +\norm{P_{r+d}^{\perp}\wP P_{r+d}}
    +\norm{P_{r+2d}^{\perp}C P_{r+d}}\\
    &=O\Bigl(\frac{S^2}{N^{1/4}}\Bigr).
\end{align*}

Finally, hybrid over the $q$ protected sanitized queries,
projecting onto level at most $2dj$ after query $j$.
Sanitized initialization is supported on databases containing
only root labels, so its state $\ket{\phi_0}$ lies in
$\mathcal H_{\leq 0}$ and has norm at most one.
Writing $U_j$ for the algorithm's intervening unitaries,
define the projected evolution by
\[
    \ket{\phi_j}
    =
    P_{2dj}\,
    \widehat{\overline{\cwO}}\,
    U_j\ket{\phi_{j-1}}.
\]
Workspace unitaries commute with the database projectors.
Moreover, every factor in this evolution is a contraction,
so $\norm{\phi_j}\leq 1$ and
$\ket{\phi_j}\in\mathcal H_{\leq 2dj}$.

Comparing this evolution with the unprojected protected
sanitized evolution, the preceding one-query estimate yields
\[
    \norm{\ket{\widehat\psi_j}-\ket{\phi_j}}
    \leq
    \norm{\ket{\widehat\psi_{j-1}}-\ket{\phi_{j-1}}}
    +O(\frac{S^2}{N^{1/4}}).
\]
Induction therefore gives
\[
    \norm{\ket{\widehat\psi_q}-\ket{\phi_q}}
    =O(q\frac{S^2}{N^{1/4}}).
\]
Since $P_S^\perp\ket{\phi_q}=0$, we conclude from
\eqref{eq:reachable-sanitization-error} that
\begin{align*}
    \norm{P_S^\perp\ket{\psi_q}}
    &\leq
    \norm{\ket{\psi_q}-\ket{\widehat\psi_q}}
    +\norm{\ket{\widehat\psi_q}-\ket{\phi_q}}\\
    &=O\Bigl(\Delta_Q+q\frac{S^2}{N^{1/4}}\Bigr),
\end{align*}
as claimed.
\end{proof}

\begin{theorem}[Path-finding lower bound]
\label{thm:path-finding}
Let $N=2^n$, with $n$ even, and sample the independently matched
welded tree ensemble of \Cref{subsec:compressed-welded}.
An algorithm receives $P(\underline 0_0)$ and makes $q\ge1$
quantum queries to the welded tree oracle. Let $p_{\rm path}$
be the probability that it outputs an ordered sequence of labels forming a path from $P(\underline 0_0)$ to $P(\overline 0_0)$.
There is no restriction on the length of its output.
Let $S=(n+2)q$, let $Q=O(q+1)$ be a sufficiently large bound
on component queries, and suppose $Q\le\min\{M,K_{\min}\}/2$.
With $\Delta_Q$ as in \Cref{lem:reachable-colorword}, define
\begin{align*}
 E_Q&\coloneqq \varepsilon_{\mathsf{perm}}(Q,M)
       +\sum_{c\in\cols}\varepsilon_{\mathsf{perm}}(Q,K_c),\\
 V_Q&\coloneqq Q\sqrt{\frac{K}{M-Q}}
       +Q\sqrt{\frac{Q}{K_{\min}-Q}}.
\end{align*}
Then
\begin{equation}
 p_{\rm path}
 \le O(E_Q+V_Q)
   +O\left(\Delta_Q+\frac{qS^2}{N^{1/4}}\right)^2.
 \label{eq:path-general-bound}
\end{equation}
In particular, substituting
$\varepsilon_{\mathsf{perm}}(Q,L)=O(QL^{-1/2})$ and
$\sigma_{\mathsf{perm}}(Q,L)=O(Q^2L^{-1/2})$ we obtain
\begin{equation}
 p_{\rm path}
 \le O\left(\frac{(n+2)^4q^6}{N^{1/2}}\right).
 \label{eq:path-query-bound}
\end{equation}
Consequently, constant success probability requires
$q=\Omega\left(N^{1/12}/(n+2)^{2/3}\right)=2^{\Omega(n)}$.
\end{theorem}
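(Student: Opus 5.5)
Following the standard compressed-oracle template, the argument has three stages: pass from the real oracle to the compressed one, show that a successful output forces the database to contain a path, and show that the color-word subspace of bounded core size is nearly path-free.

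\textbf{Step 1: pass to the compressed oracle.} First we pass from the real welded tree oracle to $\cwO$ using \Cref{lem:weld-simulation}. The algorithm's output is a classical label sequence. So the change in $p_{\rm path}$, measured after one extra round of verification queries, is bounded by the trace distance, which is at most $E_Q+V_Q$ up to constants. We then treat verification as part of the algorithm. Suppose the output has length $\ell$. Checking that consecutive labels are adjacent, that the first label is $P(\underline 0_0)$, and that the last is $P(\overline 0_0)$ would naively cost $\ell$ extra queries.

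\textbf{Step 2: reduce verification to a short path.} Since $\ell$ is unrestricted, we avoid charging $\ell$ queries. Any walk from entrance to exit contains a simple path. Moreover, every entrance-to-exit path crosses the weld, so it includes a left-to-right segment that starts at the left root, descends to a leaf, crosses one weld edge, and ascends to the right root. A cleaner route is to note that the output path must cross the weld at some point. It then suffices to verify a single weld crossing together with both labels $P(\underline 0_0)$ and $P(\overline 0_0)$ being recovered on the correct sides; this uses $O(n)$ verification queries. We will use the compressed-oracle fundamental lemma in the form
\[
\sqrt{p_{\rm path}} \le \norm{\Pi_{\rm path}\ket{\psi_{q'}}} + O(\text{guessing error}),
\]
where $q'=q+O(n)$ and $\Pi_{\rm path}$ projects onto databases containing a labeled root-to-root path. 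The guessing error comes from labels absent from the database. Because valid labels are sparse ($K/M$ small), this term is absorbed into $V_Q$, via the inverse-sanitization bound of \Cref{lem:inverse-sanitization}.

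\textbf{Step 3: bound the path projection.} Next we split $\ket{\psi_{q'}}=\Pi_{\le S}\ket{\psi_{q'}}+\Pi_{\le S}^\perp\ket{\psi_{q'}}$ and apply \Cref{lem:reachable-colorword}. The second piece has norm $O(\Delta_Q+qS^2N^{-1/4})$, and squaring it gives the second term of \eqref{eq:path-general-bound}. For the first piece, we show $\norm{\Pi_{\rm path}\Pi_{\le S}}$ is tiny. The key claim is that \emph{no} database in the support of any $\ket{A,B}$ contains a path from $P(\underline 0_0)$ to $P(\overline 0_0)$. The right root is reached from the left only by exiting a block through its top after arriving via the weld, which is a block overflow. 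Alternatively, it is reached only by a left component meeting a right component in a shared block, which is a collision. Both are excluded in $\mathbf M(A,B)$. The one subtlety is the re-rooting degeneracy: a state can be described from the right root while its labels came from the left. Even so, the minimal database itself never contains the labeled top-of-block edges joining the sides. Since $\Pi_{\rm path}$ is diagonal in the database basis, it annihilates the whole color-word span exactly. Hence
\[
\norm{\Pi_{\rm path}\ket{\psi_{q'}}}\le \norm{\Pi_{\le S}^\perp\ket{\psi_{q'}}}.
\]

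\textbf{Step 4: substitute the bounds.} With $Q=O(q)$, $M=N^2$, $K=O(N)$ and $K_c=\Theta(N)$, we get $E_Q+V_Q=O(qN^{-1/2}+q^{3/2}N^{-1/2})$ and $\Delta_Q=O(q^2N^{-1/2})$. The dominant term is $(qS^2N^{-1/4})^2=(n+2)^4q^6N^{-1/2}$. Solving $p_{\rm path}=\Omega(1)$ then gives $q=\Omega(N^{1/12}/(n+2)^{2/3})$.

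\textbf{Main obstacle.} I expect the hardest step to be Step 2, making the fundamental lemma rigorous for an unbounded-length output. Paths can be long, and the check must be tied to the specific entries (root labels and a weld edge) whose presence contradicts the no-violation condition. I would handle this by verifying only the $O(n)$ edges adjacent to the weld crossing and to each root. Whether $\Pi_{\rm path}$ truly vanishes on every $\ket{A,B}$, rather than being merely small, should be checked against the block-overflow definition.
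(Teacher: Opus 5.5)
Your proposal has two genuine gaps: the exact path-freeness claim in Step 3 is false, and the verification scheme in Step 2 does not certify a recorded path.

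\textbf{Step 3.} $\Gamma$ (your $\Pi_{\rm path}$) does \emph{not} annihilate the color-word span, because a recorded path need not be traversed by any single word. Here is a counterexample.
\begin{itemize}
\item Let $\mathbf A$ be all labeled prefixes of one left word that descends into a left block to a leaf, crosses a $c$-weld, and then climbs $n/2$ ordinary steps.
\item Let $\mathbf B$ be all labeled prefixes of a right word of length $n/2-1$, ending at a vertex $p$ just above the blocks.
\item Some child block of $p$ has root $\overline r$ whose parent-edge color differs from the last letter of the left word. In that block exactly one eligible leaf climbs to $\overline r$ along the prescribed colors.
\item Databases in which the weld lands on that leaf lie in $\mathbf M(A,B)$. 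The left word stops at $\overline r$, so there is no overflow. $\mathbf B$ visits no block, so there is no collision. Every label is used, so the database is minimal.
\end{itemize}
Yet every vertex from $P(\underline 0_0)$ to $P(\overline 0_0)$ is labeled and the weld is recorded. The core has $n+2\le S$ vertices, so $\Gamma\Pi_{\le S}\neq 0$; the amplitude is merely small, of order $K_c^{-1/2}$.

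Per-state smallness is also not enough, because the $\ket{A,B}$ overlap heavily. This is the step you are missing. The paper shows that in a good placement the in-block segment of a recorded path lies in one core containing two distinct block roots. Under any allowed anchor, that forces some vertex reached after a weld to be a block root. Estimate (iv) of \Cref{app:exploration-estimates} bounds this by $O(t_i/\sqrt N)$. The anchor decomposition, \eqref{eq:app-event-restriction}, and the bounded preimages $\norm{f}\le 2\norm{h}$ of \Cref{lem:placement-lift} then give $\norm{\Gamma h}=O(sN^{-1/4})\norm{h}$ on $\mathcal H_{\le s}$.

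\textbf{Step 2.} The structural premise is wrong: a simple entrance-to-exit path can cross the weld many times, so it need not have the descend--cross--ascend shape. More importantly, checking only the root labels and one weld crossing, or $O(n)$ nearby edges, certifies no recorded path in the database. Such checks can in fact be passed efficiently: descend from the entrance and cross a weld classically, then obtain $P(\overline 0_0)$ with the quantum walk. So no path-freeness estimate can bound their acceptance probability.

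What works, and what the paper does, is a full but truncated verification:
\begin{itemize}
\item Loop-erase the output.
\item If it has at least $q+2$ labels, check only that the first $q+2$ are valid labels. Once inverse label calls are replaced by direct database lookup (\Cref{lem:inverse-sanitization}), this always rejects, since $|P|\le q+1$.
\item Otherwise, decode every label and check every edge. Weld edges are checked by forward $W_c$ queries whose compression excludes the decoded right leaves (\Cref{lem:restricted-compression}), so a weld check passes only on an already-recorded edge.
\end{itemize}
This costs $O(q)$ component queries, accepts only on $\Gamma$, and yields $p_{\rm path}\le\norm{\Gamma\ket{\psi_q}}^2+O(E_Q+V_Q)$ for the state after $q$ queries, with no need for $q'=q+O(n)$.
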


\begin{proof}
Call a path \emph{recorded} if all its vertices have labels in
$P$ and all its weld edges occur in the corresponding $W_c$.
The ordinary tree edges are fixed and require no database entry.
Let $\Gamma$ be the diagonal projector onto databases containing
a recorded path between the two roots.

\paragraph{Recorded paths in the color-word space.}
We first show
\begin{equation}
 \norm{\Gamma\Pi_{\le s}}=O(s^2N^{-1/4}).
 \label{eq:path-colorword-overlap}
\end{equation}
A recorded path has a segment which enters the blocks from the
left upper tree and next returns to the upper trees on the right.
This segment lies inside the blocks, including its two endpoints,
which are distinct block roots. In a good placement, different
ordinary components occupy distinct blocks. Every such segment
therefore lies in one core: its ordinary edges lie in the terminal
hulls, and its weld edges join those hulls.
Thus $\Gamma$ implies that some core contains two distinct block roots.
We bound this event using the placement representation. It suffices to
consider $s\ge1$ with $s^2N^{-1/4}\le1/64$; zero-core sectors have no
recorded path, and outside this range the claimed bound is trivial.
In a sector with core sizes $t_i$ summing to at most $s$, let $B_i$
be the event that core $i$ contains two distinct block roots. Given
any allowed anchor, its initial ordinary component contains at most
one block root. Hence $B_i$ requires some vertex reached after a weld
to be a block root. By estimate (iv) in
\Cref{app:exploration-estimates} and a union bound over the $t_i$
vertices, $\Pr(B_i\mid\text{anchor})=O(t_i/\sqrt N)$.
Applying \eqref{eq:anchor-decomposition} and
\eqref{eq:app-event-restriction} to at most $t_i$ anchor coordinates
gives $\norm{\ind_{B_i}\Pi_{\mathcal K_i}}^2=O(t_i^2/\sqrt N)$.
Tensoring with the other cores and union bounding therefore gives
\[
 \norm{\ind_{\bigcup_i B_i}f}^2
 \le O(s^2/\sqrt N)\norm f^2
 \qquad(f\in\mathcal K_\sigma).
\]
For $h\in\mathcal H_\sigma$, choose 
$M_\sigma f=h$ with $\norm f\le2\norm h$ from
\Cref{lem:placement-lift}. Since $M_\sigma$ is a contraction commuting
with these event indicators and $\Gamma$ implies $\bigcup_i B_i$,
$\norm{\Gamma h}=O(sN^{-1/4})\norm h$.
Taking the maximum over the orthogonal sectors proves
\eqref{eq:path-colorword-overlap}.

\paragraph{From an output path to a recorded path.}
We claim that, for the ordinary compressed state $\ket{\psi_q}$
from \Cref{lem:reachable-colorword},
\begin{equation}
 p_{\rm path}\le\norm{\Gamma\ket{\psi_q}}^2+O(E_Q+V_Q).
 \label{eq:path-recording}
\end{equation}
The following is a fundamental-lemma argument, with an explicit
truncation to avoid charging for a potentially long output.
Erase loops from the output using classical computation, so its
labels are distinct. Append a verifier with access to the component
oracles. If the output contains at least $q+2$ labels, it checks
only that the first $q+2$ labels belong to $P(V)$. Otherwise it
decodes every output label with $P^{-1}$, checks that the endpoints
are the two roots, and verifies all consecutive edges. Ordinary
edges are checked without queries; weld edges are checked by
forward queries to the eligible $W_c$, orienting each pair from
left to right. A genuine path always passes this verifier, which
uses only $O(q+1)$ component queries.

Replace the component oracles by compressed oracles. Soundness
costs $E_Q$ in acceptance probability. Replace all inverse label
calls, including those in the verifier, by direct database lookup.
By \Cref{lem:inverse-sanitization}, the joint-state error is
$O(Q\sqrt{K/(M-Q)})$. The algorithm's prefix is now exactly the
$\cwO$ experiment, and the verifier's label lookups do not change
$P$. Since $|P|\le q+1$, its long-output branch always rejects.

On the remaining branch, keep the decoded vertices in the
verifier's registers. In each verification call to $W_c$, modify
compression to exclude the decoded right leaves from its uniform
extension vector. There are at most $q+1$ such leaves.
\Cref{lem:restricted-compression} bounds the total change by
$O(Q\sqrt{Q/(K_{\min}-Q)})$ in joint-state norm.
These modified operators fix every already recorded pair ending
at a decoded right leaf and cannot create a new such pair.
Consequently, a checked weld edge can pass only if it was already
recorded before verification. The modified verifier accepts only
on $\Gamma$, proving \eqref{eq:path-recording}.

\paragraph{Applying Lemma 4.2.}
By \Cref{lem:reachable-colorword}, the clipping bound, and the
revised \Cref{lem:subspace-preservation},
\[
 \norm{\Pi_{\le S}^{\perp}\ket{\psi_q}}
 =O\left(\Delta_Q+qS^2N^{-1/4}\right).
\]
Together with \eqref{eq:path-colorword-overlap}, this gives
\[
 \norm{\Gamma\ket{\psi_q}}
 \le \norm{\Pi_{\le S}^{\perp}\ket{\psi_q}}
      +\norm{\Gamma\Pi_{\le S}}
 =O\left(\Delta_Q+qS^2N^{-1/4}\right).
\]
Equation~\eqref{eq:path-recording} proves
\eqref{eq:path-general-bound}. Substitute $M=N^2$, $K=O(N)$,
$K_c\ge N/2$, and the stated permutation estimates.
Then $E_Q+V_Q=O(q^{3/2}/\sqrt N)$ and
$\Delta_Q=O(q^2/\sqrt N)$, while
$qS^2N^{-1/4}=(n+2)^2q^3N^{-1/4}$.
Substituting these estimates into \eqref{eq:path-general-bound}
gives \eqref{eq:path-query-bound}. Constant success probability
therefore requires $q=\Omega(N^{1/12}/(n+2)^{2/3})$.
Since $n=\log_2N$, this is $2^{\Omega(n)}$.
\end{proof}

\section*{AI statement}

The general proof strategy described in this paper was developed by the authors without the use of AI. GPT-6 was used to develop proof ideas in the later stages of the work, and in particular, identified an error in a human-generated proof of the clipping lemma, which it then corrected. The manuscript was written and revised with significant input from GPT-6 and additional input from Claude Opus 5, and was edited by the authors, who take sole responsibility for its correctness.

\section*{Acknowledgments}

This work received support from the National Science Foundation (grant 26-17356) and the Department of Energy (grant DE-SC0020264; the Office of Science, Office of Advanced Scientific Computing Research, Accelerated Research in Quantum Computing program; and the Office of Science, National Quantum Information Science Research Centers, Quantum Systems Accelerator (QSA)).

\newpage

\printbibliography

@misc{cm26comp,
  author = {Carolan, Joseph and Majenz, Christian},
  title = {Compressed Permutation Oracles Revisited (Upcoming Work)},
  year = {2026}
}

@inproceedings{childs03weld,
author = {Childs, Andrew M. and Cleve, Richard and Deotto, Enrico and Farhi, Edward and Gutmann, Sam and Spielman, Daniel A.},
title = {Exponential algorithmic speedup by a quantum walk},
year = {2003},
publisher = {Association for Computing Machinery},
address = {New York, NY, USA},
doi = {10.1145/780542.780552},
booktitle = {Proceedings of the Thirty-Fifth Annual ACM Symposium on Theory of Computing},
pages = {59–68},
numpages = {10},
series = {STOC '03}
}

@inproceedings{carolan2025compressedpermutationoracles,
author = {Carolan, Joseph},
title = {Compressed Permutation Oracles},
year = {2026},
doi = {10.1145/3798129.3800736},
booktitle = {Proceedings of the 58th Annual ACM Symposium on Theory of Computing},
pages = {150–161},
      eprint={2509.18586},
      archivePrefix={arXiv}
}

@InProceedings{carolan24oneway,
author="Carolan, Joseph
and Poremba, Alexander",
title="Quantum One-Wayness of the Single-Round Sponge with Invertible Permutations",
booktitle="Advances in Cryptology -- CRYPTO 2024",
year="2024",
pages="218--252"
}

@article{Shor_1997,
   title={Polynomial-Time Algorithms for Prime Factorization and Discrete Logarithms on a Quantum Computer},
   volume={26},
   DOI={10.1137/s0097539795293172},
   number={5},
   journal={SIAM Journal on Computing},
   publisher={Society for Industrial & Applied Mathematics (SIAM)},
   author={Shor, Peter W.},
   year={1997},
   pages={1484–1509} }

@misc{cpz24precomp,
      author = {Joseph Carolan and Alexander Poremba and Mark Zhandry},
      title = {(Quantum) Indifferentiability and Pre-Computation},
      howpublished = {Cryptology {ePrint} Archive, Paper 2024/1727},
      year = {2024},
      url = {https://eprint.iacr.org/2024/1727}
}

@InProceedings{CHS19,
	author="Czajkowski, Jan
	and H{\"u}lsing, Andreas
	and Schaffner, Christian",
	editor="Boldyreva, Alexandra
	and Micciancio, Daniele",
	title="Quantum Indistinguishability of Random Sponges",
	booktitle="Advances in Cryptology -- CRYPTO 2019",
	year="2019",
	publisher="Springer International Publishing",
	address="Cham",
	pages="296--325"}

@misc{MMW24,
	author = {Christian Majenz and Giulio Malavolta and Michael Walter},
	title = {Permutation Superposition Oracles for Quantum Query Lower Bounds},
	howpublished = {Cryptology {ePrint} Archive, Paper 2024/1140},
	year = {2024},
	url = {https://eprint.iacr.org/2024/1140}
}

@inproceedings{CoudronM19,
    author =	 {Matthew Coudron and Sanketh Menda},
    title =	 {Computations with Greater Quantum Depth Are Strictly More Powerful (Relative to an Oracle)},
    booktitle =	 { Proceedings of the 52nd Annual ACM Symposium on Theory of Computing},
    pages =	 {889–901},
    year =	 2020,
    doi =		 {10.1145/3357713.3384269},
    eprint={1909.10503},
      archivePrefix={arXiv}
}

@inproceedings{BCGKPW20,
	author = {Shalev Ben-David and Andrew M. Childs and Andr{\'a}s Gily{\'e}n and William Kretschmer and Supartha Podder and Daochen Wang},
	booktitle = {Proceedings of the 61st IEEE Symposium on Foundations of Computer Science},
	doi = {10.1109/FOCS46700.2020.00066},
	eprint = {2006.12760},
    archivePrefix={arXiv},
	pages = {649-660},
	title = {Symmetries, graph properties, and quantum speedups},
	year = {2020}
}

@inproceedings{GV20,
  author={András Gilyén and Umesh Vazirani},
  title={({Sub})Exponential advantage of adiabatic quantum computation with no sign problem},
  eprint={2011.09495},
  archivePrefix={arXiv},
  booktitle={Proceedings of STOC 2021},
  pages={1357-1369},
  year={2021}}

@misc{LWWZ25,
  author={Jiaqi Leng and Kewen Wu and Xiaodi Wu and Yufan Zheng},
  title={({Sub})Exponential Quantum Speedup for Optimization},
  eprint={2504.14841},
  archivePrefix={arXiv},
  year={2025}
}

@misc{GL24,
  author={Allan Grønlund and Kasper Green Larsen},
  title={An Exponential Separation Between Quantum and Quantum-Inspired Classical Algorithms for Machine Learning},
  eprint={2411.02087},
  archivePrefix={arXiv},
  year={2024}}

@article{Llo96,
	author = {Seth Lloyd},
	doi = {10.1126/science.273.5278.1073},
	journal = {Science},
	number = {5278},
	pages = {1073-1078},
	title = {Universal quantum simulators},
	volume = {273},
	year = {1996}}

@article{Aaronson21,
author = {Aaronson, Scott}, 
title = {Open Problems Related to Quantum Query Complexity}, 
year = {2021}, 
volume = {2}, 
number = {4}, 
doi = {10.1145/3488559}, 
journal = {ACM Transactions on Quantum Computing},
articleno = {14}, 
numpages = {9},
pages={1-9},
eprint={2109.06917},
archivePrefix={arXiv}}

@misc{Li23,
  author = {Jianqiang Li},
  title = {Exponential speedup of quantum algorithms for the pathfinding problem},
  eprint={2307.12492},
  archivePrefix={arXiv},
  year={2023}
}

@misc{LZ23,
  author = {Jianqiang Li and Sebastian Zur},
  title = {Multidimensional Electrical Networks and their Application to Exponential Speedups for Graph Problems},
  eprint = {2311.07372},
  archivePrefix={arXiv},
  year={2023}
}

@misc{LT24,
  author = {Jianqiang Li and Yu Tong},
  title = {Exponential Quantum Advantage for Pathfinding in Regular Sunflower Graphs},
  eprint = {2407.14398},
  archivePrefix={arXiv},
  year={2024}
}

@article{Rosmanis11,
	author = {Rosmanis, Ansis},
	doi = {10.1103/PhysRevA.83.022304},
	eprint = {1004.4054},
    archivePrefix={arXiv},
	journal = {Physical Review A},
	number = {2},
	pages = {022304},
	title = {Quantum snake walk on graphs},
	volume = {83},
	year = {2011}
}

@InProceedings{CCG22,
  author =	{Childs, Andrew M. and Coudron, Matthew and Gilani, Amin Shiraz},
  title =	{Quantum algorithms and the power of forgetting},
  booktitle =	{14th Innovations in Theoretical Computer Science Conference (ITCS 2023)},
  pages =	{37:1--37:22},
  series =	{Leibniz International Proceedings in Informatics (LIPIcs)},
  ISSN =	{1868-8969},
  year =	{2023},
  volume =	{251},
  doi =		{10.4230/LIPIcs.ITCS.2023.37},
  eprint = {2211.12447},
  archivePrefix={arXiv}
}

@misc{MP26,
  title = {Hardness of Pathfinding in a Welded Tree},
  author = {David Miloschewsky and Supartha Podder},
  eprint = {2609.20651},
  archivePrefix={arXiv},
  year = {2026}
}

@inproceedings{CGP26,
  author =	{Cornelissen, Arjan and Gilani, Amin Shiraz and Patro, Subhasree},
  title =	{Quantum Algorithms for Path and Cycle Containment Problems},
  booktitle =	{Approximation, Randomization, and Combinatorial Optimization. Algorithms and Techniques (APPROX/RANDOM 2026)},
  pages =	{72:1--72:23},
  series =	{Leibniz International Proceedings in Informatics (LIPIcs)},
  year =	{2026},
  volume =	{392},
  doi =		{10.4230/LIPIcs.APPROX/RANDOM.2026.72},
  eprint={2605.09017},
  archivePrefix={arXiv},
}

\newpage

\appendix
\section{Clipping the color-word subspace}
\label{sec:appendix}

We prove \Cref{lem:placement-lift,lem:clipped-subspace}, using the
notation of \Cref{subsec:clipping-placements}. In particular, a block
has $\sqrt N$ leaves, a color-$c$ matching has size $K_c$ with
$N/2\le K_c\le N$, and the size of a description $(A,B)$ is the
number of vertices in its core. Labels outside the blocks do not count
toward this size, and their number is not bounded in this appendix.

The proof has three steps. First we recover the core from the database
and represent the color-word states by functions of independent core
placements. Next we bound the effect of removing block collisions and
replacing independent weld samples by matching weights. Finally, we
identify the two possible clipping events and compare the original and
clipped spans. We retain the distinction between an individual
color-word state and an arbitrary vector in their span throughout.

\subsection{The core and its placement representation}
\label{app:core-representation}

The prefix forest of $(A,B)$ has a vertex for each rooted prefix of a
word in $\mathbf A$ or $\mathbf B$, and an edge for each one-letter
extension. Prefixes starting at different global roots are distinct,
even if their color strings agree. Only the specified words are marked,
with their assigned labels; the other prefix vertices are unmarked.
We call edges inside the two binary trees \emph{ordinary} edges, to
distinguish them from weld edges.

We first record two facts about ordinary paths. A nonbacktracking
ordinary walk from a global root always descends: after arriving along
a parent edge, a different color can only select a child edge. Thus the
first $n/2$ letters specify the initial block, and the first $n$ letters
lead to a leaf. Also, any color-preserving map of a properly edge-colored
tree into one of the binary trees is injective. Indeed, it is locally
injective, and a nontrivial path with equal endpoint images would give a
nonbacktracking closed walk in a tree. Consequently, an ordinary path
from a leaf to a block root has exactly $n/2$ upward steps. These facts
hold even for placements that have block collisions.

For a database $D=(P,W)$ occurring in a color-word state, consider the
graph formed by all ordinary edges inside blocks and the stored weld
edges. Retain the connected components containing a recorded label or
weld. They are trees: each newly traversed weld enters a new block,
and contracting blocks gives a forest. In each such component, take
the smallest subtree containing its recorded labels and stored weld
endpoints. We refer to this subtree as the core read from the database.
Write $P_{\mathrm{top}}(D)$ for the restriction of $P$ to vertices above
the blocks, and abbreviate $\omega(D)=\omega(P,W)$.

\begin{lemma}[Recovering the core from the database]
\label{lem:app-core-type}
For a fixed nonzero color-word state $\ket{A,B}$, the top labeling
$P_{\mathrm{top}}(D)$ and the abstract marked, edge-colored core forest
are constant over its support. The latter is exactly $C(A,B)$ from
\Cref{def:core}. Here the abstract forest forgets embedded positions,
its original rooting, and which edges are welds, but retains edge
colors and marked labels. Each component contains a mark and has at
most one unmarked leaf. If present, that leaf is a weld endpoint in
the component's initial block, and its incident edge is a weld edge.
\end{lemma}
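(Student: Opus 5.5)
The plan is to fix one $(P,W)\in\mathbf M(A,B)$ and read off everything from the realization map. Minimality together with the absence of block violations should make the database essentially the image of the prefix forest of $(A,B)$ under the walk map. Because minimality allows no extraneous entries, every label in $P$ is $\ell_{\mathbf A}(w)$ or $\ell_{\mathbf B}(w)$ placed at the endpoint of a word. Likewise every weld pair in $W$ is crossed by some word. So the database is the set of labeled endpoints and crossed welds of the embedded words.

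I would first handle the top labels. A word of length at most $n/2$ stays above the blocks, since nonbacktracking ordinary walks from a root descend. Its endpoint position is determined by the color string alone, independent of $W$. Conversely, no overflow means a path never climbs back out of a block through a non-weld edge, so it never returns above the blocks. Hence $P_{\mathrm{top}}(D)$ consists exactly of the labels of words of length $\le n/2$ at their color-determined positions, and it is constant on the support.

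Next I would show that the embedding is injective on each component and that components do not merge. Inside one block, a color-preserving map of the prefix subtree is injective, by the tree-injectivity fact already noted. A weld crossing always enters a new block visit. By no-collision, distinct visits (distinct rooted prefixes $p_i^w$) land in distinct blocks, and words with different initial blocks never share a block. So the graph of in-block ordinary edges plus stored welds, restricted to the recorded data, is isomorphic as an edge-colored marked forest to the union over $p_i$ of the prefix subtrees $\operatorname{cl}(\mathbf A_{p_i})$ (and likewise for $\mathbf B$), truncated above the block root. Taking minimal subtrees spanning marks and stored weld endpoints on the database side should then match the subtree $T_{p_i}$ spanning $a_{p_i}$ and the marks. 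For this I check that $a_{p_i}$ corresponds to the branch point or first weld endpoint. If $|g_{p_i}|\ge n$, then $a_{p_i}$ is the leaf at which all words cross the weld, and it is a stored weld endpoint. If $|g_{p_i}|<n$, then $a_{p_i}=g_{p_i}$ is either a mark or a branch vertex, which already lies in the spanning hull. The abstract forest therefore equals $C(A,B)$ and is independent of $(P,W)$.

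Finally, for the leaf structure, I would argue as follows. A leaf of $T_{p_i}$ is either a marked word or $a_{p_i}$. If $a_{p_i}$ is an unmarked leaf, then $|g_{p_i}|\ge n$ and all words continue through it. By descent they continue across its weld, so it is a leaf of the initial block with a single weld core edge. Unmarked leaves elsewhere are excluded because every nonmarked prefix lies on a path to a marked word. Each component contains a mark since $\mathbf A_{p_i}$ is nonempty. I expect the main obstacle to be the injectivity and non-merging step. It requires using the collision definition carefully, including the exception for shared rooted prefixes, to exclude identifications between left- and right-rooted components and returns to earlier blocks.
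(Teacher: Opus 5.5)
Your proposal is correct and follows essentially the paper's route: the absence of collisions and overflow makes each occupied component a tree built block by block from its initial block, and the top labels come from fixed ordinary paths. The hull of marks and weld endpoints is then identified with $T_{p_i}$ by checking that $a_{p_i}$ (the paper's vertex $z$) is a mark, a branch point, or a weld endpoint, and is the only possible unmarked leaf.

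There are two small slips. First, a word of length exactly $n/2$ ends at a block root, which belongs to its block, so top labels come only from words of length $<n/2$. Second, when $|g_{p_i}|=n$ the vertex $a_{p_i}$ may be a mark or a branch into two welds rather than "the leaf at which all words cross the weld", though it still lies in the hull.
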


\begin{proof}
Each new block visit enters a previously unvisited block. Thus each
stored weld attaches a new block; it cannot connect already explored
components or create a cycle. Each occupied component has one initial
block. Since each maximal specified word ends at a mark, every occupied
component contains a mark.

Words ending outside the blocks follow fixed ordinary paths, so their
label records are fixed. Group the remaining words by their rooted
prefix of length $n/2$. In the prefix tree of each group, start at this
prefix and follow the unique child until reaching a mark, a vertex with
two children, or a prefix of length $n$, whichever comes first. Let $z$
be this vertex. It is the vertex denoted $a_{p_i}$ (or its right-tree
counterpart) in \Cref{def:core}. The path preceding $z$ consists of fixed
ordinary edges and contains no mark, weld endpoint, or branch needed to
connect the terminals. If $z$ is unmarked and has only one child, its
image is a leaf of a binary tree, and the edge to its child is a weld.

Delete the vertices preceding $z$ and their incident edges. Every
remaining branch leads to a mark or a weld endpoint, so the remaining
tree is precisely the smallest subtree containing those terminals.
It is also the component prescribed by \Cref{def:core}. This
construction depends only on the words and their labels. Its leaves
are marked except possibly $z$ in the unmarked, one-child case.
In that case $z$ is the sole unmarked leaf and lies in the initial block.
Finally, incident edges of the core have distinct colors because the
realized core is a subgraph of the properly colored welded tree.
\end{proof}

\paragraph{Fixing a sector.}
A sector fixes $P_{\mathrm{top}}(D)$ and the abstract marked,
edge-colored core forest, as in \Cref{subsec:clipping-placements}.
\Cref{lem:app-core-type} shows that each color-word state has support
in exactly one sector. Different sectors have disjoint database
supports, and clipping only removes basis terms. This proves
\eqref{eq:placement-sectors} for both the original and clipped spaces.
It also proves that the size subspaces $\mathcal H_t$ are mutually
orthogonal.

Every state of core size zero is unchanged by clipping. If $x$ is
absent there is no update. Otherwise $x$ labels a vertex above the
blocks. A query that enters a block can do so because no block is
occupied, and a query staying outside the blocks causes no violation.
The missing blue edge at a global root also leaves the description
unchanged. This proves the clipping lemma for $t=0$.

For a nonempty sector $\sigma$, let its cores be $C_1,\ldots,C_q$ and
put $t_i=|C_i|$, with $\sum_i t_i\le t$. Write
$\mathcal H=\mathcal H_\sigma$ and let $\mathcal H^{x,c}$ denote the
span of the clipped states in this sector. We suppress the sector
subscript on $G_\sigma,S_{0,\sigma},S_\sigma,M_\sigma$, and
$\mathcal K_\sigma$ when working in this fixed sector. Thus $M$ below
is the placement operator $M_\sigma$.

For brevity, call a core \emph{free} if all its leaves are marked;
a singleton marked core is also free. Otherwise call it \emph{forced}
and write $a_i$ for its entry leaf. These names merely
distinguish the two cases in \eqref{eq:placement-law}: any vertex of
a free core may be an anchor, whereas only $a_i$ may anchor a forced
core. An anchor fixes the embedded position of that vertex, at a value
of positive probability.

To distinguish abstract vertices, choose the least marked label in
each core and name each vertex by the color sequence along the path
from that mark. Proper edge coloring makes these names unique; they
do not add label entries to the database.

\paragraph{The placement law.}
Recall that $\Omega$ is the set of embedded vertices in blocks,
including block roots, and that $Q_c$ is given by
\eqref{eq:placement-kernel}. For a core placement
$\phi_i:C_i\to\Omega$, \eqref{eq:placement-law} reads
\[
 \mu_i(\phi_i)=\frac{\theta_i(\phi_i)}{Z_i}
     \prod_{ab\in E(C_i)}Q_{c(ab)}(\phi_i(a),\phi_i(b)),
 \qquad X_{i,a}=\phi_i(a).
\]
For a forced core, $\theta_i$ requires $\phi_i(a_i)$ to be a leaf
eligible for the color of its incident edge. That edge must then be a
weld. For a free core, $\theta_i=1$. This law imposes neither
injectivity nor the absence of block collisions.

Conditional on an allowed anchor $X_{i,a}=v$, this distribution has
the following sampling interpretation. Root the abstract tree at $a$
and explore from $v$. Ordinary steps are deterministic. At each weld
step, choose its partner independently and uniformly from the $K_c$
eligible leaves on the opposite side, with replacement. Stop if a
requested edge would leave a block through its root's parent edge.
Call completion without such an exit \emph{survival}. The product of
the transition probabilities, conditioned on survival, is exactly
$\mu_i(\,\cdot\mid X_{i,a}=v)$. The restriction to $\Omega$ excludes
exits through block roots; it does not exclude block collisions or
repeated weld endpoints. The product law $\mu=\bigotimes_i\mu_i$
places different cores independently.

\subsubsection{Decomposing functions of anchors}

All norms in the placement model use $L^2(\mu)$, or the indicated
factor $L^2(\mu_i)$. Explicitly,
\[
 \langle f,g\rangle_{L^2(\mu_i)}
   =\sum_{\phi_i}\mu_i(\phi_i)\overline{f(\phi_i)}g(\phi_i).
\]
The notation $L^2(X_{i,a})$ means the subspace of functions of that
coordinate. As in \eqref{eq:placement-space}, let $I_i=C_i$ for a
free core and $I_i=\{a_i\}$ for a forced core. Then
\[
 \mathcal K_i=\sum_{a\in I_i}L^2(X_{i,a}),
 \qquad \mathcal K=\bigotimes_{i=1}^q\mathcal K_i.
\]
Tensor factors always retain the order $1,\ldots,q$.

The sum map for core $i$ is
\begin{equation}
 F_i:\bigoplus_{a\in I_i}L^2(X_{i,a})\longrightarrow\mathcal K_i,
 \qquad F_i((f_a)_{a\in I_i})=\sum_{a\in I_i}f_a.
\end{equation}
Its domain is an external direct sum: a tuple $(f_a)$ has squared norm
$\sum_a\norm{f_a}^2$, even when the corresponding function spaces overlap
inside $L^2(\mu_i)$.

\begin{lemma}[Decomposition by core anchors]
\label{lem:app-decomposition}
For each core $i$, there are mutually orthogonal spaces
$\mathcal V_{i,a}\subseteq L^2(X_{i,a})$, $a\in I_i$, such that
\begin{equation}
 \mathcal K_i=\bigoplus_{a\in I_i}\mathcal V_{i,a}.
\end{equation}
Consequently $F_i$ has an isometric right inverse. In particular, every
$f\in\mathcal K_i$ has a preimage satisfying
\begin{equation}
 F_i\alpha=f,\qquad \norm{\alpha}\le\norm{f}.
 \label{eq:app-anchor-right-inverse}
\end{equation}
For two distinct cores, $F_i\otimes F_j$ likewise has an isometric right
inverse.
\end{lemma}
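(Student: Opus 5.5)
For a forced core, $I_i=\{a_i\}$, so we take $\mathcal V_{i,a_i}=L^2(X_{i,a_i})=\mathcal K_i$ and $F_i$ is the identity. All the work is in the free case. There the plan is not to orthogonalize the spaces $L^2(X_{i,a})$ one vertex at a time. Gram--Schmidt on arbitrary subspaces fails in general (three lines in a plane). Instead we exploit two structural features of the law $\mu_i$ in \eqref{eq:placement-law}.

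\textbf{Ordinary edges are deterministic in both directions.} Inside a block, an ordinary edge of a prescribed color from an embedded vertex has a unique endpoint. On the support of $\mu_i$, which excludes exits through block roots, $X_{i,b}$ is therefore a fixed function of $X_{i,a}$ and conversely whenever $ab$ is an ordinary edge. Consequently, if we split the free core $C_i$ into its maximal ordinary pieces $R_1,\dots,R_m$ (the pieces joined by weld edges), then $L^2(X_{i,a})$ is one and the same space $\mathcal W_j$ for all $a\in R_j$. So $\mathcal K_i=\sum_j\mathcal W_j$, and it suffices to decompose this sum. We will put $\mathcal V_{i,a}$ nonzero for only one chosen representative $a\in R_j$ and set it to $0$ otherwise; zero spaces trivially satisfy the claims.

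\textbf{Welds are uniform.} Contracting the pieces $R_j$ gives a tree whose edges are welds. By the sampling interpretation of $\mu_i$, conditional on the piece on one side of a weld, the partner leaf is uniform over eligible leaves on the opposite side, conditioned on survival. By the Markov property along this tree, for $f\in\mathcal W_j$ and $k\ne j$, the function $\mathbb E[f\mid X_{R_k}]$ depends on $X_{R_k}$ only through a coarse statistic. I expect this statistic to be the side and eligibility class of the attaching weld endpoint, which is essentially constant given the abstract core, since the core fixes colors and which side each piece lies on.

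The construction is then as follows. Root the contracted tree at $R_1$. Let $\mathcal V_1=\mathcal W_1$. For $j\ge2$, let $\mathcal V_j$ be the functions in $\mathcal W_j$ with zero conditional expectation onto the coarse $\sigma$-algebra of the weld joining $R_j$ to its parent. Each $\mathcal V_j$ lies in $L^2(X_{i,a})$ for the representative $a$ of $R_j$, as the statement requires. We then check two things. First, $\mathcal W_j=\mathcal V_j\oplus(\text{coarse part})$, and the coarse part already lies in the ancestors' spaces, because eligibility of the weld endpoint is a function of the parent piece too. An induction down the tree then gives $\sum_j\mathcal W_j=\sum_j\mathcal V_j$. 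Second, for $f\in\mathcal V_j$ and $g\in\mathcal V_k$ with $j\ne k$, say $R_j$ not an ancestor of $R_k$, we condition on the pieces on $R_k$'s side of $R_j$'s parent weld: $\langle f,g\rangle=\langle\mathbb E[f\mid\cdot],g\rangle=0$, since that conditional expectation is coarse and hence zero. Once the sum is an orthogonal direct sum, the isometric right inverse of $F_i$ sends $f$ to its tuple of orthogonal components, which gives \eqref{eq:app-anchor-right-inverse}. For two cores, $\mu=\mu_i\otimes\mu_j$ is a product measure, so the tensor products $\mathcal V_{i,a}\otimes\mathcal V_{j,b}$ are mutually orthogonal and span $\mathcal K_i\otimes\mathcal K_j$, and the same map works for $F_i\otimes F_j$.

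The main obstacle is the second feature: proving that conditional expectation across a weld is genuinely coarse despite the survival conditioning. Conditioning on survival means ``no ordinary exit through a block root,'' and this depends on where the partner lands, which could reweight partners non-uniformly. My first attempt would use the product form $\prod Q_c$ of $\mu_i$. For an edge that is a weld, $Q_c$ is constant over eligible pairs. Conditioning on one side therefore multiplies the law on the other side by a factor independent of the conditioning value, so the subtree beyond a weld is independent of the near side given the coarse class. If eligibility classes turn out to be nested or color-dependent in a way that breaks this, I would refine the coarse $\sigma$-algebra to the class of the endpoint and verify that the parent piece determines it.
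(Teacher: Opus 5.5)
\textbf{Gap: the ordinary pieces are not fixed.} Your first structural step fails. You treat the split of a free core into maximal ordinary pieces $R_1,\dots,R_m$, and the contracted weld tree built from them, as determined by the abstract core. It is not. A sector fixes only the colored, marked abstract forest, and explicitly does not fix which core edges are welds. Under $\mu_i$, an abstract edge is ordinary or a weld according to where the placement puts it. For a free core the anchor may sit anywhere in $\Omega$, so the same abstract edge is ordinary in some positive-probability placements and a weld in others. (Nor does the core fix which side a piece lies on, though your statistic could absorb that.)

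Take a single $c$-colored edge $ab$ with both ends marked. It can be placed on an ordinary $c$-edge inside a block (weight $1$) or across the weld on eligible leaves (weight $K_c^{-1}$).
\begin{itemize}
\item In your one-piece picture, $\mathcal K_i=L^2(X_{i,a})$ is wrong. On the weld event $X_{i,b}$ is uniform over $K_c$ leaves given $X_{i,a}$, so $L^2(X_{i,b})\not\subseteq L^2(X_{i,a})$.
\item In your two-piece picture, the orthogonality step fails. On the ordinary event, $\mathbb E[g\mid X_{i,a}]$ equals $g$ evaluated at the determined neighbor, and this is not measurable with respect to any side or eligibility statistic.
\end{itemize}

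\textbf{The missing idea.} Put the edge type itself into the conditioning statistic, one abstract edge at a time, instead of collapsing pieces. Root $C_i$ at a vertex $\rho$. For $a\neq\rho$, let $\gamma_a$ name the connected component of the bipartite kernel $Q_{c(ap(a))}$ that contains $(X_{i,a},X_{i,p(a)})$. This is a singleton on the ordinary event, and one of the two constant blocks (one per direction) on the weld event.

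These components have disjoint rows and columns, because a leaf eligible for color $c$ has no ordinary $c$-edge. Hence $\gamma_a$ is a function of either endpoint alone. Factorizing $\prod Q$ within a component then gives $\mathbb E[g\mid (X_{i,v})_{v\notin V_a}]=\mathbb E[g\mid\gamma_a]$ for $g\in L^2(X_{i,a})$. Set $\mathcal V_{i,a}=\{g\in L^2(X_{i,a}):\mathbb E[g\mid\gamma_a]=0\}$; these spaces vanish automatically on the ordinary event.

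With this definition, the rest of your plan goes through as you wrote it, and this is the paper's proof:
\begin{itemize}
\item orthogonality, by conditioning away the subtree of whichever vertex is not an ancestor of the other;
\item spanning, by peeling $\mathbb E[g\mid\gamma_a]$ off as a function of the parent position and inducting toward $\rho$;
\item the two-core case, by tensoring the right inverses.
\end{itemize}
Your resolution of the survival worry via the product form is also right. Restricting to $\Omega$ already builds survival into $\prod Q_c$, since a block root's parent edge is not inside a block.
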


\begin{proof}
For a forced core, $F_i$ is the identity. Suppose that $C_i$ is free, and
root the abstract tree $C_i$ at a vertex $\rho$. For $a\ne\rho$, write
$p(a)$ for its parent and $V_a$ for its descendant subtree, including $a$.

Regard $Q_{c(ap(a))}$ as a bipartite matrix, with separate copies of
$\Omega$ indexing its rows and columns. Its nonzero components are singleton
entries from ordinary edges and two constant rectangular blocks from weld
steps, one for each direction across the weld. These components have
disjoint rows and columns: a leaf eligible for a color has no ordinary
edge of that color. Let $\gamma_a$ identify the component containing
$(X_{i,a},X_{i,p(a)})$. It is a function of either endpoint alone.

Fix the variables outside $V_a$ and sum the density in \eqref{eq:placement-law} over
$V_a\setminus\{a\}$. Within any component of $Q_{c(ap(a))}$, its dependence
on the two endpoints factors. The conditional law of $X_{i,a}$ therefore
depends on the outside variables only through $\gamma_a$. For any
$g\in L^2(X_{i,a})$,
\begin{equation}
 \mathbb E[g\mid (X_{i,v})_{v\notin V_a}]
       =\mathbb E[g\mid \gamma_a].
 \label{eq:app-anchor-conditioning}
\end{equation}
Set
\begin{equation}
 \mathcal V_{i,\rho}=L^2(X_{i,\rho}),\qquad
 \mathcal V_{i,a}=\{g\in L^2(X_{i,a}):\mathbb E[g\mid \gamma_a]=0\}
 \quad(a\ne\rho).
\end{equation}
For distinct vertices $a,b$, one of them, say $a$, is not an ancestor of
the other. Then $b\notin V_a$, and \eqref{eq:app-anchor-conditioning} shows
that every vector in $\mathcal V_{i,a}$ is orthogonal to every vector in
$\mathcal V_{i,b}$.

These spaces span $\mathcal K_i$. Indeed, for $a\ne\rho$ and
$g\in L^2(X_{i,a})$, write
\begin{equation}
 g=(g-\mathbb E[g\mid \gamma_a])+\mathbb E[g\mid \gamma_a].
\end{equation}
The first term lies in $\mathcal V_{i,a}$, and the second is a function of the
parent position $X_{i,p(a)}$. Repeating toward $\rho$ gives the claimed
decomposition. Thus
$R_i f=(\Pi_{\mathcal V_{i,a}}f)_{a\in I_i}$ satisfies $F_iR_i f=f$ and
$\norm{R_i f}^2=\sum_a\norm{\Pi_{\mathcal V_{i,a}}f}^2=\norm{f}^2$.
Finally, $R_i\otimes R_j$ is an isometric right inverse of
$F_i\otimes F_j$.
\end{proof}

For each allowed anchor, use the normalized indicator
\begin{equation}
 b_{i,a,v}\coloneqq\frac{\ind_{\{X_{i,a}=v\}}}
                       {\sqrt{\Pr_{\mu_i}[X_{i,a}=v]}}.
\end{equation}
For fixed $a$, these indicators form an orthonormal basis of
$L^2(X_{i,a})$. Restricting $F_i$ to a collection of anchors means
restricting its domain to the span of their indicators in the corresponding
summands. The following estimate converts conditional probabilities into
norm bounds for these restrictions.

\begin{lemma}[Restriction to an event]
\label{lem:app-restriction}
Let $F_{i,*}$ be the restriction of $F_i$ to a collection of allowed
anchors involving $\ell$ distinct core vertices. If
$\Pr_{\mu_i}[B\mid X_{i,a}=v]\le p$ for every anchor in this collection,
then
\begin{equation}
 \norm{\ind_BF_{i,*}}^2\le\ell p.
 \label{eq:app-event-restriction}
\end{equation}
The same bound holds for a restriction of $F_i\otimes F_j$ to a collection
of pairs of anchors in distinct cores: $p$ bounds the conditional
probability given both anchors, and $\ell$ counts the distinct pairs of core
vertices.
\end{lemma}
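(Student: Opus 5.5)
We bound the operator norm directly by testing against an arbitrary tuple $\alpha=(\alpha_a)_{a}$ in the restricted domain of $F_{i,*}$. Write the collection of anchors as pairs $(a,v)$, and let $\mathcal A$ be the set of the $\ell$ distinct core vertices $a$ that appear. For each $a\in\mathcal A$, the component $\alpha_a$ is a linear combination $\sum_v \alpha_{a,v} b_{i,a,v}$ over the allowed positions $v$ paired with $a$, and $\norm{\alpha}^2=\sum_{a}\sum_v|\alpha_{a,v}|^2$. Then $\ind_B F_{i,*}\alpha=\sum_{a\in\mathcal A}\ind_B\alpha_a$. Cauchy--Schwarz over the $\ell$ summands gives
\[
 \norm{\ind_BF_{i,*}\alpha}^2\le \ell\sum_{a\in\mathcal A}\norm{\ind_B\alpha_a}^2 .
\]
So it suffices to show $\norm{\ind_B\alpha_a}^2\le p\norm{\alpha_a}^2$ for each fixed $a$.

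For a fixed anchor vertex $a$, the indicators $\ind_{\{X_{i,a}=v\}}$ have disjoint supports. Hence the functions $\ind_B b_{i,a,v}$ are mutually orthogonal in $L^2(\mu_i)$. Their squared norms are
\[
 \norm{\ind_B b_{i,a,v}}^2=\frac{\Pr_{\mu_i}[B\cap\{X_{i,a}=v\}]}{\Pr_{\mu_i}[X_{i,a}=v]}=\Pr_{\mu_i}[B\mid X_{i,a}=v]\le p .
\]
Orthogonality then yields
\[
 \norm{\ind_B\alpha_a}^2=\sum_v|\alpha_{a,v}|^2\Pr[B\mid X_{i,a}=v]\le p\norm{\alpha_a}^2 .
\]
Summing over $a$ gives $\norm{\ind_BF_{i,*}\alpha}^2\le \ell p\norm\alpha^2$, which is \eqref{eq:app-event-restriction}. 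The only subtle point is that the summands for different $a$ need not be orthogonal inside $L^2(\mu_i)$. This is exactly why the factor $\ell$ appears, and it is paid through Cauchy--Schwarz rather than through any decomposition. The external direct sum norm on the domain of $F_i$ is what makes the bound clean.

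For the two-core version, the argument is identical. Index by pairs $(a,a')$ of core vertices with $a\in C_i$ and $a'\in C_j$. The product indicators $b_{i,a,v}\otimes b_{j,a',v'}$ for fixed $(a,a')$ are orthonormal and have disjoint supports under $\mu_i\otimes\mu_j$. Moreover, the squared norm of $\ind_B$ applied to each of them is
\[
 \Pr[B\mid X_{i,a}=v,X_{j,a'}=v']\le p .
\]
The domain of $F_i\otimes F_j$ is the external direct sum over pairs $(a,a')$, so Cauchy--Schwarz over the $\ell$ distinct pairs again yields $\ell p$. I expect no real obstacle here. The only care needed is to use disjointness of supports for a fixed anchor vertex and not to assume orthogonality across different anchor vertices.
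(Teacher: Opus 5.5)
Your proof is correct and is essentially the paper's argument. Disjoint supports of the anchor indicators within a fixed summand give $\norm{\ind_B f_a}^2\le p\norm{f_a}^2$, and Cauchy--Schwarz over the $\ell$ summands, whose images need not be orthogonal, supplies the factor $\ell$; the two-core case goes through identically with normalized product indicators.
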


\begin{proof}
Within a fixed summand, the anchor indicators have disjoint supports.
Hence every function $f_a$ using only the selected anchor values satisfies
$\norm{\ind_B f_a}^2\le p\norm{f_a}^2$. Cauchy--Schwarz over the $\ell$
summands gives
\begin{equation}
 \norm{\ind_B\sum_a f_a}^2
 \le \ell\sum_a\norm{\ind_B f_a}^2
 \le\ell p\sum_a\norm{f_a}^2.
\end{equation}
For two independent cores, the product indicators are normalized and have
disjoint supports within each pair of summands. The same argument applies.
\end{proof}

\subsubsection{From placements to color-word states}

For a placement of positive $\mu$-probability, call an abstract edge
ordinary or a weld edge according to its image in the underlying graph.
Its \emph{ordinary components} are those left after deleting the weld
edges. A placement is \emph{good} if distinct ordinary components,
within or between cores, occupy distinct blocks. Let
$\Phi_{\mathrm{good}}$ be the set of good placements in
$\Phi=\prod_{i=1}^q\{\phi_i:C_i\to\Omega\}$, and let $G$ be multiplication
by its indicator in $L^2(\mu)$.

For $\phi\in\Phi_{\mathrm{good}}$, define $D_\phi=(P_\phi,W_\phi)$ by recording the fixed
top labels, the labels at the images of the core marks, and the images of
all weld edges of the cores. This is a database of partial injections.
Indeed, each ordinary component maps injectively by the ordinary-path observation above,
and different components occupy different blocks. Thus all core vertices
have distinct images. Proper edge coloring also ensures that two distinct
weld edges of the same color have no common endpoint. If $e_c(\phi)$
counts the abstract color-$c$ edges mapped to weld edges, then
$|W_{\phi,c}|=e_c(\phi)\le K_c$.

On good placements, the independent placement measure in \eqref{eq:placement-law}
has density
\begin{equation}
 \mu(\phi)=\left(\prod_{i=1}^q Z_i^{-1}\right)
                 \prod_{c\in\{r,b,g\}}K_c^{-e_c(\phi)}.
\end{equation}
To replace these independent-sampling weights by the matching weights in
\eqref{eq:placement-weight}, multiply their square roots by
\begin{equation}
 S_0(\phi)\coloneqq
 \prod_{c\in\{r,b,g\}}
       \left(\frac{K_c^{e_c(\phi)}}{(K_c)_{e_c(\phi)}}\right)^{1/2}.
 \label{eq:app-matching-correction}
\end{equation}
The labeling factor in \eqref{eq:placement-weight} is constant: $|P_\phi|$ is the
fixed number of top labels plus the fixed number of core marks. In
particular, this calculation does not require $|P_\phi|\le t$.
The good set is nonempty, since the sector contains a database realized by
a color-word state. Define
\begin{equation}
 S(\phi)\coloneqq
 \begin{cases}
  \dfrac{S_0(\phi)}{\max_{\psi\in\Phi_{\mathrm{good}}}S_0(\psi)},
       &\phi\in\Phi_{\mathrm{good}},\\[1ex]
  1,&\text{otherwise}
 \end{cases}
 \label{eq:app-normalized-correction}
\end{equation}
and
\begin{equation}
 M\coloneqq G S.
 \label{eq:app-placement-operator}
\end{equation}
Thus $M$ is a contraction on $L^2(\mu)$, and it commutes with every event
indicator.

\begin{lemma}[Exact color-word state representation]
\label{lem:app-representation}\leavevmode
\begin{enumerate}[nosep,label=(\roman*)]
\item The map $\phi\mapsto D_\phi$ is a bijection from
      $\Phi_{\mathrm{good}}$ onto the databases of sector $\sigma$.
\item Choose one allowed anchor $X_{i,a}=v$ in each core, and let $b$ be
      the product of their indicators. Then
\begin{equation}
 \sum_{\phi\in\Phi_{\mathrm{good}}}
       \sqrt{\mu(\phi)}\,(Mb)(\phi)\ket{D_\phi}
 \label{eq:app-anchor-state}
\end{equation}
is a scalar multiple of a color-word state of $\sigma$, or zero. Conversely, every
color-word state of $\sigma$ is obtained, up to a nonzero scalar, from such a choice
of anchors.
\end{enumerate}
\end{lemma}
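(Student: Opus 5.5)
For (i), I would construct the inverse map explicitly and check both directions. Given $\phi\in\Phi_{\mathrm{good}}$, the database $D_\phi$ has already been shown to consist of partial injections, and its top labels are the fixed ones of $\sigma$. To place it in the sector, I would run the core-reading procedure preceding \Cref{lem:app-core-type} on $D_\phi$: form the graph of ordinary block edges and stored welds, keep the components containing labels or welds, and take the minimal subtree spanning the recorded labels and weld endpoints. Goodness implies that distinct ordinary components lie in distinct blocks, and the ordinary-path injectivity observation implies that each one embeds. Hence the image of core $C_i$ is exactly one such component. Its minimal subtree is $\phi_i(C_i)$: every leaf of $C_i$ is a mark, or is the entry leaf, which is a weld endpoint because $\theta_i$ forces its incident edge to be a weld. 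So the abstract marked colored forest read off equals $C_1,\dots,C_q$. Injectivity of $\phi\mapsto D_\phi$ follows because the images of the marks are determined by $P_\phi$. The rest of each core is then determined by following colors from the least mark, using the naming convention. Surjectivity: a database in sector $\sigma$ arises from some color-word state, so \Cref{lem:app-core-type} identifies its core with the $C_i$. Its embedding defines a placement $\phi$ with $\mu(\phi)>0$: ordinary edges give $Q=1$, welds give $Q=K_c^{-1}$, no ordinary exit through a block root occurs (no overflow), and the entry leaf is eligible. The placement is good because $\mathbf M(A,B)$ excludes block collisions. Finally, I would check that $D_\phi$ returns the original database, which holds because minimality leaves no extra entries.

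For (ii), I would fix anchors $X_{i,a_i}=v_i$. Each anchor corresponds to a rooted description: the path from a global root down to $v_i$ is fixed by $v_i$, and through the ordinary stem it is determined by the initial block. The words $(A,B)$ are obtained by concatenating that root-to-$v_i$ color string with the colors of $C_i$ read from $a_i$. Which root applies is determined by the side of $v_i$. Then I would show that $\{\phi\in\Phi_{\mathrm{good}}:\phi_i(a_i)=v_i\ \forall i\}$ maps under $D$ bijectively onto $\mathbf M(A,B)$. Consistency and minimality come from part (i). The absence of violations is exactly goodness together with the exclusion of ordinary root exits. Conversely, every tree in $\mathbf M(A,B)$ places its core with the anchor at $v_i$. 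One subtlety to check: for a free core with an arbitrary anchor vertex, the description roots at $v_i$ through its ordinary path from the block root. If $v_i$ is not reachable compatibly, the sum is empty and gives zero, which the statement allows.

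For the weights, on this support $\sqrt{\mu(\phi)}\,b(\phi)=\prod_i\Pr[X_{i,a_i}=v_i]^{-1/2}\sqrt{\mu(\phi)}$. Multiplying by $S$ converts $\prod_c K_c^{-e_c/2}$ into $\prod_c (K_c)_{e_c}^{-1/2}$, up to the constants $Z_i$ and $\max S_0$. The label factor $(M)_{|P_\phi|}$ is constant in the sector. So the coefficient is proportional to $\sqrt{\omega(D_\phi)}\propto\sqrt{w}$, giving a scalar multiple of $\ket{A,B}$. The converse direction reads the anchors off a given $\ket{A,B}$: take $v_i$ to be the embedded position of $a_{p_i}$. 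This vertex is the entry leaf when the core is forced, and an arbitrary core vertex otherwise, with the core vertex fixed because all words share the stem.

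I expect the main obstacle to be the careful matching in (ii), for two reasons. First, a collision or overflow in a description is formulated via block traces of rooted words, and this must coincide exactly with non-goodness or a root exit of the placement; I would prove this equivalence visit by visit. Second, the distinct-label and injectivity constraints must hold automatically.
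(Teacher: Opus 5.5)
Your proposal is correct and follows the paper's proof essentially step for step. For (i) you read the cores off $D_\phi$ and recover $\phi$ from the marks and colored paths; for (ii) you attach one rooted description to each anchor choice, use $S$ to turn $\prod_c K_c^{-e_c/2}$ into $\prod_c (K_c)_{e_c}^{-1/2}$, and take the position of $a_{p_i}$ as the anchor for the converse.

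Two points need tightening. First, matching the read-off core does not by itself place $D_\phi$ in the support of a color-word state. The paper exhibits a realizing description that enters at the vertex $z$ of a chosen ordinary component closest to its block root, and your (ii) descriptions also supply this. Second, your root-to-$v_i$ concatenation must be freely reduced, i.e.\ entered at that $z$; otherwise it backtracks and is not a color word.
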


\begin{proof}
For (i), contracting the ordinary components of each $C_i$ gives a tree,
and distinct components occupy distinct blocks. For a free core every
leaf is marked, so every edge is needed to connect its marks. For a forced
core the unique unmarked leaf is a weld endpoint, and every other leaf is
marked. Thus the cores read from $D_\phi$ are exactly $C_1,\ldots,C_q$.

To realize $D_\phi$ by a description, choose an ordinary component in each
free core, or the component containing $a_i$ in each forced core. Let $z$
be the vertex of this component closest to its block root. Enter from the
global root along the ordinary path to $z$, then follow the core paths to
all marks. The entry path meets the core first at $z$. Different cores
use disjoint blocks, and the contracted component trees prevent a return
to a previously visited block. Adding the fixed top labels therefore gives
a description without block violations with database $D_\phi$, so $D_\phi\in\sigma$.

Conversely, every database of $\sigma$ gives a good placement of its cores.
Each occupied component is a tree, and its smallest subtree containing
the recorded labels and weld endpoints is unique. The marked labels and
colored paths recover every formally named vertex. This recovers $\phi$
uniquely and proves the bijection.

For (ii), an anchor $X_{i,a}=v$ in a free core determines its entire
ordinary component: trace colored edges from $a$ and stop each branch when
a weld is required. Its vertex $z$ closest to the block root is therefore
fixed by the anchor. Use the description just constructed, entering at
$z$ and following the core paths to all marks. Every leaf is marked, so
all these core edges and endpoints are necessary. Its good placements are
exactly those with $X_{i,a}=v$. In a forced core, enter along the ordinary
path to the anchored leaf $a_i$, then follow the core paths to its marks.
The incident weld is required to reach every mark, so this description
also has exactly the specified core and anchor constraint.

Combining these descriptions and the fixed top labels gives a color-word state
whenever the restricted good set is nonempty. If the initial blocks
coincide, that set is empty. Conversely, the construction in
\Cref{lem:app-core-type} fixes one allowed anchor per core of every color-word state.
Thus products of anchor indicators give precisely the color-word state supports.
The states $\ket{A,B}$ are normalized, but their weights are proportional
to $\omega$; a nonzero scalar per state does not change any of the spans.
By \Cref{eq:placement-weight,eq:app-matching-correction,eq:app-normalized-correction,eq:app-placement-operator}, the coefficient
$\sqrt{\mu(\phi)}S(\phi)$ is a constant multiple of
$\sqrt{\omega(D_\phi)}$ on the good set. This proves the assertion about
weights as well as supports.
\end{proof}

By part (i), identifying a function $f\in L^2(\mu)$ supported on good
placements with
\begin{equation}
 \sum_{\phi\in\Phi_{\mathrm{good}}}
                   \sqrt{\mu(\phi)}f(\phi)\ket{D_\phi}
 \label{eq:app-weighted-identification}
\end{equation}
is an isometry onto the database space of $\sigma$. Products of allowed
anchor indicators span $\mathcal K$, so part (ii) gives the identity
\begin{equation}
 \mathcal H=M\mathcal K.
 \label{eq:app-exact-representation}
\end{equation}
We use this identification throughout the rest of the proof. 

\subsection{Collision estimates}
\label{app:placement-lifts}

We now show that removing block collisions and correcting the weights
preserves the norm up to a constant factor, after excluding functions
that are necessarily annihilated. We first bound collision probabilities
conditional on anchors, then use \Cref{lem:app-restriction} to obtain norm
bounds. We obtain a preimage in $\mathcal K$ of norm at most twice the norm
of the corresponding color-word state vector.

\subsubsection{Independent exploration estimates}

For the rest of the proof, assume $t\ge1$ and
\begin{equation}
 \tau\coloneqq\frac{t^2}{N^{1/4}}\le\frac1{64}.
 \label{eq:app-small-regime}
\end{equation}
The zero-core case was handled above; if $t\ge1$ and
\eqref{eq:app-small-regime} fails, the right side of \eqref{eq:clipping-bound} is $1$ and the
claim is automatic. Since $t\ge1$ and $\sqrt N\ge1$,
\begin{equation}
 \frac{t}{\sqrt N}=\frac{\tau}{tN^{1/4}}\le\tau\le\frac1{64}.
 \label{eq:app-small-ratio}
\end{equation}
The initial ordinary component cannot overflow: its positions are fixed
by the anchor, so an exit there would make the anchor probability zero.

Any later overflow must be preceded by a weld edge
and then exactly $n/2$ upward ordinary steps. For each possible exit edge of
the abstract tree, the location of that weld on the path from the anchor is
fixed. In each destination block, at most one arrival leaf has the required
upward color sequence: its position is determined by tracing the sequence
backward from the block root. The chance of such an arrival is at most
$\sqrt N/K_c\le2/\sqrt N$. There are at most $t_i$ possible exit edges.
Write $\Pr_{\mathrm{exp}}$ for probabilities under the independent
exploration before conditioning on survival, with the indicated
anchors fixed. For every allowed anchor,
\begin{equation}
 \Pr_{\mathrm{exp}}[\text{survival}\mid X_{i,a}=v]
                         \ge1-2t_i/\sqrt N.
 \label{eq:app-survival}
\end{equation}

Let $\mathsf I_i$ be the event that distinct ordinary components of core $i$ occupy
the same block, and $J_{ij}$ the event that cores $i,j$ have a common block.
Recall that $\mu_i$, defined in \eqref{eq:placement-law}, is the distribution on
placements of core $i$. Conditional on an allowed anchor, it is the law
of the independent exploration described above, conditioned on survival.
For two cores, the product measure $\mu_i\otimes\mu_j$ describes independent
placements, each with this conditioning. Block collisions are still allowed.

\begin{lemma}[Uniform exploration estimates]
\label{app:exploration-estimates}
Under \eqref{eq:app-small-regime}, the following bounds hold for allowed anchors.
\begin{enumerate}[nosep]
\item For any anchor $X_{i,a}=v$,
\begin{equation}
 \Pr_{\mu_i}[\mathsf I_i\mid X_{i,a}=v]\le\frac{8t_i^2}{\sqrt N}.
\end{equation}

\item For anchors $X_{i,a}=v$ and $X_{j,b}=w$ in distinct cores,
      with $v,w$ in different blocks,
\begin{equation}
 \Pr_{\mu_i\otimes\mu_j}
 [J_{ij}\mid X_{i,a}=v,\ X_{j,b}=w]
 \le\frac{8t_it_j}{\sqrt N}.
\end{equation}

\item For an anchor $X_{i,a}=v$ outside a specified block $B_u$,
\begin{equation}
 \Pr_{\mu_i}[\text{core }i\text{ visits }B_u\mid X_{i,a}=v]
 \le\frac{8t_i}{\sqrt N}.
\end{equation}

\item Suppose that tracing the abstract path from $a$ to $r$,
      starting at $v$, requires a weld step before reaching $r$.
      Then
\begin{equation}
 \Pr_{\mu_i}[X_{i,r}\text{ is a block root}\mid X_{i,a}=v]
 \le\frac8{\sqrt N}.
\end{equation}
\end{enumerate}
\end{lemma}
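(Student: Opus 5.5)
All four estimates will be proved in the same way. I first bound each event under the unconditioned independent exploration $\Pr_{\mathrm{exp}}$ with the anchor(s) fixed. I then pass to $\mu_i$ using
\[
\Pr_{\mu_i}[\,\cdot\mid X_{i,a}=v]\le \frac{\Pr_{\mathrm{exp}}[\,\cdot\mid X_{i,a}=v]}{\Pr_{\mathrm{exp}}[\text{survival}\mid X_{i,a}=v]}.
\]
By \eqref{eq:app-survival} and \eqref{eq:app-small-ratio}, the denominator is at least $1-2/64$, so conditioning costs at most a factor $2$. For two cores the measure is a product, so the factor is at most $4$. The exploration is rooted at the anchor: ordinary steps are deterministic, and each weld step picks a uniform eligible leaf among $K_c\ge N/2$, independently of the past. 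Such a step therefore lands in any fixed block with probability at most $\sqrt N/K_c\le 2/\sqrt N$. The whole argument is a union bound over weld steps, using this single fact.

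For (iii), each ordinary component of core $i$ other than the anchored one is entered through a distinct weld step, and there are at most $t_i$ of these. The anchored component lies in the anchor's block, which differs from $B_u$. The target $B_u$ is fixed before the exploration, so each weld step hits it with probability at most $2/\sqrt N$. This gives $2t_i/\sqrt N$, hence at most $8t_i/\sqrt N$ after conditioning. For (ii), I explore core $i$ completely first, then core $j$. By the argument of (iii), core $j$ enters any of the at most $t_i$ blocks occupied by core $i$ with probability at most $t_j\cdot t_i\cdot 2/\sqrt N$. The anchored components themselves lie in different blocks by hypothesis. Symmetrically, I must also account for weld steps of core $i$ landing in the anchor block of $j$. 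I handle this by revealing all of core $i$, including its weld landings, against the fixed anchor block of $j$. This adds at most $2t_i/\sqrt N$, and the total stays within $8t_it_j/\sqrt N$ after the factor $4$.

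For (i), I explore core $i$ in BFS order. At the weld step entering the $k$th ordinary component, the previously explored components occupy at most $k\le t_i$ blocks, and these are determined by the past. The fresh uniform partner lands in one of them with probability at most $2k/\sqrt N$. Summing over at most $t_i$ weld steps gives $2t_i^2/\sqrt N$, which is at most $8t_i^2/\sqrt N$ after conditioning. I must also rule out a collision caused by a single ordinary component wrapping within its own block, but this cannot happen: ordinary components embed injectively, by the ordinary-path observation.

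For (iv), consider the last weld step on the abstract path from $a$ to $r$. For $X_{i,r}$ to be a block root, the remaining ordinary path from the landing leaf must reach the root. As in the overflow argument, this requires exactly $n/2$ upward steps with a prescribed color sequence. In each block, only the one leaf obtained by tracing that sequence backward from the root qualifies. The landing is uniform, so the probability is at most $\sqrt N/K_c\le 2/\sqrt N$. If no ordinary path of the right shape follows, the probability is $0$. Conditioning gives $4/\sqrt N\le 8/\sqrt N$.

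The main obstacle I expect is justifying the sequential exploration in (i)--(iii). The step-by-step independence and uniformity hold under $\Pr_{\mathrm{exp}}$ but not under $\mu_i$, which is conditioned on survival. So every bound must be stated for the unconditioned process and transferred with the survival ratio. I also need to check that the anchor itself, for a forced core being an eligible leaf, does not bias any weld step away from uniformity.
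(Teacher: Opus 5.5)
Your proposal is correct and follows the paper's proof essentially step for step. It bounds each event under the unconditioned exploration using the per-step $2/\sqrt N$ block-hitting estimate, explores core $i$ before core $j$ in (ii), fixes the last weld on the $a$-to-$r$ path in (iv), and then divides by the survival probability.

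One small arithmetic fix is needed in (ii). Applying your crude factor $4$ to $2t_it_j/\sqrt N+2t_i/\sqrt N$ gives up to $16t_it_j/\sqrt N$, not $8t_it_j/\sqrt N$. Use the actual factor $(32/31)^2$, which you already have from the survival bound and which is what the paper uses.
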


\begin{proof}
First consider the independent explorations with the stated anchors fixed, before conditioning on survival.
A weld sample hits any specified
block with probability at most $2/\sqrt N$, since a block has $\sqrt N$ leaves and $K_c\ge N/2$. At most $t_i$ blocks
are seen and at most $t_i$ welds are sampled. This gives $2t_i^2/\sqrt N$ for an
internal collision and $2t_i/\sqrt N$ for visiting a specified block outside the
initial one.

For two distinct initial blocks, first explore core $i$. Its chance to hit
the initial block of $j$ is at most $2t_i/\sqrt N$. Then explore $j$; its chance
to hit any of the at most $t_i$ blocks of $i$ is at most $2t_it_j/\sqrt N$. Their
sum is at most $4t_it_j/\sqrt N$.

For the last assertion, if $r$ maps to a block root, the ordinary path
from the last weld arrival to $r$ consists of exactly $n/2$ upward steps, by the ordinary-path observation above.
Thus the last weld position on the
abstract anchor-to-$r$ path is fixed. At most one arrival leaf in each block can produce that
path, giving probability at most $2/\sqrt N$.

To pass to the placement law, divide the one-core estimates by
$\Pr_{\mathrm{exp}}[\text{survival}\mid X_{i,a}=v]
\ge1-2t_i/\sqrt N$. For two cores, divide by
\[
 \Pr_{\mathrm{exp}}[\text{both explorations survive}
       \mid X_{i,a}=v,\ X_{j,b}=w]
 \ge(1-2t_i/\sqrt N)(1-2t_j/\sqrt N).
\]
The product bound holds because the explorations are independent with
the two anchors fixed. By \eqref{eq:app-small-ratio}, the denominators
are at least $31/32$ and $(31/32)^2$, respectively. The constant eight
therefore covers all four estimates under $\mu_i$ or
$\mu_i\otimes\mu_j$. No conditioning on the absence of collisions is
used in any of these estimates.
\end{proof}

\subsubsection{Collisions within and between cores}

For $1\le i<j\le q$, split the product indicators $b_{i,a,v}\otimes
b_{j,b,w}$ according to whether $v$ and $w$ lie in the same block.
Let $\mathcal B_{ij}\subseteq\mathcal K_i\otimes\mathcal K_j$ be the span of the products
whose anchors lie in the same block. Every vector in $\mathcal B_{ij}$ is
supported on $J_{ij}$.

\begin{lemma}[Pairwise collision estimate]
\label{lem:app-pair}
Let $1 \le i<j \le q$. Then
\begin{equation}
 \norm{\ind_{J_{ij}}g}^2
       \le\frac{8t_i^2t_j^2}{\sqrt N}\norm{g}^2
 \quad
 (g\in(\mathcal K_i\otimes\mathcal K_j)\cap\mathcal B_{ij}^\perp).
 \label{eq:app-pair-bound}
\end{equation}
Also $\norm{\ind_{\mathsf I_i}\Pi_{\mathcal K_i}}^2\le8t_i^3/\sqrt N$.
\end{lemma}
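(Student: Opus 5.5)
The plan is to combine the anchor decomposition of \Cref{lem:app-decomposition} with the restriction estimate of \Cref{lem:app-restriction}, fed by the conditional bounds (i) and (ii) of \Cref{app:exploration-estimates}.

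\textbf{Internal collisions.} Take $f\in\mathcal K_i$ and use \eqref{eq:app-anchor-right-inverse} to write $f=F_i\alpha$ with $\norm\alpha\le\norm f$. The sum map then involves at most $\ell=|I_i|\le t_i$ distinct core vertices. By estimate (i), every allowed anchor satisfies
\[
\Pr_{\mu_i}[\mathsf I_i\mid X_{i,a}=v]\le 8t_i^2/\sqrt N .
\]
Applying \eqref{eq:app-event-restriction} with $\ell\le t_i$ and $p=8t_i^2/\sqrt N$ gives
\[
\norm{\ind_{\mathsf I_i}f}^2\le 8t_i^3N^{-1/2}\norm{\alpha}^2\le 8t_i^3N^{-1/2}\norm f^2 .
\]
This is the second claim, since $\Pi_{\mathcal K_i}$ projects onto this space.

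\textbf{Pairwise collisions.} For $g\in\mathcal K_i\otimes\mathcal K_j$, use the isometric right inverse $R_i\otimes R_j$ of $F_i\otimes F_j$. This gives a decomposition
\[
g=\sum_{(a,b)}g_{a,b},\qquad g_{a,b}\in\mathcal V_{i,a}\otimes\mathcal V_{j,b},\qquad \sum\norm{g_{a,b}}^2=\norm g^2 .
\]
Each $g_{a,b}$ is a function of $(X_{i,a},X_{j,b})$. Expand each $g_{a,b}$ in the orthonormal product indicators $b_{i,a,v}\otimes b_{j,b,w}$ and split it into a same-block part and a different-block part.

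If the same-block parts were discarded, the different-block parts would use only anchor pairs in different blocks. For those pairs, (ii) gives conditional probability at most $p=8t_it_j/\sqrt N$. There are at most $\ell\le t_it_j$ distinct vertex pairs, so the two-core version of \eqref{eq:app-event-restriction} yields exactly
\[
\norm{\ind_{J_{ij}}g}^2\le 8t_i^2t_j^2N^{-1/2}\norm g^2 .
\]

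\textbf{Main obstacle.} The difficulty is justifying that the same-block pieces can be removed when $g\perp\mathcal B_{ij}$. The decomposition into $\mathcal V_{i,a}\otimes\mathcal V_{j,b}$ need not commute with the orthogonal projection onto $\mathcal B_{ij}$. So $g\perp\mathcal B_{ij}$ does not immediately mean each same-block component vanishes.

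My first attempt would be to define the block-pair partition directly on the external direct sum. Let $\alpha=(R_i\otimes R_j)g$, and let $\alpha_{\rm same},\alpha_{\rm diff}$ be its restrictions to same-block and different-block indicator pairs. Then $(F_i\otimes F_j)\alpha_{\rm same}\in\mathcal B_{ij}$ by definition of $\mathcal B_{ij}$. Writing $h=(F_i\otimes F_j)\alpha_{\rm diff}$, orthogonality gives
\[
\norm g^2=\langle g,g\rangle=\langle g,h\rangle .
\]

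I would then bound $\norm{\ind_{J_{ij}}g}$ by splitting $g=h+(g-h)$, where $g-h=(F_i\otimes F_j)\alpha_{\rm same}$ is supported on $J_{ij}$. The aim is to show this piece contributes nothing after pairing with $g$. If that fails, I would instead apply the restriction argument to the orthogonal complement directly. The idea is to replace $\mathcal B_{ij}$ by the span of same-block indicators within each summand $\mathcal V_{i,a}\otimes\mathcal V_{j,b}$, which is contained in $\mathcal B_{ij}$, so that orthogonality passes to the summands. Constants can absorb any factor-of-two loss, though the stated constant $8$ suggests the exact restriction route is intended.
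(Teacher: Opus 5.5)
Your architecture is the paper's. The internal-collision paragraph is exactly its argument. For the pairwise bound you set up the same objects it uses: $\alpha=(R_i\otimes R_j)g$ split into $\alpha_{\mathrm{same}},\alpha_{\mathrm{diff}}$, with $h=(F_i\otimes F_j)\alpha_{\mathrm{diff}}$ and $g-h\in\mathcal B_{ij}$ supported on $J_{ij}$. But you stop at the one step that carries the content. You state the aim of showing the same-block piece ``contributes nothing'' without doing it, and then offer a fallback that does not work as stated. Splitting an element of $\mathcal V_{i,a}\otimes\mathcal V_{j,b}$ into same-block and different-block parts generally takes it out of that summand, because product indicators are not conditional-mean-zero. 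So orthogonality to same-block vectors inside the summands does not remove the same-block parts of $g_{a,b}$. It is also a weaker hypothesis than $g\perp\mathcal B_{ij}$. As written, the pairwise estimate is not proved.

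The missing observation is that you never need the anchor decomposition to be compatible with $\Pi_{\mathcal B_{ij}}$. Every vector of $\mathcal B_{ij}$ is supported on $J_{ij}$, so multiplication by $\ind_{J_{ij}}$ fixes $\mathcal B_{ij}$, and therefore $\ind_{J_{ij}}\Pi_{\mathcal B_{ij}}=\Pi_{\mathcal B_{ij}}=\Pi_{\mathcal B_{ij}}\ind_{J_{ij}}$. Since $g\perp\mathcal B_{ij}$ and $g-h\in\mathcal B_{ij}$, you get $g=(I-\Pi_{\mathcal B_{ij}})h$. Hence $\ind_{J_{ij}}g=(I-\Pi_{\mathcal B_{ij}})\ind_{J_{ij}}h$ and $\norm{\ind_{J_{ij}}g}\le\norm{\ind_{J_{ij}}h}$. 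Your restriction bound on $h$, with $\norm{\alpha_{\mathrm{diff}}}\le\norm{\alpha}\le\norm g$, then gives the stated constant $8$. This is precisely the paper's proof. Your pairing idea is the same fact, and it closes the argument once carried out:
\begin{align*}
 \norm{\ind_{J_{ij}}g}^2
 &=\langle g,\ind_{J_{ij}}g\rangle
 =\langle g,\ind_{J_{ij}}h\rangle+\langle g,g-h\rangle\\
 &=\langle\ind_{J_{ij}}g,\ind_{J_{ij}}h\rangle
 \le\norm{\ind_{J_{ij}}g}\,\norm{\ind_{J_{ij}}h},
\end{align*}
using $\ind_{J_{ij}}(g-h)=g-h$ and $\langle g,g-h\rangle=0$.
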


\begin{proof}
By \Cref{lem:app-decomposition}, $F_i\otimes F_j$ has an isometric right
inverse.
Split its domain according to whether the two anchor positions lie in the
same block, and denote the restrictions by $F_{\rm same}$ and $F_{\rm
different}$. Choose a preimage of $g$ with norm at most $\norm{g}$. Since
$\mathcal B_{ij}=\operatorname{range} F_{\rm same}$ and $g\perp\mathcal B_{ij}$, projecting off
$\mathcal B_{ij}$ gives
\begin{equation}
 g=(I-\Pi_{\mathcal B_{ij}})F_{\rm different}\alpha,
 \qquad \norm{\alpha}\le\norm{g}.
\end{equation}
Multiplication by $\ind_{J_{ij}}$ commutes with $\Pi_{\mathcal B_{ij}}$, because
$\mathcal B_{ij}$ is supported on that event. For different-block anchors,
\Cref{app:exploration-estimates}(2) bounds the conditional collision probability by
$8t_it_j/\sqrt N$. Apply \Cref{lem:app-restriction} to the at most $t_it_j$ pairs of
allowed variables to obtain
\begin{equation}
 \norm{\ind_{J_{ij}}F_{\rm different}}^2
                          \le8t_i^2t_j^2/\sqrt N,
\end{equation}
which proves \eqref{eq:app-pair-bound}. For an internal collision, apply
\Cref{lem:app-restriction} with at most $t_i$ variables and conditional
probability $8t_i^2/\sqrt N$ from \Cref{app:exploration-estimates}(1), then use a preimage
under $F_i$ of norm at most the norm of the vector.
\end{proof}

\subsubsection{A norm bound after removing collisions}

Collect the products that force a collision into the subspace
\begin{equation}
 \mathcal B\coloneqq\sum_{1\le i<j\le q}\mathcal B_{ij}\otimes
                         \bigotimes_{\substack{1\le k\le q\\k\notin\{i,j\}}}\mathcal K_k,
 \qquad \mathcal K_0\coloneqq\mathcal K\cap\mathcal B^\perp.
 \label{eq:app-collision-kernel}
\end{equation}
If $q=1$, the sum is empty and $\mathcal B=\{0\}$. Every vector in $\mathcal B$ is
annihilated by $G$, and hence by $M$.

\begin{lemma}[Norm bound and bounded preimages]
\label{lem:app-preimage}
For $f\in\mathcal K_0$,
\begin{equation}
 \norm{(I-G)f}^2\le\frac{16t^4}{\sqrt N}\norm{f}^2,
 \qquad \norm{Mf}\ge\tfrac12\norm{f}.
 \label{eq:app-preimage}
\end{equation}
Moreover, every $h\in\mathcal H$ has a preimage $f\in\mathcal K_0$ satisfying
\begin{equation}
 Mf=h,\qquad \norm{f}\le2\norm{h}.
 \label{eq:app-bounded-lift}
\end{equation}
\end{lemma}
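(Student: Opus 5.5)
The plan has three steps: bound the bad event $(I-G)$ on $\mathcal K_0$, convert that into a lower bound on $\norm{Mf}$, and then obtain a bounded preimage by orthogonal projection.

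\textbf{Step 1: the bad event.} A placement fails to be good exactly on the union of the internal collision events $\mathsf I_i$ and the pairwise events $J_{ij}$. So
\[
\norm{(I-G)f}\le\sum_i\norm{\ind_{\mathsf I_i}f}+\sum_{i<j}\norm{\ind_{J_{ij}}f}.
\]
For the internal terms, $\ind_{\mathsf I_i}$ acts only on the $i$th tensor factor. \Cref{lem:app-pair} then gives $\norm{\ind_{\mathsf I_i}f}^2\le 8t_i^3N^{-1/2}\norm f^2$ after tensoring with the identity on the other cores.

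For the pairwise terms, $f\in\mathcal B^\perp$ means that, viewed as an element of $(\mathcal K_i\otimes\mathcal K_j)\otimes\bigotimes_{k\ne i,j}\mathcal K_k$, it is orthogonal to $\mathcal B_{ij}\otimes(\cdots)$. Hence, after expanding in an orthonormal basis of the other factors, each coefficient lies in $(\mathcal K_i\otimes\mathcal K_j)\cap\mathcal B_{ij}^\perp$. Applying \eqref{eq:app-pair-bound} coefficientwise gives
\[
\norm{\ind_{J_{ij}}f}^2\le 8t_i^2t_j^2N^{-1/2}\norm f^2.
\]

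Summing the square roots, I use $\sum_i t_i^{3/2}+\sum_{i<j}t_it_j\le t^{3/2}+t^2/2$, together with $8t_i^3\le 8t_i^4$. Squaring then needs a little care with constants; I would use Cauchy--Schwarz in the form $(\sum a_k)^2$ with the $a_k$ bounded by $\sqrt8\,t_it_j N^{-1/4}$ and $\sqrt8\,t_i^{2}N^{-1/4}$. This gives $\sum a_k\le \sqrt8\,(\sum t_i)^2N^{-1/4}\le\sqrt8\,t^2N^{-1/4}$, hence $\norm{(I-G)f}^2\le 8t^4N^{-1/2}\le 16t^4N^{-1/2}$.

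\textbf{Step 2: the lower bound on $\norm{Mf}$.} I write $Mf=Gf-G(I-S)f$. On good placements $\norm{I-S}\le t^2/N$, since $S_0\in[1,\prod(1-e_c/K_c)^{-e_c/2}]$ and after normalization by its maximum it differs from $1$ by $O(t^2/N)$; I treat this as a routine calculation. Then
\[
\norm{Mf}\ge \norm f-\norm{(I-G)f}-\norm{(I-S)f}\ge(1-4\tau-t^2/N)\norm f\ge\tfrac12\norm f,
\]
using $\tau\le1/64$ from \eqref{eq:app-small-regime}.

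\textbf{Step 3: bounded preimages.} By \eqref{eq:app-exact-representation}, $\mathcal H=M\mathcal K$. Since $\mathcal K=\mathcal K_0\oplus(\mathcal K\cap\mathcal B)$, I need that $\mathcal B\subseteq\mathcal K$ and that $M$ annihilates $\mathcal B$. Both hold: $\mathcal B$ is spanned by anchor products inside $\mathcal K$, and anchors forced into a common block make the placement bad. Hence $M\mathcal K=M\mathcal K_0$. Given $h\in\mathcal H$, I pick any preimage in $\mathcal K$ and project it orthogonally onto $\mathcal K_0$. Projecting removes only the $\mathcal B$-component, which $M$ kills, so the image is unchanged and $Mf=h$ with $f\in\mathcal K_0$. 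The bound from Step 2 then gives $\norm f\le2\norm{Mf}=2\norm h$.

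\textbf{Main obstacle.} The delicate point is Step 1's coefficientwise reduction: showing that orthogonality to $\mathcal B$ really places each slice in $\mathcal B_{ij}^\perp$, so that \eqref{eq:app-pair-bound} is applicable with the other cores entangled. This works because $\mathcal B\supseteq\mathcal B_{ij}\otimes\bigotimes_{k\ne i,j}\mathcal K_k$ and $f$ lies in the full tensor product $\mathcal K$, so the partial inner products against an orthonormal basis of the remaining factors stay in $\mathcal K_i\otimes\mathcal K_j$.
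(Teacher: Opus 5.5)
Your proof is correct and follows essentially the same route as the paper. The paper likewise uses a union bound over the events $\mathsf I_i$ and $J_{ij}$ via \Cref{lem:app-pair}; your coefficientwise reduction is what the paper calls ``tensoring with the identity on the remaining factors.'' It then uses the matching-correction bound $\norm{I-S}\le t^2/N$, which the paper derives explicitly from $\sum_c e_c\le t$ (every recorded weld is a core edge) instead of calling it routine. Finally it uses the orthogonal splitting $\mathcal K=\mathcal B\oplus\mathcal K_0$ with $M\mathcal B=0$.

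The remaining differences are cosmetic. The paper applies the union bound directly to squared norms, pointwise, since indicators are idempotent. It also lower-bounds $\norm{Mf}$ multiplicatively, via $e^{-1/64}\le S\le 1$ and $\norm{Gf}^2=\norm f^2-\norm{(I-G)f}^2$. You use the triangle inequality in both places, which works equally well.
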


\begin{proof}
The event excluded by $G$ is the union of the events $\mathsf I_i$ and $J_{ij}$.
For each pair, $f\perp\mathcal B$ places it in $\mathcal B_{ij}^\perp$ on those two
factors, with the other factors unrestricted in their $\mathcal K_k$. Thus
\Cref{lem:app-pair} applies after tensoring with the identity on the remaining
factors, without an additional factor in the bound. Using the union bound
pointwise,
\begin{align}
 \norm{(I-G)f}^2
 &\le\left(\frac8{\sqrt N}\sum_{i=1}^q t_i^3
          +\frac8{\sqrt N}\sum_{1\le i<j\le q}t_i^2t_j^2\right)\norm{f}^2\\
 &\le\frac{16t^4}{\sqrt N}\norm{f}^2.
 \label{eq:app-collision-bound}
\end{align}

The matching correction is also uniform. Every recorded weld is a core edge,
so $\sum_{c\in\{r,b,g\}} e_c\le\sum_{i=1}^q(t_i-1)\le t$. Since $K_c\ge
N/2$ and $t/\sqrt N\le1/64$, we have $t/K_c\le 2t/N\le1/32$. Using
$-\log(1-z)\le2z$ for $0\le z\le1/2$ gives
\begin{equation}
 0\le\log S_0
 =\frac12\sum_{c\in\{r,b,g\}}\sum_{j=0}^{e_c-1}-\log(1-j/K_c)
 \le\sum_{c\in\{r,b,g\}}\sum_{j=0}^{e_c-1}\frac{j}{K_c}
 \le\frac{t^2}{N}\le\frac1{64^2}.
\end{equation}
In particular, on good placements,
\[
 e^{-t^2/N}\le S\le1,
 \qquad \norm{I-S}\le1-e^{-t^2/N}\le t^2/N.
\]
Together with the collision estimate, this also gives, for
$f\in\mathcal K_0$,
\begin{align}
 \norm{(I-M)f}
 &\le\norm{(I-G)f}+\norm{G(I-S)f}\\
 &\le\bigl(4t^2N^{-1/4}+t^2/N\bigr)\norm{f}
   =\eta_t\norm{f}.
 \label{eq:app-placement-error}
\end{align}
Also $e^{-1/64}\le S\le1$ on good placements and
$16t^4/\sqrt N=16\tau^2\le1/256$. These bounds give
\begin{equation}
 \norm{Mf}\ge e^{-1/64}\sqrt{1-1/256}\norm{f}\ge\norm{f}/2.
\end{equation}

Finally, $\mathcal K=\mathcal B\oplus\mathcal K_0$ and $M\mathcal B=0$, so
$M\mathcal K_0=M\mathcal K=\mathcal H$ by \eqref{eq:app-exact-representation}. The lower bound in
\eqref{eq:app-preimage} gives \eqref{eq:app-bounded-lift}.
\end{proof}

Taking $\mathcal K_{\sigma,0}=\mathcal K_0$ proves the remaining
assertions of \Cref{lem:placement-lift}, including
\eqref{eq:placement-lift}. The anchor decomposition was proved above,
and the zero-core representation is one-dimensional with $G=S=M=I$.
All these estimates are uniform in the fixed top labels. They remain
valid after tensoring with an identity operator on any untouched
register: expanding in an orthonormal basis of that register and
summing the squared estimates proves the same bounds.

\subsection{The two clipping events}\label{app:clipping-events}

We finish by identifying the two ways a query can invalidate an
exploration: entry into an occupied block and overflow from a block.
For each, we separate color-word states that remain unchanged from those that lose
the offending event, and bound that event on the orthogonal complement
of the unchanged span. Together with the norm bound from
\Cref{app:placement-lifts}, these estimates give a direct comparison of the
original and clipped color-word state spans.

Fix a sector and the query $(x,c)$. The sector determines whether $x$
is absent, labels a fixed vertex outside the blocks, or marks a fixed
abstract vertex of a core. 
If $x$ is absent or the query stays outside the blocks, clipping
has no effect; backtracking and absent global-root edges are treated as in
\Cref{subsec:clipping-placements}.
Otherwise exactly one of the two cases below
applies, and which one is determined by the sector.
In the regime indicated by
\eqref{eq:app-small-regime} these are the only restrictions, as verified after the two cases are stated.
We use $E$ and its subscripted variants
for multiplication by the indicators of the specified events.

\begin{itemize}
\item \textbf{Entering a block from above.}
The label $x$ is at a fixed vertex of depth $n/2-1$, and color $c$ leads to
the root of a fixed block $B_u$. A color-word state with $B_u$ as an initial block is
unchanged. Every other color-word state is restricted to databases in which $B_u$ is
unoccupied. Write $E_u$ for multiplication by the indicator that at least
one core occupies $B_u$.

\item \textbf{Overflow from a core.}
The label $x$ marks a vertex $r$ of some $C_j$. Let $E$ be multiplication by
the indicator that $X_{j,r}$ is a block root and $c$ is the color of its
ordinary parent edge leaving the block through a non-weld edge. On this event, if $r$ is in the
color-word state's initial block, the query returns to an existing rooted prefix and
the color-word state is unchanged. Otherwise the color-word state is restricted by $I-E$.
\end{itemize}

To see that these are the only restrictions, first consider an ordinary step
inside a block. It cannot reach a different ordinary component, since those
components occupy different blocks. If it reaches a vertex already explored
in the same component, the unique ordinary path to that vertex already
occurs in the prefix forest. Indeed, the explored part of a component is a
connected subtree of its block, so if the neighbor reached is already explored
then the traversed edge itself lies in that subtree; the step therefore either
follows the existing prefix of that neighbor or deletes the last letter of $w$,
and no new rooted prefix arises.

Next consider an unrecorded weld edge. There are at
most $t$ occupied blocks, because each contains a core vertex. On the
destination side, at least $K_c-t\sqrt N\ge N(1/2-t/\sqrt N)>0$ eligible leaves lie
outside them, by \eqref{eq:app-small-ratio}. Choosing any such leaf respects the
partial matching and enters a new block. A fresh label is available as
in the extension convention of \Cref{subsec:clipping-placements}:
fewer than $|V|=4N-2<N^2$ vertices are already labeled whenever a new
vertex must be marked. This remains true regardless of the number of
top labels.

Finally, a
recorded weld edge already belongs to the prefix forest and imposes no new
restriction.

\subsubsection{Entering an occupied block}

Fix $B_u$ and, in each factor, let $E_{i,u}$ be multiplication by the
indicator that core $i$ visits $B_u$. Let $\mathcal J_{i,u}\subseteq\mathcal K_i$ be the
span of the allowed indicators $b_{i,a,v}$ with $v\in B_u$. Then
$E_{i,u}\mathcal J_{i,u}=\mathcal J_{i,u}$.

\begin{lemma}[Entry estimate]
\label{lem:app-entry}
With
\begin{equation}
 \mathcal J=\sum_{i=1}^q\mathcal J_{i,u}\otimes
               \bigotimes_{\substack{1\le k\le q\\k\ne i}}\mathcal K_k,
\end{equation}
one has $E_u\mathcal J=\mathcal J$ and
\begin{equation}
 \norm{E_ug}^2\le\frac{8t^2}{\sqrt N}\norm{g}^2
             \qquad(g\in\mathcal K\cap\mathcal J^\perp).
 \label{eq:app-entry-bound}
\end{equation}
The clipped color-word state span $\mathcal H^{x,c}$ in this sector is
\begin{equation}
 \mathcal H^{x,c}=M\mathcal J+(I-E_u)M\mathcal K.
 \label{eq:app-entry-span}
\end{equation}
\end{lemma}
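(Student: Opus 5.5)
We prove the three assertions of \Cref{lem:app-entry} in turn: the invariance $E_u\mathcal J=\mathcal J$, the norm bound \eqref{eq:app-entry-bound} by the same anchor-decomposition technique used for \Cref{lem:app-pair}, and the span identity \eqref{eq:app-entry-span} by translating the clipping rule for this query into placement language via \Cref{lem:app-representation}.

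\textbf{Invariance.} Each generator $b_{i,a,v}$ with $v\in B_u$ is supported where core $i$ visits $B_u$. Hence $E_{i,u}$ fixes $\mathcal J_{i,u}$, and so does $E_u\ge E_{i,u}$ pointwise as an indicator. Tensoring with the other factors shows that $E_u$ fixes each summand of $\mathcal J$, and therefore $\mathcal J$ itself.

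\textbf{Norm bound.} Take $g\in\mathcal K\cap\mathcal J^\perp$. By the union bound, $\norm{E_ug}^2\le\sum_i\norm{E_{i,u}g}^2$, so it suffices to bound each term by $8t_i^2/\sqrt N$ times $\norm g^2$. Summing then gives at most $8t^2/\sqrt N$ after $\sum_i t_i^2\le t^2$.
\begin{itemize}
\item Fix $i$ and view $g$ in $\mathcal K_i\otimes\bigotimes_{k\ne i}\mathcal K_k$. Using the isometric right inverse $R_i$ of $F_i$ from \Cref{lem:app-decomposition}, tensored with the identity on the other factors, write $g=F_i\alpha$ with $\norm\alpha\le\norm g$.
\item Split the anchors of core $i$ into those with $v\in B_u$ and those with $v\notin B_u$, giving $g=F_{\rm in}\alpha_{\rm in}+F_{\rm out}\alpha_{\rm out}$. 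The first term lies in $\mathcal J_{i,u}\otimes\bigotimes_{k\ne i}\mathcal K_k\subseteq\mathcal J$.
\item Since $g\perp\mathcal J$, apply $I-\Pi_{\mathcal J_{i,u}\otimes\cdots}$ to get $g=(I-\Pi)F_{\rm out}\alpha_{\rm out}$. This projection commutes with $E_{i,u}$, because that subspace is supported on the event and is invariant under $E_{i,u}$.
\item Therefore $\norm{E_{i,u}g}\le\norm{E_{i,u}F_{\rm out}}\norm\alpha$.
\item For anchors outside $B_u$, \Cref{app:exploration-estimates}(3) bounds the conditional probability of visiting $B_u$ by $8t_i/\sqrt N$. \Cref{lem:app-restriction} with at most $t_i$ anchor vertices then gives $\norm{E_{i,u}F_{\rm out}}^2\le 8t_i^2/\sqrt N$.
\end{itemize}
\textbf{Main obstacle.} The delicate step is the commutation of the projection with $E_{i,u}$. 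It requires $\mathcal J_{i,u}$ to be an $E_{i,u}$-invariant subspace, which makes the orthogonal projection commute with multiplication by $E_{i,u}$. Tensoring with the untouched factors is routine.

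\textbf{Span identity.} By \Cref{lem:app-representation}(ii), color-word states in the sector are, up to scalars, $Mb$ for products $b$ of anchor indicators, one anchor per core. Clipping for this query acts in one of two ways.
\begin{itemize}
\item If some core's initial block is $B_u$, the state is unchanged. For a free core, take the anchor at the vertex closest to the block root; for a forced core, the anchor $a_i$ already lies in the initial block. Either way the anchor lies in $B_u$, so $b\in\mathcal J$ and the clipped state is $Mb\in M\mathcal J$.
\item Otherwise the clipped state keeps only databases where $B_u$ is unoccupied, which is $(I-E_u)Mb$, since $E_u$ commutes with $M$.
\end{itemize}
Consequently $\mathcal H^{x,c}\subseteq M\mathcal J+(I-E_u)M\mathcal K$.

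For the reverse inclusion, two checks are needed:
\begin{itemize}
\item A product indicator with an anchor in $B_u$ is, after applying $M$, either zero or a state whose initial block is $B_u$. This holds because an anchor in a block determines that block as the initial block of its ordinary component, and the constructed description enters there. Such states are unclipped, so $M\mathcal J\subseteq\mathcal H^{x,c}$.
\item For a general product $b$, decompose $b=b_{\mathcal J}+b'$ with $b'$ having all anchors outside $B_u$. Then $(I-E_u)Mb'$ is the clipped state of $Mb'$, while $(I-E_u)Mb_{\mathcal J}=0$, because every database of $Mb_{\mathcal J}$ occupies $B_u$.
\end{itemize}
Taking spans gives equality in \eqref{eq:app-entry-span}.
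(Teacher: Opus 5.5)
Your proposal is correct and follows essentially the same route as the paper. Both use a union bound over cores, the isometric right inverse of $F_i$, projection off $\mathcal J_{i,u}$ (which commutes with $E_{i,u}$) so that only anchors outside $B_u$ remain, and then \Cref{lem:app-restriction} with \Cref{app:exploration-estimates}(3), before identifying $M\mathcal J$ with the safe span via \Cref{lem:app-representation}. The differences are only presentational: you work directly in the tensor product instead of proving a single-factor estimate and tensoring, and you spell out both inclusions of the span identity and the invariance $E_u\mathcal J=\mathcal J$, which the paper treats in a line each.
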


\begin{proof}
Split $F_i$ into the parts with anchors in $B_u$ and outside $B_u$, and write
$F_{i,\mathrm{out}}$ for the latter part. For
$g_i\in\mathcal K_i\cap\mathcal J_{i,u}^\perp$, take a preimage under $F_i$ as in
\eqref{eq:app-anchor-right-inverse} and project off $\mathcal J_{i,u}$. This gives
$g_i=(I-\Pi_{\mathcal J_{i,u}})F_{i,\mathrm{out}}\alpha$ with $\norm{\alpha}\le\norm{g_i}$. Apply \Cref{lem:app-restriction} with at most $t_i$ variables and the
conditional probability $8t_i/\sqrt N$ from
\Cref{app:exploration-estimates}(3) to obtain
\begin{equation}
 \norm{E_{i,u}g_i}^2\le\frac{8t_i^2}{\sqrt N}\norm{g_i}^2.
\end{equation}
Here $E_{i,u}$ commutes with $\Pi_{\mathcal J_{i,u}}$.

If $g\perp\mathcal J$, it is in $\mathcal J_{i,u}^\perp$ on each factor $i$. Tensor
the preceding estimate with the other factors. If a core visits $B_u$, at
least one of the events defining $E_{i,u}$ occurs. Hence
\begin{equation}
 \norm{E_ug}^2\le\sum_{i=1}^q\norm{E_{i,u}g}^2
      \le\frac8{\sqrt N}\sum_{i=1}^q t_i^2\norm{g}^2
      \le\frac{8t^2}{\sqrt N}\norm{g}^2.
\end{equation}

By \Cref{lem:app-representation}, $M\mathcal J$ is exactly the span of color-word states with
initial block $B_u$; call these color-word states \emph{safe}. Other color-word states are clipped
by $I-E_u$. Applying $I-E_u$ to a safe color-word state gives zero, so including all
color-word states in the second summand makes no difference. This proves
\eqref{eq:app-entry-span}.
\end{proof}

\subsubsection{Overflow through the top of a block}

Let $r\in V(C_j)$ carry $x$, and let $E$ be the overflow projector defined
above. Suppose first that $C_j$ is free. Put
\begin{equation}
 \mathcal U=L^2(X_{j,r})\subseteq\mathcal K_j.
\end{equation}
Call an anchor $X_{j,a}=v$ \emph{safe} if tracing the abstract path from $a$
to $r$, starting at $v$, reaches $r$ using only ordinary edges. This
property is determined by $a,v$. On good placements it means that $r$ lies
in the color-word state's initial block, so overflow at $r$ would merely return to an
existing prefix. A safe anchor determines $X_{j,r}$, and the reverse
ordinary path determines the anchor from that value of $X_{j,r}$. Hence its
indicator lies in $\mathcal U$. Conversely, $a=r$ supplies every indicator of a
value of $X_{j,r}$. Thus the safe anchor indicators span exactly $\mathcal U$.

\begin{lemma}[Overflow estimate]
\label{lem:app-overflow}
If $C_j$ is free, set
\begin{equation}
 \mathcal J=\mathcal U\otimes
        \bigotimes_{\substack{1\le k\le q\\k\ne j}}\mathcal K_k,
\end{equation}
with $\mathcal U$ in the $j$th tensor position. Then $E\mathcal J\subseteq\mathcal J$ and
\begin{equation}
 \norm{Eg}^2\le\frac{8t_j}{\sqrt N}\norm{g}^2
                 \qquad(g\in\mathcal K\cap\mathcal J^\perp).
 \label{eq:app-overflow-bound}
\end{equation}
If $C_j$ is forced, set $\mathcal J=\{0\}$; then
\begin{equation}
 \norm{Eg}^2\le\frac8{\sqrt N}\norm{g}^2\qquad(g\in\mathcal K).
 \label{eq:app-overflow-forced}
\end{equation}
In both cases the clipped color-word state span $\mathcal H^{x,c}$ in this sector is
\begin{equation}
 \mathcal H^{x,c}=M\mathcal J+(I-E)M\mathcal K.
 \label{eq:app-overflow-span}
\end{equation}
\end{lemma}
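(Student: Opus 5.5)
The plan is to follow the template of \Cref{lem:app-entry} closely, with the overflow event $E$ in place of $E_u$ and estimate (iv) of \Cref{app:exploration-estimates} in place of estimate (3).

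\emph{Invariance in the free case.} $\mathcal U=L^2(X_{j,r})$ consists of functions of $X_{j,r}$. The event $E$ ($X_{j,r}$ is a block root and $c$ is its ordinary parent color) is itself a function of $X_{j,r}$. So multiplication by $E$ maps $\mathcal U$ into $\mathcal U$, and after tensoring with the other factors, $E\mathcal J\subseteq\mathcal J$. It follows that $E$ commutes with $\Pi_{\mathcal J}$.

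\emph{Bound on $\mathcal J^\perp$, free case.} Split $F_j$ into safe and unsafe anchors. By the discussion preceding the lemma, the range of the safe part is exactly $\mathcal U$. Take $g_j\in\mathcal K_j\cap\mathcal U^\perp$ and an isometric preimage from \Cref{lem:app-decomposition}. Projecting off $\mathcal U$ gives $g_j=(I-\Pi_{\mathcal U})F_{j,\mathrm{unsafe}}\alpha$ with $\norm\alpha\le\norm{g_j}$. For an unsafe anchor $X_{j,a}=v$, the trace from $a$ to $r$ uses a weld step. Estimate (iv) then gives $\Pr_{\mu_j}[X_{j,r}\text{ block root}\mid X_{j,a}=v]\le8/\sqrt N$, and the event $E$ is contained in this event. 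Apply \Cref{lem:app-restriction} with at most $t_j$ anchor vertices and use that $E$ commutes with $\Pi_{\mathcal U}$. This gives $\norm{Eg_j}^2\le 8t_j/\sqrt N\,\norm{g_j}^2$. Since $E$ acts only on factor $j$, tensoring with the identity on the other factors (expanding in an orthonormal basis) extends this to $g\in\mathcal K\cap\mathcal J^\perp$, which is \eqref{eq:app-overflow-bound}.

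\emph{Forced case.} The only anchor is the entry leaf $a_j$, whose incident core edge is a weld. There are two sub-cases:
\begin{itemize}
\item If $r\neq a_j$, every path from $a_j$ to $r$ starts with a weld step. Estimate (iv) applies to every anchor, and \Cref{lem:app-restriction} with $\ell=1$ yields \eqref{eq:app-overflow-forced}.
\item If $r=a_j$, then $X_{j,r}$ is an eligible leaf, not a block root, so $E=0$ (for $n\ge2$).
\end{itemize}

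\emph{Span identity \eqref{eq:app-overflow-span}.} By \Cref{lem:app-representation}, color-word states correspond to products of anchor indicators mapped by $M$. Those with a safe anchor in core $j$ have $r$ in their initial block, so clipping leaves them unchanged. Their span is $M\mathcal J$, because safe indicators span $\mathcal U$. Every other color-word state is clipped to $(I-E)$ applied to itself, since clipping removes exactly the basis terms of the overflow event and $M$ commutes with $E$. Clipping removes no further terms, by the verification that these are the only restrictions. Hence $\mathcal H^{x,c}\subseteq M\mathcal J+(I-E)M\mathcal K$. For the reverse inclusion, note that for a safe state $h\in M\mathcal J$, $(I-E)h=h-Eh$ with $Eh\in M\mathcal J$. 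So the sum is unchanged whether or not safe states are included in the second summand, and each generator of the right side lies in $\mathcal H^{x,c}$. In the forced case there are no safe states, and the identity reads $\mathcal H^{x,c}=(I-E)M\mathcal K$.

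\emph{Main obstacle.} The delicate step is the precise identification of safe anchors with $\mathcal U$. The issue is a safe anchor $a\neq r$ whose ordinary path to $r$ passes outside the block, or an unsafe anchor whose indicator accidentally lies in $\mathcal U$. I would handle this using the injectivity of ordinary paths, as in the ordinary-path observation, together with positivity of anchor probabilities.
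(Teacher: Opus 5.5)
Your proposal is correct and follows the paper's proof essentially step by step. In the free case you use the same split of $F_j$ into safe and non-safe anchors, project off $\mathcal U$, feed estimate (iv) into \Cref{lem:app-restriction} with at most $t_j$ vertices, and tensor; in the forced case you use a single anchor; and you absorb safe states into the second summand of the span identity via $E\mathcal J\subseteq\mathcal J$, exactly as the paper does. Your forced-case sub-case $r=a_j$ cannot occur, since the entry leaf is by definition unmarked while $x$ marks $r$, but your remark that $E=0$ there is harmless.
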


\begin{proof}
For a free core, restrict $F_j$ to the anchors that are not safe, and call
this map $F_{j,\mathrm{weld}}$. Taking a preimage under $F_j$ as in
\eqref{eq:app-anchor-right-inverse} and projecting off $\mathcal U$ gives, for $g_j\in\mathcal K_j\cap \mathcal U^\perp$,
\begin{equation}
 g_j=(I-\Pi_{\mathcal U})F_{j,\mathrm{weld}}\alpha,
 \qquad\norm{\alpha}\le\norm{g_j}.
\end{equation}
The indicator defining $E$ depends only on $X_{j,r}$, so $E$ commutes with
$\Pi_{\mathcal U}$. Apply \Cref{lem:app-restriction} to at most $t_j$ variables, using the
bound $8/\sqrt N$ from \Cref{app:exploration-estimates}(4), to obtain
$\norm{EF_{j,\mathrm{weld}}}^2\le8t_j/\sqrt N$. Tensoring proves
\eqref{eq:app-overflow-bound}.

For a forced core, its anchor is the entry leaf. Its ordinary
component consists of that leaf alone, so every marked vertex, including
$r$, lies outside the initial block on good placements. Every anchor-to-$r$
path starts with a weld edge. The anchor indicators have disjoint supports,
and
\Cref{app:exploration-estimates}(4) directly gives
\eqref{eq:app-overflow-forced}.

Call a color-word state \emph{safe} if its anchor in core $j$ is safe, and call
the linear span of these color-word states the \emph{safe span}. Safe color-word states are 
unchanged; all other color-word states are restricted by $I-E$. In
the free case their safe span is $M\mathcal J$. Since $E\mathcal J\subseteq\mathcal J$,
applying $I-E$ to a safe vector remains in the safe span. Thus the clipped
span has exactly the form
\eqref{eq:app-overflow-span}. In the forced case every color-word state is clipped
by $I-E$, which gives the same identity with $\mathcal J=\{0\}$.
\end{proof}

\subsubsection{Proof of the clipping lemma}

\begin{proof}[Proof of \Cref{lem:clipped-subspace}]
Sectors with no cores are unchanged by clipping, and the bound is trivial
when $t^2/N^{1/4}>1/64$. Assume \eqref{eq:app-small-regime} and fix a sector with at
least one core. All spaces and operators in the argument below refer to this 
fixed sector. The resulting bound is uniform over sectors, and we pass to the 
full spaces at the end of the proof. 

If clipping is 
nontrivial, \Cref{lem:app-entry,lem:app-overflow} give an event projector $E$ and a 
subspace $\mathcal J\subseteq\mathcal K$ such that 
\begin{equation}
 \mathcal H^{x,c}=M\mathcal J+(I-E)\mathcal H,\qquad E\mathcal J\subseteq\mathcal J,
 \qquad
 \norm{Eg}\le\alpha\norm{g}
 \quad(g\in\mathcal K\cap\mathcal J^\perp),
 \label{eq:app-clipping-data}
\end{equation}
where $\alpha=\sqrt8\,t/N^{1/4}$. Here we used $t\ge1$ and $t_j\le t$ to
cover both overflow estimates. Also $\mathcal H=M\mathcal K$ by
\eqref{eq:app-exact-representation}.

Let $\mathcal S=M\mathcal J$ be the safe span and put
$\mathcal U=\mathcal H\cap\mathcal S^\perp$. Since $M$ commutes with $E$ and
$E\mathcal J\subseteq\mathcal J$, the space $\mathcal S$ is invariant under $E$.
As $E$ is self-adjoint, it commutes with $\Pi_{\mathcal S}$.
For $u\in\mathcal U$, choose $f\in\mathcal K$ with $Mf=u$ and
$\norm{f}\le2\norm{u}$, using \eqref{eq:app-bounded-lift}. Set
$g=f-\Pi_{\mathcal J} f$. Then $g\in\mathcal K\cap\mathcal J^\perp$ and
\begin{equation}
 u=(I-\Pi_{\mathcal S})Mg,\qquad
 \norm{Eu}
 =\norm{(I-\Pi_{\mathcal S})MEg}
 \le\norm{Eg}\le2\alpha\norm{u}.
 \label{eq:app-unsafe-bound}
\end{equation}
Write $\beta=2\alpha$. Since $t\ge1$ and $t^2/N^{1/4}\le1/64$,
$\beta\le2\sqrt8/64<1/2$.

The two spaces now have the forms
\begin{equation}
 \mathcal H=\mathcal S\oplus\mathcal U,\qquad
 \mathcal H^{x,c}=\mathcal S\oplus(I-E)\mathcal U.
\end{equation}
The second sum is orthogonal because $I-E$ preserves $\mathcal S^\perp$.
For $v=a+u\in\mathcal H$, with $a\in\mathcal S$ and $u\in\mathcal U$, the vector
$a+(I-E)u$ belongs to $\mathcal H^{x,c}$, so
\begin{equation}
 \operatorname{dist}(v,\mathcal H^{x,c})\le\norm{Eu}\le\beta\norm{v}.
\end{equation}
Conversely, if $v'=a+(I-E)u\in\mathcal H^{x,c}$, then
\begin{equation}
 \norm{v'}^2=\norm{a}^2+\norm{u}^2-\norm{Eu}^2
 \ge(1-\beta^2)(\norm{a}^2+\norm{u}^2).
\end{equation}
Comparing with $a+u\in\mathcal H$ gives
\begin{equation}
 \operatorname{dist}(v',\mathcal H)
 \le\norm{Eu}
 \le\frac{\beta}{\sqrt{1-\beta^2}}\norm{v'}
 \le2\beta\norm{v'}.
\end{equation}
For two orthogonal projectors $P,Q$, their difference has norm
$\max\{\norm{(I-Q)P},\norm{(I-P)Q}\}$; this follows by squaring $P-Q$
and splitting into the range and kernel of $P$. Applying this identity to
the two distance bounds yields
\begin{equation}
 \norm{\Pi_{\mathcal H^{x,c}}-\Pi_{\mathcal H}}
 \le2\beta=4\sqrt8\,\frac{t}{N^{1/4}}
 \le64\,\frac{t^2}{N^{1/4}}.
\end{equation}

This bound is uniform over sectors. By \eqref{eq:placement-sectors}, their projector
differences act on mutually orthogonal database subspaces, so
\begin{equation}
 \norm{\Pi_{\le t}|^{x,c}-\Pi_{\le t}}
 =\max_{\sigma:\,\sum_i|C_i|\le t}
   \norm{\Pi_{\mathcal H_\sigma^{x,c}}-\Pi_{\mathcal H_\sigma}}
 \le64\,\frac{t^2}{N^{1/4}}.
\end{equation}
The difference of two orthogonal projectors always has norm at most $1$,
so this is exactly the bound in \eqref{eq:clipping-bound}.
\end{proof}

\end{document}